\DeclarePairedDelimiter\abs{\lvert}{\rvert}%
\newcommand{\RN}[1]{%
  \textup{\uppercase\expandafter{\romannumeral#1}}%
}
\DeclareMathOperator*{\argmax}{arg\,max}  
\newtheorem{theorem}{Theorem}
\newtheorem{lemma}[theorem]{Lemma}
\newtheorem{corollary}{Corollary}[theorem]
      \theoremstyle{plain}
      \newtheorem{assumption}{Assumption}
\definecolor{bgrd}{rgb}{1,1,1}
\definecolor{grey}{rgb}{0.9,0.9,0.6}
\definecolor{gray}{rgb}{0.5,0.5,0.5}
\definecolor{dkr}{rgb}{0.6,0.2,0.2}
\definecolor{dkg}{rgb}{0,0.5,0}
\definecolor{dkb}{rgb}{0.0,0.1,0.7}
\definecolor{light-gray}{gray}{0.85}
\def\tcdkb{\textcolor{dkb}}
\DeclarePairedDelimiter\floor{\lfloor}{\rfloor}
\begin{document}
\title{Distributed Chernoff Test:\\ Optimal Decision Systems over Networks}
%
%
%

\author{Anshuka Rangi, Massimo Franceschetti and Stefano Marano

\thanks{A. Rangi and M. Franceschetti  are with Department of Electrical and Computer Engineering of University of California, San Diego, USA. 
S. Marano is with Department of Information and Electrical Engineering 
and Applied Mathematics of University of Salerno, Fisciano (SA), Italy. E-mails:
        {\tt\small \{arangi,massimo\}@ucsd.edu, }\tt\small      {marano@unisa.it}}
\thanks{This paper was presented in part at ISIT 2018 and CDC 2018 \cite{rangi2018decentralized,rangi2018consensus}.}}

\maketitle
\begin{abstract}
We study ``active''  decision making over sensor networks where the sensors' sequential probing actions are actively chosen by  continuously learning  from past observations. We consider two network settings: with and without central coordination. In the first case, the network nodes interact with each other through a central entity, which plays the role of a fusion center. In the second case, the network nodes interact  in a fully distributed fashion.
In both of  
these scenarios, we propose sequential and adaptive hypothesis tests extending the classic Chernoff  test.
We compare the performance of the proposed tests to the optimal 
sequential test. In 
the presence of a fusion center, our test achieves the same asymptotic optimality of the Chernoff test, minimizing the risk, expressed by the expected cost required to reach a decision plus the expected cost of making a wrong decision, when the observation cost per unit time tends to zero. 
The  test is also asymptotically optimal in the higher moments of the time required to reach a decision. Additionally, the test is parsimonious in terms of communications, and the expected number of channel uses per network node tends to a small constant.
In the distributed setup, 
our test 
achieves the  same asymptotic optimality of 
 Chernoff's test, up to a multiplicative constant in terms of both risk and the higher moments of the decision time. 
Additionally, the test is parsimonious in terms of communications in comparison to  state-of-the-art schemes proposed in the literature. 
The analysis of these tests is also extended to account for  message quantization and communication over channels with random erasures.
\end{abstract}

\begin{IEEEkeywords}
Distributed detection, Active/Adaptive hypothesis testing, Chernoff test, Sequential testing, Inference Systems, On-line Learning, Internet of Things, Sensor Networks. 
\end{IEEEkeywords}
\IEEEpeerreviewmaketitle

\section{Introduction}
%
%
%
%

\IEEEPARstart{W}{ith} the boom in the Internet of Things, sensor-network based solutions  for  inference 
systems have become increasingly popular~\cite{atzori2010internet,li2015internet,mainetti2011evolution}. This is mainly due to the decreasing cost of the sensors, their increasing computational capabilities, the availability of high-speed communication channels, and the redundancy provided by the distributed nature of the network~\cite{varshney2012distributed}. Inference systems have two key functionalities: decision making (\emph{viz.}\ hypothesis testing) and estimation.  We focus on  designing optimal tests for sensor networks in  decision-making   scenarios where the sensors actively choose their probing actions by continuously learning from past observations.
 Applications that fall in this framework include intrusion and target detection, and object classification and recognition~\cite{hu2016mobile,liu2016activetrust,aslam2003tracking,kulkarni2005senseye,ren2007design}. 
 
Previous studies are broadly classified into two 
{categories:} fusion-center based and distributed setting. 
In the first case, all the nodes of the network are connected to a fusion center --- and two operative modalities are considered. 
In the first modality, the network nodes simply deliver their observations to the fusion center, where the inference task is performed. In the second modality, the nodes exploit their computational capability to perform preliminary processing of the observations, and only a limited amount of information is delivered to the fusion center for making the final decision. This reduces the communication overhead,
but may also result in a loss of performance. 
In the distributed setup, network nodes are connected to each other via communication links, typically forming a sparse network, and there is no central processing unit. Thus, to perform an inference task, the network nodes need to perform computations locally, share their processed data with neighboring nodes, and collectively reach a decision. 
A natural question in both  settings is what kind of local processing to perform at the nodes, and  what  fusion scheme to adopt at the fusion center or at the network nodes, in order to reduce the communication burden while keeping a high level of performance. In this work, we  address this question  and propose  statistical tests for both settings.

Hypothesis tests can be broadly classified as sequential or non-sequential tests, as well as adaptive or non-adaptive tests.  
In a sequential test the number of observations needed to reach a decision is not fixed in advance, but depends on the realization of the observed data.
The test proceeds to collect and process data until a decision with a prescribed level of reliability can be made, and an important performance figure --- in addition to the probability of correct decision --- is the average number of observations required to end the test.  In an adaptive test,  the sensors' probing actions are chosen on the basis of the collected data in an on-line, causal manner. Hence, the sensors learn from the past, and adapt their future probing actions in a closed-loop fashion. In this case, the sensors are said to be ``active," in the sense that measurement observations are the consequence of the sensors' chosen probing actions. Our focus here is on sequential and adaptive tests. 


We propose a Decentralized Chernoff Test (DCT) for the fusion center based setup, 
and a Consensus-based Chernoff Test (CCT) for the distributed setup. 
We provide bounds on the  performance of the tests in terms of their risk, defined as 
the expected cost required to reach a decision plus the expected cost of making a wrong decision.
We also provide converse results
 showing the best possible performance of \emph{any} adaptive or non-adaptive sequential test over the network. 
We show that DCT is asymptotically optimal in terms of both the risk and the higher moments of the expected decision time, as the observation cost per unit time tends to zero. Additionally, DCT is parsimonious in terms of communication: when the observation cost per unit time tends to zero, the expected number of messages sent per node tends to a small constant. Finally, 
we  show that  CCT   also retains the  asymptotic optimality of Chernoff's original solution, being order optimal up to a multiplicative constant, in terms of both risk and higher moments of decision time.




To ease the presentation, our initial analysis assumes
ideal communication links carrying real-valued messages without errors. 
In a real network,  messages are quantized into packets of a fixed length, and subject to random erasures at each transmission. In the second part of the paper,  we  extend our results to this scenario.

The rest of the paper is organized as follows: Section \ref{sec:RelatedWork} discusses  related work; Section \ref{sec:ProblemFormulation} formulates the problem;  Section~\ref{sec:classicalChernoff} reviews the standard Chernoff test; Section~\ref{sec:DCT} introduces the Decentralized Chernoff Test (DCT); 
Section~\ref{sec:CCT} introduces the Consensus-based Chernoff Test (CCT); 
Section~\ref{sec:TRofCCT} presents theoretical results on DCT and CCT;  Section~\ref{sec:Simulations} presents simulation results; 
Section~\ref{sec:nonideal} extends the analysis to quantized messages and erasure channels; Section~\ref{sec:Conclusion} concludes the work.   The proofs of all results appear in the Appendices.

\section{Related Work}\label{sec:RelatedWork}
 Sequential tests were first introduced by Wald in 1973~\cite{wald1973sequential}.  One of such tests, the Sequential Probability Ratio Test (SPRT) has been proven optimal for binary  hypothesis testing in~\cite{wald1948optimum},  
 and for  multi-hypothesis testing  in~\cite{dragalin2000multihypothesis,draglia1999multihypothesis}. The performance of sequential tests can be further improved by combining them with adaptive schemes. These schemes operate in closed-loop, adapting the  choice of actions to past observations.
 In the case of sequential and adaptive tests,  Chernoff provided the optimal test for binary composite hypotheses in~\cite{chernoff1959sequential}. Its  asymptotic optimality for multi-hypothesis testing was proven in~\cite{nitinawarat2013controlled}; see also~\cite{franceschetti2016chernoff} and  references therein for an application. Later, the sequentiality and adaptivity gains for different classes of tests were studied, and it was established that sequential adaptive tests outperform other classes of tests~\cite{naghshvar2013sequentiality}, and that the gains can vary from application to application~\cite{naghshvar2012extrinsic,naghshvar2013active,goodman2007adaptive,rangi2018multi}. All of these results were established in the case a single agent performs the  test. 
 
 Different works discuss the extension  to an ensemble of networked sensors independently making observations and coordinating to reach a decision~\cite{blum1997distributed,viswanathan1997distributed}. 
 Different techniques for combining the information from different sensors at a fusion center  are considered in ~\cite{chair1986optimal,varshney2012distributed, jiang2005fusion,lin2005decision}.  In this case, minimization of the risk, which depends on both the decision time and the reliability of the decision, requires joint optimization over both the node level computations and the fusion center operations.  Key challenges of this optimization problem are discussed in \cite{veeravalli1993decentralized}, and   
asymptotically optimal sequential (non-adaptive) tests have been developed in~\cite{mei2008asymptotic,wang2011asymptotic}. 

Previous works have not considered the performance of sequential, adaptive tests in a network setting. The DCT proposed here fills this gap for star networks, namely for networks in which each node is directly connected to a fusion center. On the other hand, the CCT proposed here  considers  networks having a general graph structure and no central entity. In this more general case, different non-sequential tests have been developed relying on gossip protocols  for distributed computation~\cite{boyd2006randomized,olfati2007consensus,braca2008enforcing,degroot1974reaching,kempe2003gossip,braca2008running}. These protocols can be broadly classified into two categories: consensus protocols and running-consensus protocols. In consensus protocols, a distributed computation task is performed after the collection of all the measurements  at the network nodes~\cite{boyd2006randomized,olfati2007consensus,degroot1974reaching,kempe2003gossip}. Necessary and sufficient conditions for  convergence are well studied, see e.g.,~\cite{xiao2004fast}.
In running-consensus protocols, the collection of the measurements from the environment and the computation task are performed simultaneously at the network nodes \cite{braca2008enforcing,braca2008running}. Hypothesis testing schemes  typically rely on  
consensus over ``belief vectors.'' In this case, each network node holds a belief vector, whose elements  represent the probability that a certain hypothesis is true, given all the information collected by the node.
Different strategies are then used to transmit and combine the belief vectors over the network, leading to asymptotic learning of the  correct hypothesis \cite{jadbabaie2012non,parasnis2020non,nedic2015nonasymptotic,shahrampour2016distributed,duchi2012dual,lalitha2014social}. For example,
 a strategy based on distributed dual averaging was proposed in \cite{duchi2012dual}, using an optimization algorithm from \cite{shahrampour2013exponentially}. 
The work in \cite{jadbabaie2012non} proposes usage of linear consensus strategies to combine the belief vectors, and \cite{parasnis2020non} extends the results of \cite{jadbabaie2012non} to the case of random time-varying networks. 
Other works consider Bayesian strategies for updating and combining the belief vectors at the nodes \cite{lalitha2014social}.  In~\cite{jadbabaie2012non,lalitha2014social}
the bounds on the asymptotic learning rate are presented in terms of KL-divergences of the beliefs at the different network nodes. 
Under the assumption that the log-likelihood ratio is bounded, finite-time analysis of the KL-divergence cost has been performed in \cite{shahrampour2016distributed}. Similar results have been obtained for networks modeled as time-varying  graphs \cite{nedic2015nonasymptotic,nedic2016network}. 

Despite this huge literature, only limited attention has been given to distributed \emph{sequential} hypothesis testing over general networks, which requires designing an appropriate stopping rule over the network and evaluating the corresponding expected decision time and performance in terms of risk. 
Recently, a sequential (non-adaptive) hypothesis test which is asymptotically optimal among non-adaptive tests  has been proposed \cite{li2018fully}. In the present work, we propose a sequential as well as  \emph{adaptive} hypothesis test in the distributed network setup.  Unlike the previous literature, including \cite{li2018fully}, the proposed test  does not perform consensus over the belief vector, and is parsimonious in terms of communication.   The stopping criterion proposed in \cite{li2018fully} is not applicable to our test. Our test is also asymptotically optimal among all adaptive or non-adaptive sequential tests, under a broad range of conditions. Finally, we point out that unlike our work, all of the above works do not consider the effect of quantization  and erasures occuring over the communication links. 



\section{Problem Formulation} \label{sec:ProblemFormulation}
\subsubsection*{{Testing model}}
We consider an ensemble $\mathcal{L}=\{1,2,\ldots,L\}$  of  sensor  nodes engaged in a multi-hypothesis testing problem.  The state of nature to be detected is one of $M$ exhaustive and mutually exclusive real-valued   hypotheses $\{h_i\}_{i \in [M]}$, where   $[M] =\{1,2,\dots M\}$. Nodes are connected by bi-directional communication links to form a network. At   each discrete  time step $n$  every node  $\ell  \in \mathcal{L}$ can select a probing action $u_{n,\ell} \in S $, where $S$  is a fixed set of cardinality $M$. As a consequence of this action, the node observes the realization of a real-valued random variable $Y_{n,\ell}$ whose distribution  is $p_{i,\ell}^{u_{n,\ell}}$ and is known to node $\ell$ only. The node  can then send one message over each of its incident links, and receive  one message  from each link. The probing actions  and the messages sent  at time $n$ can be selected based on all past observations, actions taken, and messages sent and received up to time $n-1$.  It follows that  the observations at each node can  be dependent across time. On the other hand,
 given the state of nature, we assume that  the observations at different nodes are conditionally independent, but not necessarily identically distributed.

\subsubsection*{{Network model}}
We consider two network setups.
\begin{enumerate}
    \item \emph{Star network.} In this case, the network is composed of the $L$ sensors and of one special node acting as a fusion center. Each sensor is connected to the fusion center  via a communication link, while there  are no  links  between  the sensors.  This setup   is used to introduce our Decentralized Chernoff Test (DCT).

\item{\emph{General network}.} In this case, the network is represented by a connected graph   $\mathcal{G(L,E)}$, where the edges $\{(\ell,j)\} \in \mathcal{E}$, are such that $\ell, j \in \mathcal{L}$, $\ell \neq j$.
Communication and   information processing tasks  are fully distributed and there is no fusion center. This setup is used to introduce our Consensus-based Chernoff Test (CCT).
\end{enumerate}

\subsubsection*{{Communication model}}
We first assume an ideal communication model, where at each time step every node can send and receive a vector composed of $C$ real-values over each of its incident links. The messages sent are received instantaneously and without error.  This  synchronous model of communication with no queueing delay and real vector channels has been widely used in the literature of detection and estimation, see e.g. \cite{mei2008asymptotic,wang2011asymptotic,boyd2006randomized,olfati2007consensus,braca2008enforcing,degroot1974reaching,kempe2003gossip,braca2008running,jadbabaie2012non,nedic2015nonasymptotic,shahrampour2016distributed,duchi2012dual,lalitha2014social,xiao2004fast,li2018fully}. 

We then refine the communication model by taking into account that in a real packet-switched network, links can only carry  a finite number of bits at each transmission, rather than real numbers. In this case,  if there is a  communication link connecting nodes $\ell$ and $j$, then we assume that at each time step   node $\ell$ can transmit a packet of $C$ bits  to node $j$ and at the same time step
node $j$ can transmit a packet of $C$ bits  to node $\ell$.  This accounts for quantization of the real data in the previous model. 
In information-theoretic terms, every link behaves in each direction as a noiseless channel of finite capacity $C$ bits/transmission.   As in the previous model, every packet transmission occurs synchronously in one time step, and there is no queuing delay. Although less popular than the previous one, this refined model has been considered in the context of quantized consensus in~\cite{nedic2009distributed}, and in the context of estimation and detection in~\cite{mei2008asymptotic,wang2011asymptotic,Tsitsiklis88,Luo-univdetect,varshney2012distributed,MaranoSayedIT19}.
 
Finally, we further extend the communication model  by considering random packet erasures. We assume that at any time step any link in the network can fail independently with probability $\epsilon$. When a link fails,   packets travelling on both directions of the link are received as  ``erasures."
In information-theoretic terms, every link behaves in this case in both directions as a $C$-bit erasure channel without feedback,  having capacity
$(1-\epsilon)C$ bits/transmission. As in the previous case, transmissions are synchronous, and there is no queuing delay.   A related model, where links can fail at random times  but carry real numbers rather than quantized packets has been used to study consensus in  \cite{kar2008sensor} and estimation and detection in~\cite{CattivelliSayedEstimation,CattivelliSayedDetection,Kar-Moura2010,shahrampour2016distributed,7384746,6459049,6422410,5404434,5342458,4663899}.

\subsubsection*{{Performance measure}}
Our objective is to design a scheme to select at each step the nodes' probing actions   and the messages to transmit, to eventually decide the state of nature with   sufficiently high reliability. To quantify the performance of the proposed scheme, we let $N$  be the random time  at which all the nodes have reached the same decision and halt   the test. We consider both the expectation and the higher moments of this stopping time.   Following~\cite{chernoff1959sequential}, we also consider the risk, expressed as the   sum of the expected cost required to reach a decision and the expected cost of making a wrong decision. Namely,
under the true hypothesis $H^*=h_i$, we let the risk $\mathbb{R}_{i}^{\delta}$ of a  test $\delta$ be
\begin{equation}\label{risk}
\mathbb{R}_{i}^{\delta} = c\, \mathbb{E}^{\delta}_{i}[N]+\omega_i \, {\mathbb{P}}_{i}^{\delta}(\hat{H} \neq h_{i}),  
\end{equation}
where $c$ is the observation cost per unit time, $\hat{H}$ is the final decision, 
$\mathbb{E}_i$ and $\mathbb{P}_i$ are the expectation and the probability operators computed under $H^*=h_i$,
and $\omega_i$ is the cost of a wrong decision. As in~\cite{chernoff1959sequential}, we evaluate the risk   for all $i \in [M]$, as $c \rightarrow 0$. 

\vspace*{3pt}
\subsubsection*{{Additional notation}} All logarithms are to base $e$, unless otherwise indicated.
For the  general network case, we denote by $d^{\mathcal{G}}$ the diameter of the network, which is the maximum shortest hop-distance between any pair of nodes of $\mathcal{G(L,E)}$. We  denote by $h^{\mathcal{G}}$ the shortest height of all possible spanning trees of $\mathcal{G(L,E)}$. Since the network is connected, $d^{\mathcal{G}}$ and $h^{\mathcal{G}}$ are both finite. For all $\ell \in [L]$, $u\in S$ and $i,j\in [M]$, the KL-divergence between hypotheses $h_i$ and $h_j$ is denoted by $D(p_{i,\ell}^{u}||p_{j,\ell}^{u})$,
and is assumed to be finite over the entire action set $S$. We also assume that for all $\ell \in [L]$ and $i,j\in[M]$, there exists an action $u\in S$ such that $D(p_{i,\ell}^{u}||p_{j,\ell}^{u})>0$. This assumption entails little loss of generality, rules out trivialities, and is commonly adopted in the literature, see e.g.,~\cite{chernoff1959sequential}. For all $\ell \in [L]$, $u\in S$ and $i,j\in [M]$, we also assume $\mathbb{E}[\log(p_{i,\ell}^{u}(Y))/\log(p_{j,\ell}^{u}(Y))]^{2}<\infty$. If $v_{1}=[v_{1,1},\ldots v_{k,1}]$ and $v_{2}=[v_{1,2},\ldots v_{k,2}]$ are two vectors of same dimension, then with $v_{1} \preceq v_{2}$ we mean that for all $i\in [k]$, $v_{i,1}\leq v_{i,2}$.  Finally, we indicate with $|v_{1}|$   the vector of absolute values of the entries of $v_{1}$. 

\section{Standard Chernoff Test}\label{sec:classicalChernoff} 
\label{sec:classical}
We start by describing the Standard Chernoff Test (SCT) for a single sensor $\ell$   attempting to detect the true hypothesis $H^{*}$, having no interactions with other sensors in the network~\cite{chernoff1959sequential}.  For all $n >1$ we let $y^{n}_{\ell}$ $=$ $\{y_{1,\ell}, \ldots, y_{n-1,\ell}\}$, where $y_{i,\ell}$ denotes the realization of the observation collected at time step~$i$,  and let $u^{n}_{\ell}=\{u_{1,\ell},\ldots, u_{n-1,\ell}\}$, where $u_{i,\ell}$ denotes the realization of the action made at step $i$. For $n=1$ we initialize  the set of previous actions $u^{n}_{\ell} = \emptyset$ and previous observations $y^{n}_{\ell}=\emptyset$, and let all posterior probabilities be the same, namely $\mathbb{P}(H^{*}=h_{i}|y^{n}_{\ell},u^{n}_{\ell})=1/M$.

At every step $n \geq 1$, the test  proceeds as follows:

\begin{itemize}
    \item[1)] A temporary decision is made, based on the maximum posterior probability of the hypotheses, given the past observations and actions  of the sensor. Ties are resolved at random. 
    This temporary decision is in favor of  $h_{i^*_{n}}$ if
    \begin{equation}
    i^*_{n}= \argmax_{i\in [M]}\mathbb{P}(H^{*}=h_{i}|y^{n}_{\ell},u^{n}_{\ell}).
     \label{MAP}
    \end{equation}
   \item[2)] A new action
   $u_{n,\ell}$ is randomly chosen among the elements of the action set $S$, according to the Probability Mass Function (PMF) 
   \begin{equation} \label{eq:pmf}
       Q_{i^*_{n}}^{\ell}=\argmax_{q\in \mathcal{Q}}\min_{j \in M_{i^*_{n}}}\sum_{u{\in[M]}}q(u)D(p^u_{i^*_{n},\ell}||p_{j,\ell}^u),
   \end{equation}
   where ${\cal Q}$ denotes the set of all the possible PMFs over the   $[M]$ actions, and $M_{i^*_{n}}=[M] \setminus \{i^{*}_{n}\}$.
   
   \item[3)] As a consequence of this action, a new observation $y_{n,\ell}$ is  collected, and for all $i \in [M]$   the posterior probabilities  
   $\mathbb{P}(H^{*}=h_{i}|y^{n+1}_{\ell},u^{n+1}_{\ell})$ are updated. 
   \item[4)] The test stops if the worst case log-likelihood ratio crosses a prescribed fixed threshold $\gamma$, namely if
   \begin{equation}\label{StoppingRule}
   \log\frac{\mathbb{P}(H^{*}=h_{i^*_{n},}|y^{n+1}_{\ell},u^{n+1}_{\ell})}{\max_{j \neq i^*_{n}}\mathbb{P}(H^{*}=h_{j}|y^{n+1}_{\ell},u^{n+1}_{\ell})}\geq \gamma, 
   \end{equation}
  If the test stops, then the final decision is $h_{i^*_n}$, otherwise $n$ is incremented by one and the procedure continues from 1).
\end{itemize}

\section{Decentralized Chernoff Test}\label{sec:DCT}
We now extend the SCT to a   DCT in the star network configuration. 
We start by noticing that in the SCT  the quantity 
\begin{equation}
\label{eq:value}
v_{i,\ell}=\max_{q\in \mathcal{Q}}\min_{j \neq i}\sum_{u {\in[M]}}q(u)D(p^u_{i,\ell}||p_{j,\ell}^u)
\end{equation}
 is a measure of the capability of node $\ell$
to detect hypothesis $h_i$ (see~\cite{chernoff1959sequential} for a discussion), and plays a critical role for the selection of the action in \eqref{eq:pmf} that is performed at each step and is adapted to the current belief.
In a network setting, the quantity
\begin{equation}
\label{eq:CC}
I(i) = \sum_{\ell=1}^{L}v_{i,\ell} 
\end{equation}
represents  a measure of the cumulative capability of the network to detect hypothesis $h_{i}$ and can be   used for the selection of the threshold of each node in a coordinated fashion to optimize the expected decision time. Accordingly, in 
DCT, the fusion center collects $v_{i,\ell}$ for all $i \in [M]$ and $\ell \in [L]$, computes  $I(i)$ for all $i \in [M]$, and distributes this result to all the nodes to enable their threshold selection. The nodes then perform SCTs in parallel, until they all reach the same decision and terminate the test. The three phases of the test are as follows:

\subsubsection*{Initialization phase}
\begin{enumerate}
\item  Without performing any probing action,
each node~$\ell$ sends the vector $v_\ell=[v_{1,\ell},\dots,v_{M,\ell}]$ to the fusion center.  
\item The fusion center  sends back to each node  the cumulative capability vector  $I=[I(1),\dots,I(M)]$
and, upon reception, each node $\ell$ computes the    
vector $\rho_{\ell}=[\rho_{1,\ell},\dots,\rho_{M,\ell}]$, representing its fraction of network capability, namely  for all $i \in[M]$,  we have
\begin{equation}\label{eq:rhol}
\rho_{i,\ell}= {v_{i,\ell}}/{I(i)}. 
\end{equation}
\end{enumerate}
\subsubsection*{Test phase} Proceeding in parallel, every node $\ell$ performs a SCT using the threshold
\begin{equation}\label{StoppingRuleDec11}
   \gamma  = \rho_{i^*_{n},\ell}\,\abs{\log c}.
\end{equation}
This threshold depends on both the current estimate of the hypothesis and the node  identity, while it was a constant  in~(\ref{StoppingRule}). 
If the log-likelihood ratio  in \eqref{StoppingRule} exceeds the threshold,  node $\ell$ sends its preference for $h_{i^*_{n}}$ to the fusion center and continues to run the test.
Hence, rather than using it as a stopping condition, the threshold is used here as a triggering condition for the communication of a preference by node $\ell$ to the fusion center.  

\subsubsection*{Stopping phase} When   the   preferences expressed by all the  nodes are the same, the fusion center sends a halting message to all the nodes, who stop the test and declare the final decision. 

The proposed DCT only requires the communication of the local preferences from the nodes during the test phase, as well the the messages in the initialization phase, and the halting message.   
We  show below that,  while maintaining the same  asymptotic optimality of the Chernoff test,  the oscillations in the local preferences of the nodes in the test phase vanish as  $c \to 0$ and, if $C\geq M$, each sensor tends to use the channel on average at most four times: two in the initialization phase, one (on average) to communicate the local preference, and one to receive the halting message. In the case $C<M$, the test retains its asymptotic optimality, although the expected number of channel uses per node increases from four to a constant that is at most $2(M+1)$, since in this case multiple transmissions are needed to  communicate each vector in the initialization phase.   

\subsection{Informal Discussion of DCT}\label{sec:InformalDisDCT}
The key idea behind the proposed DCT is to first determine the individual capabilities of the nodes for detecting the hypotheses. These capabilities are captured by the  vector $v_{\ell}$, whose $i^{th}$ element  is a measure of node's $\ell$ capability to detect the hypothesis $h_{i}$. The fusion center gathers this information, and utilizes it to control the threshold at each node through the vector $\rho_{i,\ell}$.  In this context, $I(i)$ is the measure of the cumulative detection capability of the network for hypothesis $h_{i}$
and $\rho_{i,\ell}$  represents  the fraction of this capability contributed by node~$\ell$ for hypothesis $h_{i}$. To minimize the expected time to reach a decision, it is desirable to determine the threshold for each node~$\ell$ such that all the nodes require roughly the same time to reach the triggering condition in (\ref{StoppingRuleDec11}). This is achieved by dividing the task of hypothesis testing among the nodes based on their speed of performing the task, so that all the nodes finish their share of the task at roughly the same time.  

\section{Consensus-Based Chernoff Test}\label{sec:CCT}
We now describe a  CCT in a general network setup, without a fusion center. The main idea is to generalize the DCT to a fully distributed setting.  CCT employs a consensus protocol to agree on the cumulative capability of the network to detect each hypothesis, performs individual SCTs, and then employs another consensus protocol  to finalize the decision. To ease the presentation of CCT, as in~\cite{jadbabaie2012non,nedic2015nonasymptotic,shahrampour2016distributed,duchi2012dual,lalitha2014social}    we now assume   that $C \geq M$, so that consensus can be performed by exchanging real vector messages of size $M$ at every time step. In the case $C<M$ the test proceeds along the same lines, but performing vector communications of size $M$ now requires multiple time-steps,
and the test completion time must be scaled accordingly.
The three phases of the test are as follows:

\subsubsection*{ Initialization Phase} 
The nodes   use a distributed protocol to compute the vector $I=[I(1),\ldots,I(M)]$. Using consensus,
they compute the arithmetic mean $I/L$, and then compute $I$  using their knowledge of $L$. For all $\ell \in [L]$, we let the initial   estimate for $I/L$ at every node  be  $\hat{I}_{\ell}^{0} =[v_{1,\ell},\ldots, v_{M,\ell}]$, which can be computed locally using (\ref{eq:value}). Then, every node $\ell$ runs the following   consensus protocol by iteratively exchanging messages without performing any probing action: for $n \geq0$
\begin{equation}
\label{eq:Consensus1}
\hat{I}_{\ell}^{n+1}=w_{\ell,\ell}\cdot \hat{I}_{\ell}^{n} +\sum_{j \in  \mathcal{N}_{\ell}}w_{\ell,j}\cdot \hat{I}_{j}^{n}, 
\end{equation}
where $\hat{I}_{\ell}^{n} =[\hat{I}_{\ell}^{n}(1),\ldots, \hat{I}_{\ell}^{n}(M)]$ is an estimate of $I/L$ at node $\ell$ and at time $n$, $w_{\ell,j}$ is the weight assigned by node $\ell$ to the estimate received from  node $j$, and $\mathcal{N}_{\ell}=\{j|\{\ell,j\}\in \mathcal{E}\}$ is the set of   neighbors of node $\ell$ in $\mathcal{G(L,E)}$. 
We now rewrite   (\ref{eq:Consensus1})  in matrix form as
\begin{equation}
\label{eq:Consensus2}
\hat{I}^{n+1}=W\cdot \hat{I}^{n}, 
\end{equation}
where $\hat{I}^{n}$
is an $L\times M$ matrix whose $\ell$th row is $\hat{I}_{\ell}^{n}$  and $W$ is an $L\times L$ matrix whose elements satisfy
\begin{equation} \label{eq:FeasibilitySet}
0<w_{\ell,j}<1 \mbox{ if } j\in \mathcal{N}_{\ell}\cup\{\ell\}, \mbox{ otherwise } w_{\ell,j} = 0.
\end{equation}


The following theorem presents  necessary and sufficient conditions for 
the consensus protocol (\ref{eq:Consensus2}) to converge to  $I/L$ as $n \rightarrow \infty$.
\begin{theorem}\label{th:1}
\emph{\cite[Theorem 1]{xiao2004fast}}.  The consensus protocol (\ref{eq:Consensus2})  converges to  $I/L$ as $n \rightarrow \infty$ if and only if  
\begin{equation}
\label{eq:SC11}
\textbf{{1}}_{L\times 1}^T\cdot W=\textbf{{1}}_{L\times 1}^T,
\end{equation}
\begin{equation}
\label{eq:SC22}
W\cdot \textbf{{1}}_{L\times 1}=\textbf{{1}}_{L\times 1},
\end{equation}
and 
\begin{equation}
\label{eq:SC33}
R\left(W-\frac{\textbf{{1}}_{L\times1}\cdot \textbf{{1}}_{1\times L}}{L}\right)< 1,
\end{equation}
where $R(\cdot)$ denotes the spectral radius of a matrix, and $\textbf{{1}}_{A\times B}$ is a $A\times B$ matrix of all ones. Additionally, the rate of convergence  is  proportional to the spectral radius in the left-hand side of (\ref{eq:SC33}).
\end{theorem}
Based on the above theorem, the computation of the weights in the matrix $W$ can be formulated as a convex optimization problem minimizing the spectral radius in $\eqref{eq:SC33}$,  subject 
to~(\ref{eq:FeasibilitySet}), (\ref{eq:SC11}) and~(\ref{eq:SC22}), and can be determined using standard techniques~\cite{xiao2004fast}.  Hence, in the following  we  assume that, in addition to (\ref{eq:FeasibilitySet}),   $W$ verifies the conditions stated in Theorem \ref{th:1}. 

Although the consensus protocol converges to the correct value $I/L$ as $n \rightarrow \infty$, the initialization phase must terminate in finite time and guarantee that consensus has been reached in a suitable approximate fashion.

To characterize approximate consensus, we define a \emph{local} $c$-consensus status  if   for all $\ell\in[L]$ and $j\in \mathcal{N}_{\ell}$, we have
\begin{equation}\label{eq:localConsensus}
    \abs{\hat{I}^{n}_{\ell}-\hat{I}^{n}_{j}}\preceq \frac{c}{L^2}\cdot\textbf{{1}}_{1\times M}.
\end{equation}
We also define a \emph{global} $c$-consensus status if for all $\ell,j\in [L]$, we have
\begin{equation}
\label{eq:error}
  \abs{\hat{I}^{n}_{\ell}-\hat{I}^{n}_{j}}\preceq \frac{c}{L}\cdot\textbf{{1}}_{1\times M}.
\end{equation}
Since  the diameter $d^{\mathcal{G}} \leq L$, it should be clear that local $c$-consensus implies  \emph{global} $c$-consensus.

We employ a stopping rule for the initialization phase that guarantees  global $c$-consensus  as illustrated in  Algorithm \ref{alg:Phase1}. A similar   rule has been previously studied  in  \cite{xie2017stop}.
{In Algorithm \ref{alg:Phase1}, $r^n_{\ell}$ indicates the number of time steps since  node $\ell$ is in  
\emph{local} $c$-consensus, namely satisfies (\ref{eq:localConsensus}). }
The variable $z^n_{\ell}$  is responsible for the percolation of the   consensus information across the network. If at any node $\ell$ we have $z^n_{\ell}>L+1$, then the network has reached  global $c$-consensus and node $\ell$ sends a termination message  $m^{(1)}=1$  to  its neighbors, where the superscript indicates that this is the termination message of the initialization phase. When a node  receives a termination message, it halts the protocol, it scales the final estimate by $L$,  so that
\begin{equation}
\hat{I}^{n}_{\ell} \leftarrow  L \hat{I}^{n}_{\ell},
\end{equation}  and it forwards the termination message to its neighbors. It follows that all the nodes receive a termination message at most~$d^{\mathcal{G}}$ time steps after the first termination message has been sent, and at the end of the initialization phase for all $\ell,j\in [L]$, we have 
\begin{equation}
\label{eq:error2}
  \abs{\hat{I}^{n}_{\ell}-\hat{I}^{n}_{j}}\preceq c\cdot\textbf{{1}}_{1\times M}.
\end{equation}
In the following phases, we let $\hat{I}_{\ell}$ denote  the estimate of vector $I$ at node $\ell$ at the end of the initialization phase. 
\begin{algorithm}[t]
\begin{algorithmic}
\State Initialize $n=0$, and for all $\ell\in [L]$, $\hat{I}_{\ell}^{n}$, $r^n_{\ell}=0$ and $z^n_{\ell}=0$\;
\While{\textbf{True}}
\State For all $\ell\in [L]$, broadcast local information $\hat{I}_{\ell}^{(n)}$ and  $z_{\ell}^n$.
\State Update the local cumulative capability using  (\ref{eq:Consensus1}).
\If{$n\geq 1$}
\State $z^n_{\ell}=\min\{ r^{n-1}_{\ell},\min_{j\in \mathcal{N}_{\ell}\cup \{\ell\}}z^{n-1}_{j}\}+1$
\EndIf
\If{$z^n_{\ell}>L+1$ or $m^{(1)}=1$ is received}
    \State $\hat{I}^{n}_{\ell} \leftarrow  L \hat{I}^{n}_{\ell}$
    \State  Sensor $\ell$ broadcasts $m^{(1)}=1$ and stops updating.
    \State \textbf{Break While};
 \EndIf
 \If {$\max_{j\in \mathcal{N}_{\ell}}\abs{\hat{I}_{\ell}^{(n)}-\hat{I}_{j}^{(n)}}\preceq c\cdot\textbf{{1}}_{1\times M}/L^2$}
 \State $r^n_{\ell}=r^{n-1}_{\ell}+1$
 \Else
 \State $r^n_{\ell}=0$
 \EndIf
 \State  $n=n+1$\;
 \EndWhile
 \caption{Initialization Phase of CCT}
 \label{alg:Phase1}
 \end{algorithmic}
\end{algorithm}
\begin{algorithm}[b]
\begin{algorithmic}
\State For all $i\in [M]$ and $\ell\in[L]$,  $n=0$; $\hat{H}_{\ell}^{n}=\mbox{NULL}$\;
\State Input: Termination message of  stopping phase, i.e.,  $m^{(3)}$
\While{Final decision is not made i.e. $m^{(3)}\neq 1$}
\State For all $\ell\in [L]$, perform SCT with $ \gamma  = \hat{\rho}_{i^*_{n},\ell}\,\abs{\log c}$
\State If $\hat{H}_{\ell}^{n} \not =$NULL, then broadcast $\hat{H}_{\ell}^{n}$ 
\State $n=n+1$
 \EndWhile
 \caption{Test Phase of CCT}
 \label{alg:Phase2}
 \end{algorithmic}
\end{algorithm}
\begin{algorithm}[t]
\begin{algorithmic}
\State For all $\ell\in[L]$, initialize $n=0$; $d_{\ell,n}=x_{\ell}^{n}=0$, $m^{(3)}=0$;\;
\While{TRUE}
\If{$m^{(3)}=1$ is received from neighbor $j$}
\State Set the final decision, i.e., $\hat{H}_{\ell}^{n}=\hat{H}_{j}^{n-1}$
\State Broadcast $m^{(3)}$ and $\hat{H}_{\ell}^{n}$.
\State Break;
\EndIf
\State For all $\ell\in [L]$, update $x_{\ell}^{n}$ according to (\ref{eq:bit2}).
\State For all $\ell\in [L]$, update $d_{\ell}^{n}$ according to (\ref{eq:bit1}).
\If{$d_{\ell}^{N}> L+1$}
\State  $m^{(3)}=1$
\State For all $\ell\in [L]$, broadcast $m^{(3)}$ and $\hat{H}_{\ell}^{n}$.
\Else 
\State For all $\ell\in [L]$, broadcast $d_{\ell}^{n}$ and $\hat{H}_{\ell}^{n}$.
\EndIf
\State $n=n+1$
 \EndWhile
 \caption{Stopping Phase of CCT}
 \label{alg:Phase3}
 \end{algorithmic}
\end{algorithm}

\subsubsection*{ Test Phase}  This phase is illustrated in Algorithm \ref{alg:Phase2} and begins  following the termination of the initialization phase, namely after receiving $m^{(1)}=1$. Every node $\ell$ performs a SCT using the threshold
\begin{equation}\label{StoppingRuleDec111}
   \gamma  = \hat{\rho}_{i^*_{n},\ell}\,\abs{\log c},
\end{equation}
where $\hat{\rho}^{n}_{i^*_{n},\ell}=v_{i^*_{n},\ell}/\hat{I}_{\ell}(i^*_{n})$.
If the log-likelyhood in \eqref{StoppingRule} exceeds the threshold, 
 then node $\ell$ updates its local preference  $\Hat{H}_{\ell}^{n}$ in favor of hypothesis $h_{i^*_{n}}$; otherwise, it sets its local preference   to NULL. Similar to DCT, the node $\ell$ communicates its preference $\Hat{H}_{\ell}^{n}$, if any,   to its neighbors (instead than to the fusion center) and continues to run the test. Hence, rather than using it as a stopping condition, the threshold is used here as a triggering condition for the communication of the preference by node $\ell$ to its neighbors in $\mathcal{N}_{\ell}$. 
 
 

\subsubsection*{ Stopping Phase} 
This phase is illustrated in Algorithm \ref{alg:Phase3} and runs in parallel with the test phase. This phase detects if all the network nodes have reached the same preference, and in this case it halts the test. At every time step $n \geq 1$, every node  $\ell \in [L]$ 
sends $d^{n}_{\ell}$ to its neighbors, which is defined as
\begin{equation}
\label{eq:bit1}
d^{n}_{\ell}=\min\left \{\min_{j\in \mathcal{N}_{\ell}\cup\{\ell\}}d^{n-1}_{j},x^{n-1}_{\ell}\right \}+1,
\end{equation}
where $d_{\ell}^{0}=0$, and
\begin{equation}
\label{eq:bit2}
x^{n}_{\ell}\hspace{-3pt}=\hspace{-2pt}\begin{cases}
         x^{n-1}_{\ell}+1 \quad\hspace{-.5pt}\mbox{ if }\forall j\in \mathcal{N}_{\ell}, \Hat{H}_{\ell}^{n}=\Hat{H}_{j}^{n},  \Hat{H}_{\ell}^{n}=\Hat{H}_{\ell}^{n-1}, \mbox{ and } \Hat{H}_{\ell}^{n} \not =\mbox{NULL},  \\
         1 \qquad\quad\quad\hspace{7pt}\mbox{if }\forall j\in \mathcal{N}_{\ell}, \Hat{H}_{\ell}^{n}=\Hat{H}_{j}^{n}, \Hat{H}_{\ell}^{n}\neq\Hat{H}_{\ell}^{n-1},  \mbox{ and } \Hat{H}_{\ell}^{n} \not =\mbox{NULL},\\
         0 \qquad\quad\quad\hspace{3pt}\mbox{ otherwise},
       \end{cases}
\end{equation}
{where $x_\ell^0=0$.}

 The rationale of (\ref{eq:bit1}) and (\ref{eq:bit2}) is as follows.  Suppose $x^{n}_{\ell}=k$. Then,   for the past $k$ time steps the
local preference of  the neighbors   of node $\ell$ was the same as the local preference $\hat{H}_{\ell}^{n}$ of node $\ell$. The value of $d^{n}_{\ell}$ is responsible for the percolation of this information along the  network. Using (\ref{eq:bit2}),
if node $j\in \mathcal{N}_{\ell}$ does not report any local preference, then the value $x^{n}_{j}=0$ is received by  the neighbors of $j$. If at any node $\ell$ we have $d^{N}_{\ell}>L+1$, then there exists a time $k\leq N$ at which the local decisions of all the nodes are the same, i.e., $\min_{j\in[L]}x_{j}^{k} \geq 1$ (see Lemma \ref{Lemma:stoppingrule} in Appendix \ref{app:Phase3CCT}). This node $\ell$ sends the final decision  $\hat{H}_{\ell}^{N}$ and the termination message $m^{(3)}=1$ to its neighbors, where $m^{(3)}=1$  represents the termination  message for the stopping phase. When a node receives the termination  message and the final decision $\hat{H}_{\ell}^{N}$, it halts the test and forwards $m^{(3)}$ along with $\hat{H}_{\ell}^{N}$ to its neighbors. It follows that all  nodes  receives the termination message and the final decision at most~$d^{\mathcal{G}}$ time steps after the first termination message of the stopping phase has been sent.

\subsection{Informal Discussion of CCT}\label{sec:InformalDisCCT}
As in DCT, the key idea behind    CCT is to first determine the individual capabilities of the nodes for detecting the hypotheses. These capabilities   are captured by  the  vector $v_{\ell}$,  whose $i^{th}$ element  is a measure of node's $\ell$ capability to detect the hypothesis $h_{i}$. However, in contrast to DCT,
 there is no central entity to facilitate the sharing of this information among different nodes, and  a consensus algorithm is used --- in the first phase of CCT --- to gain global knowledge at each node of the capabilities of all the other nodes.   If the consensus algorithm stops at time  $N$, then      $\hat{\rho}^{N}_{i,\ell}$   
denotes the estimated fraction of the capability contributed by node $\ell$ for hypothesis~$h_{i}$. To minimize the expected time to reach a decision, it is desirable to determine this threshold    for each node~$\ell$ such that all the nodes require roughly the same time to reach the triggering condition in (\ref{StoppingRuleDec11}).
This is achieved by dividing the task of hypothesis testing among the nodes based on their speed of performing the task, so  that all the nodes finish their share of the task roughly at the same time. The decision phase is a distributed stopping criterion for the Chernoff test, and ensures that the nodes stop the test as they reach the same decisions. 

\section{Performance analysis}
\label{sec:TRofCCT}

We now present the performance analysis of our tests. The proofs of all theorems are deferred to the Appendices.

\subsection{Lower Bounds for a Sequential and an Adaptive test}
In this section, we present lower bounds on two different performance measures, namely risk and decision time, for \emph{any} sequential and adaptive test. The superscript $\delta$ is appended to quantities that refer to a generic test and $N$ indicates the  time required to take a decision.
\begin{theorem}\label{converse} \emph{(Converse)}
 For any hypothesis testing scheme $\delta$ operating over a network as described in Section \ref{sec:ProblemFormulation}, we have that for all $i\in [M]$,  
 if the probability of missed detection is
 \begin{equation}
  \mathbb{P}^{\delta}_{i}(\hat{H}\neq h_i)=O(c\,|\log c |), \;\; \mbox{ as } c \to 0,
  \label{eq:assumption}
  \end{equation}
   then we have, {for all integers $r \geq 1$,}
\begin{eqnarray}
\mathbb{E}^{\delta}_{i}[N^r]&\geq& \Bigg((1+o(1))\frac{\abs{\log c}}{I(i)}\Bigg)^{r}, \mbox{ as } c \to 0. \label{Econverse}
\end{eqnarray}
Using (\ref{Econverse}) with $r=1$, we also have
\begin{eqnarray}
\mathbb{R}_{i}^{\delta}&\geq& (1+o(1))\frac{c\,\abs{\log c}}{I(i)},  \;\; \mbox{ as } c \to 0. \label{Rconverse}
\end{eqnarray}
\end{theorem}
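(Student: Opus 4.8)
The plan is to prove the moment bound \eqref{Econverse} by a change-of-measure argument of the type used for sequential controlled-sensing converses, and then read off the risk bound \eqref{Rconverse} as the case $r=1$ together with the trivial inequality $\mathbb{R}^{\delta}_i \ge c\,\mathbb{E}^{\delta}_i[N]$ coming from \eqref{risk}. Fix the true hypothesis $h_i$. For each competing $j\neq i$ define the network log-likelihood ratio accumulated up to step $n$, namely $\Lambda_{ij}(n)=\sum_{\ell=1}^{L}\sum_{t=1}^{n}\log\big(p^{u_{t,\ell}}_{i,\ell}(y_{t,\ell})/p^{u_{t,\ell}}_{j,\ell}(y_{t,\ell})\big)$, which is well defined because the cross-sensor observations are independent given the hypothesis. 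First I would apply the data-processing (transportation) inequality to the binary event $\{\hat H=h_i\}$: for every $j$ one has $d\big(\mathbb{P}^{\delta}_i(\hat H=h_i),\,\mathbb{P}^{\delta}_j(\hat H=h_i)\big)\le \mathbb{E}^{\delta}_i[\Lambda_{ij}(N)]$, where $d(\cdot,\cdot)$ is the binary relative entropy. Since the standing assumption \eqref{eq:assumption} gives $\mathbb{P}^{\delta}_j(\hat H=h_i)\le \mathbb{P}^{\delta}_j(\hat H\neq h_j)=O(c\,\abs{\log c})$ for every $j$ (in particular $\mathbb{P}^{\delta}_i(\hat H=h_i)\to 1$), the left-hand side is $(1+o(1))\abs{\log c}$ as $c\to 0$.

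Next I would convert the right-hand side into a statement about $\mathbb{E}^{\delta}_i[N]$. Writing $\Lambda_{ij}(n)$ as a martingale plus its compensator and applying optional stopping (the adaptive form of Wald's identity, whose remainder is controlled by the assumed second-moment condition $\mathbb{E}[\log(p^{u_{k_1}}_{i,\ell}(Y)/p^{u_{k_1}}_{j,\ell}(Y))]^2<\infty$), one gets $\mathbb{E}^{\delta}_i[\Lambda_{ij}(N)]=\sum_{\ell}\sum_{u}\mathbb{E}^{\delta}_i[N^{\ell}_u]\,D(p^u_{i,\ell}||p^u_{j,\ell})$, where $N^{\ell}_u$ is the number of times sensor $\ell$ uses action $u$ up to $N$. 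Because each sensor plays one action per step, $\sum_u N^{\ell}_u=N$, so the occupation measures $\bar q^{\ell}(u)=\mathbb{E}^{\delta}_i[N^{\ell}_u]/\mathbb{E}^{\delta}_i[N]$ are genuine PMFs in $\mathcal{Q}$. This turns the inequality into $(1+o(1))\abs{\log c}\le \mathbb{E}^{\delta}_i[N]\,\sum_{\ell}\sum_{u}\bar q^{\ell}(u)\,D(p^u_{i,\ell}||p^u_{j,\ell})$ for every $j$. Using that for any fixed PMF $q$ one has $\min_{j\neq i}\sum_u q(u)D(p^u_{i,\ell}||p^u_{j,\ell})\le v_{i,\ell}$ (immediate from the $\max$--$\min$ definition \eqref{eq:value}), together with the decentralized agreement structure that makes the per-sensor exponents add up to the cumulative capability, I would conclude $\mathbb{E}^{\delta}_i[N]\ge (1+o(1))\abs{\log c}/I(i)$, which is \eqref{Econverse} at $r=1$; combined with $\mathbb{R}^{\delta}_i\ge c\,\mathbb{E}^{\delta}_i[N]$ this yields \eqref{Rconverse}.

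For the higher moments $r\ge 2$ I would upgrade the bound in expectation to one in probability. Fix $\epsilon>0$ and set $A_\epsilon=\{N<(1-\epsilon)\abs{\log c}/I(i)\}$. On $A_\epsilon$ the accumulated log-likelihood ratio against the least-favorable alternative falls short of $(1-\epsilon)\abs{\log c}$, and a change of measure from $h_i$ to that alternative forces the wrong-decision probability to exceed the admissible order $O(c\,\abs{\log c})$ unless $\mathbb{P}^{\delta}_i(A_\epsilon)\to 0$; here the second-moment hypothesis again controls the fluctuation (overshoot) term. Hence $N\ge (1-\epsilon)\abs{\log c}/I(i)$ with probability $1-o(1)$, so that $\mathbb{E}^{\delta}_i[N^r]\ge \big((1-\epsilon)\abs{\log c}/I(i)\big)^r\,\mathbb{P}^{\delta}_i(A_\epsilon^c)\ge \big((1-\epsilon)(1+o(1))\abs{\log c}/I(i)\big)^r$, and letting $\epsilon\downarrow 0$ gives \eqref{Econverse}.

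The hard part will be the passage from the per-alternative inequalities to the single network constant $I(i)=\sum_{\ell}v_{i,\ell}$. Since $\min_{j}\sum_{\ell}(\cdot)\ge \sum_{\ell}\min_{j}(\cdot)$, naively minimizing over one common competing hypothesis controls only the rate against a single worst alternative and does \emph{not} by itself reproduce $\sum_{\ell}v_{i,\ell}$; the tightness must come from the decentralized, local-decision/agreement-fusion structure, under which the error exponents contributed by the individual sensors against their respective least-favorable alternatives add, each capped by $v_{i,\ell}$. Making this coupling rigorous—while simultaneously handling the adaptive, history-dependent action laws that define $\bar q^{\ell}$, the single stopping time $N$ shared by all sensors, and the discharge of the overshoot and remainder terms via the assumed finiteness of $\mathbb{E}[\log(p^{u_{k_1}}_{i,\ell}/p^{u_{k_1}}_{j,\ell})]^{r}$—is the crux on which the matching of the converse to Theorem~\ref{DCT} rests.
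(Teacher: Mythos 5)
Your proposal stalls exactly where you say it does, and that unresolved step is the entire content of the theorem, so the proof is incomplete. Concretely: both your Wald/data-processing route for $r=1$ and your change-of-measure route for $r\geq 2$ reduce to showing that, under $h_i$, the quantity $\min_{m\neq i}\sum_{\ell}\sum_{u}\bar q^{\ell}(u)D(p^u_{i,\ell}\|p^u_{m,\ell})$ is at most $I(i)=\sum_\ell v_{i,\ell}$ --- and, as you yourself observe, $\min_m\sum_\ell(\cdot)\geq\sum_\ell\min_m(\cdot)$ goes the wrong way, so the per-sensor caps $v_{i,\ell}$ from (\ref{eq:value}) do not combine for free. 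Your proposed escape route --- that the tightness ``must come from the decentralized, local-decision/agreement-fusion structure'' --- cannot work as stated: the theorem is a converse for \emph{any} sequential test $\delta$ satisfying (\ref{eq:assumption}), with no structural restriction, so you are not entitled to invoke properties of DCT-like protocols. The paper's own proof faces the identical juncture: it decomposes $S^n(h_i,h_m)=M_1^n+M_2^n$ with $M_2^n=\sum_\ell\sum_k D(p^{u_{k,\ell}}_{i,\ell}\|p^{u_{k,\ell}}_{m,\ell})$, controls the zero-mean martingale part $M_1^n$ with the Doob--Kolmogorov inequality, and disposes of the compensator by asserting $\min_{m\neq i}M_2^n\leq \sum_\ell\sum_k v_{i,\ell}=nI(i)$ ``from the definition of $v_{i,\ell}$.'' That one-line inequality is precisely the step your write-up is missing; your worry about the order of $\min_m$ and $\sum_\ell$ applies to it verbatim, but it is the load-bearing bound, and a complete proof has to supply it rather than flag it as the crux.

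On the parts you did carry out, you essentially reproduce the paper's two-part skeleton. Your first step (change of measure between $h_i$ and each $h_m$, forcing $\min_{m\neq i} S^N(h_i,h_m)\geq(1-\epsilon)\abs{\log c}$ with probability $1-o(1)$ given (\ref{eq:assumption})) is the paper's first part, which cites Chernoff's Lemma~4 applied to the composite pair $\{h_i\}$ versus $\{h_k\}_{k\neq i}$. Your higher-moment argument via $\mathbb{P}_i\big(N<(1-\epsilon)\abs{\log c}/I(i)\big)\to 0$ is also exactly the paper's: it shows $\mathbb{P}(N\leq n_0)\to 0$ for $n_0=(1-\epsilon)\abs{\log c}/I(i)$ and reads off all moments, including $r=1$, at once --- which makes your separate optional-stopping/occupation-measure derivation for $r=1$ redundant. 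That formulation is the only genuinely distinct machinery you introduce, and it buys nothing here because it terminates at the same unproven interchange of $\min_{m}$ with the sum over sensors.
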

The lower bounds provided by Theorem \ref{converse} hold for any scheme operating in our problem formulation setting, in both a star network or general network configuration. In the case the network is composed of a single node and $r=1$, they reduce to Chernoff's original results~\cite{{chernoff1959sequential}}.


\subsection{Upper bounds for proposed DCT and CCT schemes}
We now provide upper bounds on the performance of our schemes, starting with DCT.
In the following theorems,  the superscript ${\cal D}$ refers to the DCT. Part $(i)$ of Theorem \ref{DCT}  states that the probability of making a wrong decision can be made as small as desired by an appropriate choice of the observation cost $c$.
Part $(ii)$ provides a bound on the expected time to reach the final decision, and part $(iii)$ bounds  the risk as an immediate consequence of parts $(i)$ and $(ii)$.   Finally, part $(iv)$ presents the bound on the higher moments of the decision time   of DCT.  

\begin{theorem}\label{DCT} \emph{(Direct).}
The following statements hold: \\
$(i)$ For all $c \in (0,1)$ and all $ i\in [M]$, 
the probability that DCT makes an incorrect decision is 
\begin{equation}\label{eq:errorDCT}
    {\mathbb{P}}^{{\cal D}}_i(\hat{H}\neq h_i)\leq \min\{(M-1)c,1\}.
\end{equation}
$(ii)$ For all $\ell \in [L]$, $i,j\in[M]$ and $u\in S$, if $\mathbb{E}\big[\log {p_{i,\ell}^{u}(Y)}/{p_{j,\ell}^{u}(Y)}\big]^2 < \infty$, then
we have
\begin{equation} \label{EDCT1}
\mathbb{E}_{i}^{{\cal D}}[N]\leq (1+o(1))\frac{\abs{\log c}}{I(i)}, \;\; \mbox{ as }  c\to 0.
\end{equation}
$(iii)$ Combining $(i)$ and $(ii)$, we have
\begin{equation} \label{RDCT}
\mathbb{R}_{i}^{{\cal D}}\leq (1+o(1))\frac{c\,\abs{\log c}}{I(i)}, \;\; \mbox{ as }  c \to 0.
\end{equation}
$(iv)$ For all $\ell \in [L]$, $i,j\in[M]$, $u\in S$ and {all integers} $r\geq 2$, if $\mathbb{E}\big[\abs{\log {p_{i,\ell}^{u}(Y)}/{p_{j,\ell}^{u}(Y)}}^{r+1}\big] < \infty$, then we have
\begin{align}\label{eq:momentsDCT}\small
    &\mathbb{E}_{i}^{{\cal D}}[N^{r}]
      \leq \Bigg( (1+o(1))\frac{c\,\abs{\log c}}{I(i)}\Bigg)^{r}, \mbox{ as }  c \to 0.
\end{align}
\end{theorem}
In the above theorem, the bound on the expected decision time in $(ii)$ requires the second moment of the log-likelihood ratio to be finite. Likewise, for all $r\geq 2$, the bound on the $r^{th}$ moment of the decision time requires  the $r+1^{st}$ moment  of the log-likelihood ratio to be finite.

The next result is a   consequence of 
Theorems \ref{converse} and \ref{DCT}. It shows the asymptotic optimality of   DCT, and presents the expected communication overhead, as $c\to 0$.
\begin{theorem}\label{mainDCT} For any hypothesis testing scheme~$\delta$ operating over a network as described in Section~\ref{sec:ProblemFormulation},   we have\\
$(i)$ For all $\ell \in [L]$, $i,j\in[M]$ and $u\in S$, if $\mathbb{E}\big[\log {p_{i,\ell}^{u}(Y)}/{p_{j,\ell}^{u}(Y)}\big]^2 < \infty$, then
we have
\begin{equation} \label{EDCTM}
\lim_{c\to 0}\frac{\mathbb{E}_{i}^{{\cal D}}[N]}
{ \mathbb{E}_{i}^{{\delta}}[N]} \leq 1, \mbox{ and} 
\end{equation}

\begin{eqnarray}
\lim_{c\to 0} \frac{\mathbb{R}_{i}^{{\cal D}}}{  \mathbb{R}_{i}^{{\delta}}} & \leq & 1 \label{RfinalDCT}. \;\; 
\end{eqnarray}
$(ii)$ For all $\ell \in [L]$, $i,j\in[M]$, $u\in S$ and {all integers} $r\geq 2$, if $\mathbb{E}\big[\abs{\log {p_{i,\ell}^{u}(Y)}/{p_{j,\ell}^{u}(Y)}}^{r+1}\big] < \infty$, then we have
\begin{align} \label{EfinalDCT} \small
    &\lim_{c\to 0} \frac{\mathbb{E}_{i}^{{\cal D}}[N^{r}]}{  \mathbb{E}_{i}^{{\delta}}[N^{r}]}
      \leq 1.
\end{align}
$(iii)$ Assuming $C \geq M$,    and letting the communication overhead
$C_O$ be  the number of channel usages by each node,  we have
\begin{equation} \label{eq:average}
\lim_{c \rightarrow 0} \mathbb{E}^{{\cal D}}_{i}[C_O] = 4.
\end{equation}
\end{theorem}

Combining Theorem~\ref{DCT} and Theorem \ref{mainDCT}, it follows that DCT is asymptotically optimal in terms of stopping time and risk, as the observation cost tends to zero. 
This asymptotic optimality,  expressed by \eqref{EDCTM}, \eqref{RfinalDCT}, and \eqref{EfinalDCT},  holds for all values of $C$, although
in the case $C<M$  the expected number of channel uses per node in \eqref{eq:average} increases from four to a constant that is at most $2(M+1)$, due to multiple transmissions required to communicate each vector in the initialization phase.
We also point out  that the performance of DCT  depends only on the cumulative capability $I(i)$ of the network to detect hypothesis $h_{i}$, and is independent of how the capabilities $v_{i,\ell}$ are distributed over the network. If two networks have the same cumulative capabilities, then the minimum expected decision time will be the same for both of them. These results hold irrespective of the number of nodes in the network.

We now provide upper bounds on the performance of CCT. We make use of the following well known lemma:

\begin{lemma}\emph{\cite[Proposition 1]{dong2017flocking}.}\label{lemma:ErgodicCoeff} 
For any a connected graph $\mathcal{G(L,E)}$ with weights assigned to the edges satisfying \eqref{eq:FeasibilitySet}, we have that
\begin{equation*}
0<\eta(W^{h^{\mathcal{G}}})<1,
\end{equation*}
where
\[\eta(W)=\min_{i\neq j}\sum_{k=1}^{L}\min\{w_{i,k},w_{j,k}\}\]
is the ergodic coefficient   of  the weight matrix $W$. 
\end{lemma}

In the following theorems, the superscript ${\cal C}$ refers to the CCT. 
 Part $(i)$ of Theorem \ref{CCT} states that the probability of making a wrong decision can be made as small as desired by an appropriate choice of $c$.
Part $(ii)$ provides a bound on the expected time to reach the final decision, and part $(iii)$ bounds  the risk as an immediate consequence of parts $(i)$ and $(ii)$. Finally, part $(iv)$ presents the bound on the higher moments of the decision time of CCT. 

\begin{theorem}\label{CCT} \emph{(Direct).}
Assuming $C\geq M$, the following statements hold: \\
$(i)$ For all $c \in (0,1)$ and all $ i\in [M]$, 
the probability that CCT makes an incorrect decision is
\[{\mathbb{P}}^{{\cal C}}_i(\hat{H}\neq h_i)\leq \min\left \{(M-1)c^{\frac{1}{1+c/I(i)}},1\right \}.\] 
$(ii)$ For all $\ell \in [L]$, $i,j\in[M]$ and $u\in S$, if $\mathbb{E}\big[\log {p_{i,\ell}^{u}(Y)}/{p_{j,\ell}^{u}(Y)}\big]^2 < \infty$, then as $c \rightarrow 0$ we have

\begin{equation} \label{EDCT}
\begin{split}
    &\mathbb{E}_{i}^{{\cal C}}[N]
    \leq (1+o(1)) \Bigg(\frac{h^{\mathcal{G}}\cdot |\log(c/\max_{j\in [L]} I(j))|}{|\log \big(1-\eta(W^{h^{\mathcal{G}}})\big)|}+\frac{\abs{\log c}}{I(i)-c}\Bigg
    ). 
\end{split}
\end{equation}
$(iii)$ Combining $(i)$ and $(ii)$,  as $c \rightarrow 0$ we have 
\begin{equation} 
\begin{split}
  \mathbb{R}_{i}^{{\cal C}}&\leq\hspace{-2pt} (1+o(1))\Bigg(\hspace{-3pt}\frac{h^{\mathcal{G}}\cdot \abs{\log(1/\max_{j\in [L]} I(j))}}{\abs{\log (1-\eta(W^{h^{\mathcal{G}}}))}}+\frac{1}{I(i)-c}\hspace{-3pt}\Bigg)\cdot c\abs{\log c}.  \qquad
\end{split}
\end{equation} \\
$(iv)$ For all $\ell \in [L]$, $i,j\in[M]$, $u\in S$ and all integers $r\geq 2$, if $\mathbb{E}\big[\abs{\log {p_{i,\ell}^{u}(Y)}/{p_{j,\ell}^{u}(Y)}}^{r+1}\big] < \infty$, then as $c \rightarrow 0$ we have  \begin{align}\label{eq:moments}\small
    &\mathbb{E}_{i}^{{\cal C}}[N^{r}]\nonumber\\
      &\leq\hspace{-3pt} \Bigg(\hspace{-3pt} (1+o(1)) \Bigg(\frac{h^{\mathcal{G}}\cdot |\log(c/\max_{j\in [L]} I(j))|}{|\log (1-\eta(W^{h^{\mathcal{G}}}))|}+\frac{\abs{\log c}}{I(i)-c}\Bigg)\hspace{-3pt}\Bigg)^{r}\hspace{-3pt}.
\end{align}

\end{theorem}
The following result is a consequence of Theorems \ref{converse} and \ref{CCT}, and shows that CCT is asymptotically optimal, up to a constant factor, as the observation cost tends to zero. 

{\begin{theorem}\label{main2}
For any hypothesis testing scheme~$\delta$ operating over a network as described in Section~\ref{sec:ProblemFormulation} and assuming $C\geq M$, we have: \\
$(i)$ For all $\ell \in [L]$, $i,j\in[M]$ and $u\in S$, if $\mathbb{E}\big[\log {p_{i,\ell}^{u}(Y)}/{p_{j,\ell}^{u}(Y)}\big]^2 < \infty$, then 
\begin{equation} \label{ECCT}
\begin{split}
    &\lim_{c\to 0}\frac{\mathbb{E}_{i}^{{\cal C}}[N]}{ \mathbb{E}_{i}^{{\delta}}[N]}\leq \Bigg(\frac{I(i)h^{\mathcal{G}}\cdot \abs{\log(1/\max_{j\in [L]} I(j))}}{\abs{\log (1-\eta(W^{h^{\mathcal{G}}}))}}+1\hspace{-3pt}\Bigg). \;\;
\end{split}
\end{equation}
Additionally, 
\begin{eqnarray}
\lim_{c\to 0}\frac{\mathbb{R}_{i}^{{\cal C}}}{ \mathbb{R}_{i}^{{\delta}}}&\leq& \Bigg(\frac{I(i)h^{\mathcal{G}}\cdot \abs{\log(1/\max_{j\in [L]} I(j))}}{\abs{\log (1-\eta(W^{h^{\mathcal{G}}}))}}+1\hspace{-3pt}\Bigg) \label{RfinalCCT} .
\end{eqnarray}
\\
$(ii)$ For all $\ell \in [L]$, $i,j\in[M]$, $u\in S$ and all integers $r\geq 2$, if $\mathbb{E}\big[\abs{\log {p_{i,\ell}^{u}(Y)}/{p_{j,\ell}^{u}(Y)}}^{r+1}\big] < \infty$, then we have
\begin{eqnarray}
\lim_{c\to 0}\frac{\mathbb{E}^{{\cal C}}_{i}[N^r]
}{ \mathbb{E}^{{\delta}}_{i}[N^r]}&\leq & \Bigg(\frac{I(i)h^{\mathcal{G}}\cdot \abs{\log(1/\max_{j\in [L]} I(j))}}{\abs{\log (1-\eta(W^{h^{\mathcal{G}}}))}}+1\hspace{-3pt}\Bigg)^{r} \label{EfinalCCT} . 
\end{eqnarray}
\end{theorem}}

While Theorems~\ref{CCT} and \ref{main2} provide bounds for the case $C\geq M$, it should be clear from their proof that when  $C<M$  CCT is still asymptotically optimal up to a constant factor, as $c \rightarrow 0$. In this case, the right-hand sides of \eqref{ECCT} and \eqref{RfinalCCT}   are simply scaled by an additional factor that is upper bounded by $M$, due to the multiple transmissions required to complete each vector transmission.  Similarly, the right-hand side of \eqref{EfinalCCT} is scaled by a factor upper bounded by $M^r$.

The decision time of CCT (see (\ref{EDCT}) and (\ref{eq:moments})) depends on two terms: $A_{1}$ and $A_{2}$, where 
\[A_{1}=\frac{h^{\mathcal{G}}\cdot |\log(c/\max_{j\in [L]} I(j))|}{|\log \big(1-\eta(W^{h^{\mathcal{G}}})\big)|},\]
\[A_{2}=\frac{\abs{\log c}}{I(i)-c}.\]
Here, $A_{1}$ corresponds to the expected time of the initialization phase. Since this phase performs consensus over the network, this  time depends on the network parameters $h^{\mathcal{G}}$ and matrix $W$. Similarly, $A_{2}$ corresponds to the expected time of the test phase, where the Chernoff test is performed independently at all the nodes. This  time is independent of the network parameters.  Finally, since the decision phase of CCT begins only after the termination of  the initialization phase and is dependent on the test phase,  the expected decision time of CCT   depends on $A_{1}+A_{2}$. Thus, in Theorem \ref{main2}, the ratio of the performance parameters of CCT and of the optimal test converges to the constant $1+{I(i)h^{\mathcal{G}}\cdot \abs{\log(1/\max_{j\in [L]} I(j))}}/{\abs{\log (1-\eta(W^{h^{\mathcal{G}}}))}}$. It follows that 
the gap between the  CCT and the optimal test is given by   ${I(i)h^{\mathcal{G}}\cdot \abs{\log(1/\max_{j\in [L]} I(j))}}/{\abs{\log (1-\eta(W^{h^{\mathcal{G}}}))}}$, and as the expected time of initialization phase decreases this gap decreases. 


As a final remark, we point out that the star network configuration is a  special case of the distributed setup. In this case, the cumulative capability vector $I$ can be estimated {(with no error)} in two time steps at all the nodes, i.e., for $n=2$ and $\ell,j \in [L]$, the equivalent of  (\ref{eq:error}) is 
\begin{equation}
  \abs{\hat{I}^{n}_{\ell}-\hat{I}^{n}_{j}}\preceq 0\cdot\textbf{{1}}_{1\times M},
\end{equation}
and is independent of~$c$. In the regime of vanishing cost $c\to 0$, we have that $A_{1}+A_{2}=2+A_{2}$, which implies the asymptotic optimality of DCT. 

\section{Numerical Results}\label{sec:Simulations}
In this section, we evaluate the performance of both DCT and CCT by simulations, and compare the results to the theoretical bounds presented in the previous section. The performance of these tests is evaluated for different sizes of networks. In our experiments, the number of hypotheses is $M=3$. The probability distribution $p_{i,\ell}^{u}$, \ {$i=1,\dots,M$,} is 
Bernoulli with parameter~$p$, which is selected uniformly at random from $(0,1/3)$,$(1/3,2/3)$ and $(2/3,1)$  for $i=1,2$ and $3$ respectively.  
\begin{figure*}
\centering
\begin{minipage}{0.48\textwidth}
  \centering
  \includegraphics[width=\linewidth]{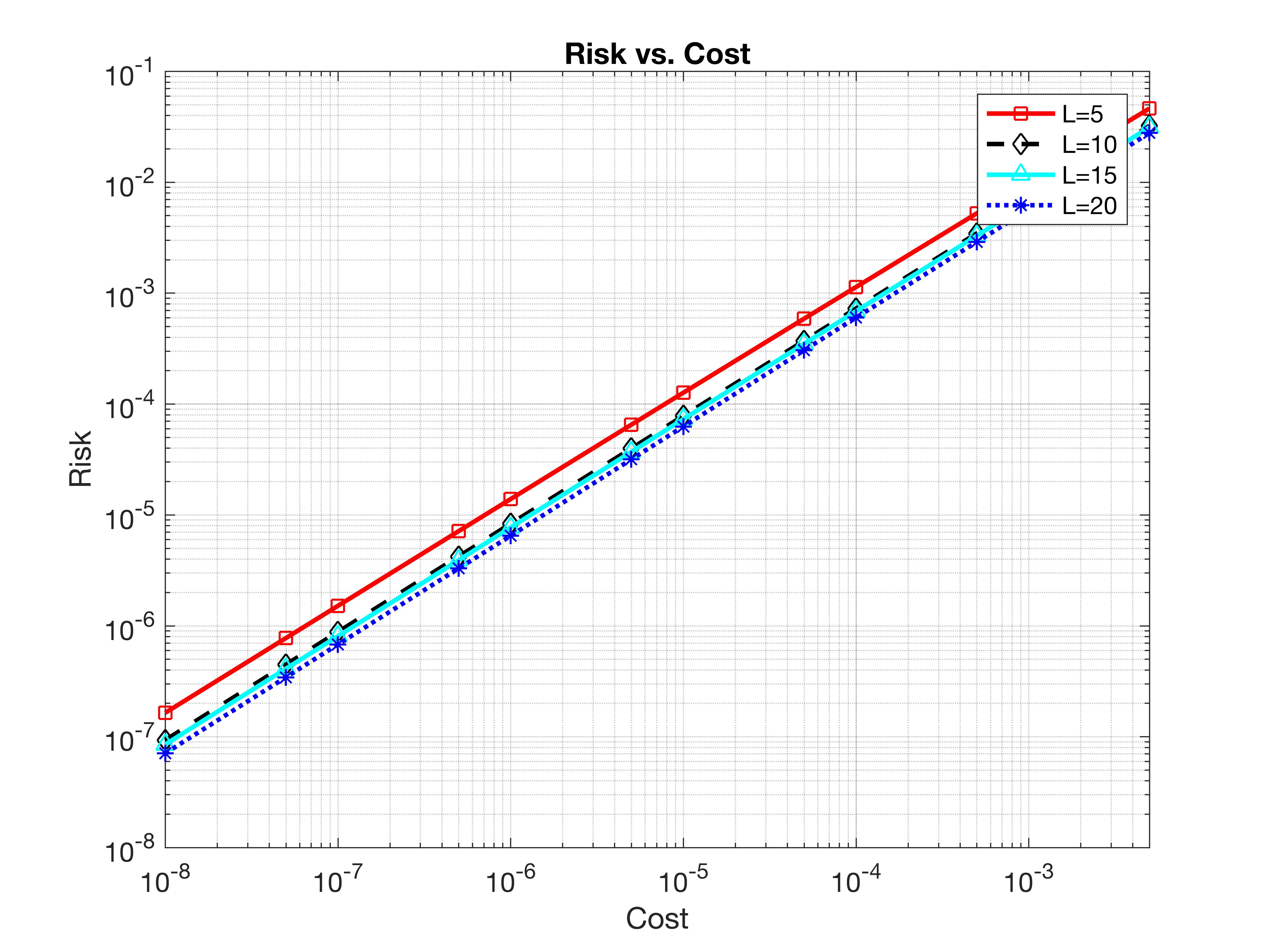}
  \caption{Performance of DCT: risk vs. cost $c$ for different number of sensors~$L$}
  \label{fig:DCT1}
\end{minipage} \hspace{5pt}
\begin{minipage}{.48\textwidth}
  \centering
  \includegraphics[width=\linewidth]{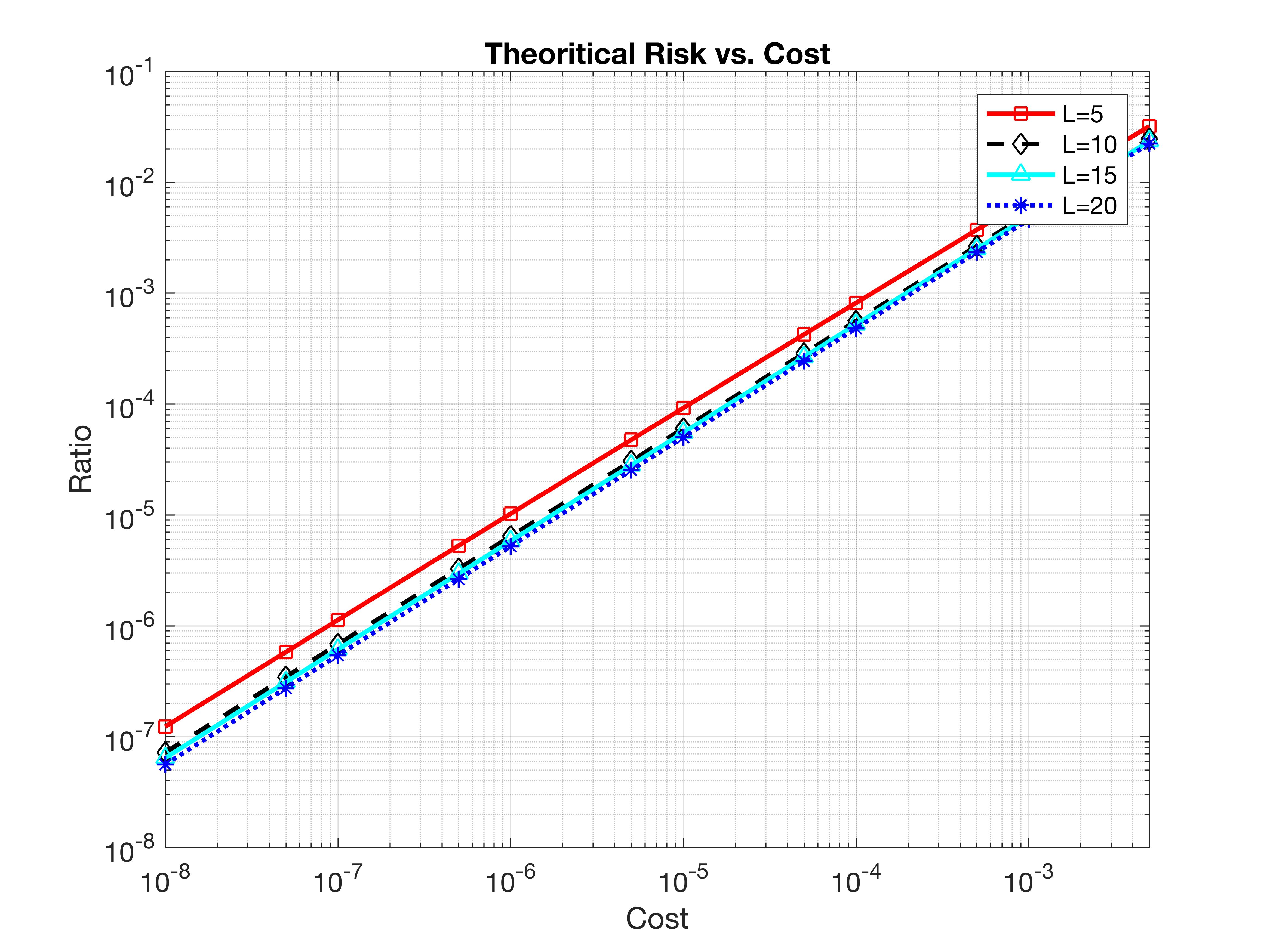}
  \caption{Performance of DCT according to Theorem \ref{DCT}: risk vs. cost $c$ for different number of sensors~$L$}
  \label{fig:DCT2}
\end{minipage}
\end{figure*}

\begin{figure}
\includegraphics[width=.5\linewidth]{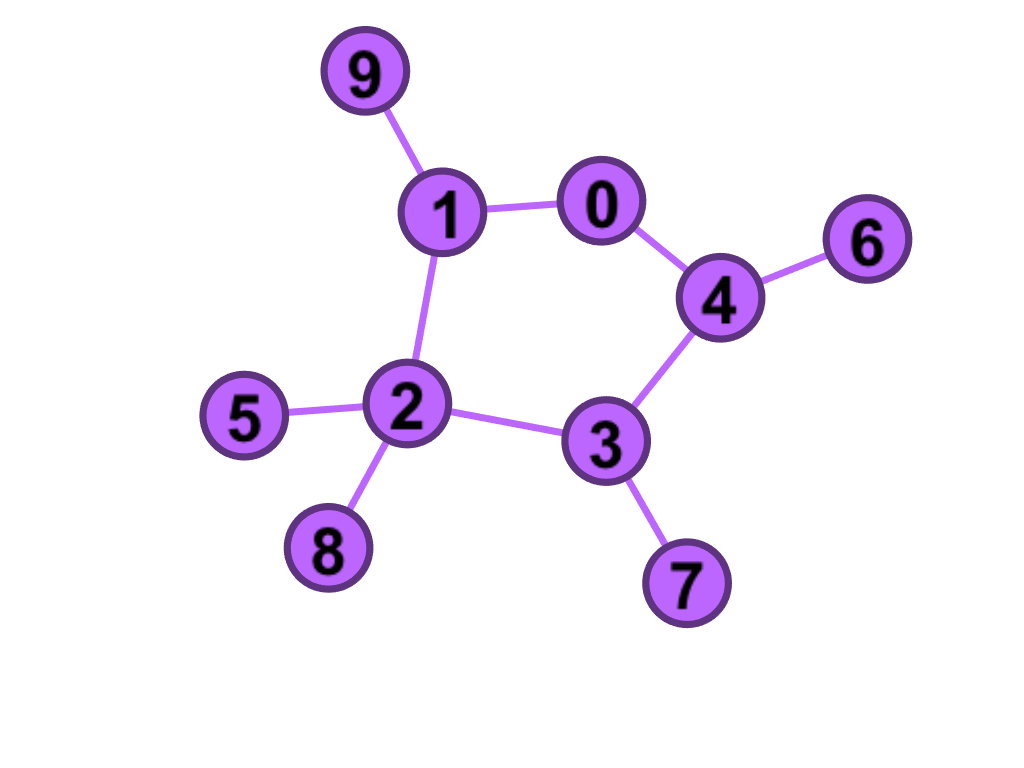}
\caption{An example of sensor network with $L=10$ nodes.}
\label{fig:performance2}
\end{figure}

\begin{figure*}
\centering
\begin{minipage}{0.48\textwidth}
  \centering
  \includegraphics[width=\linewidth]{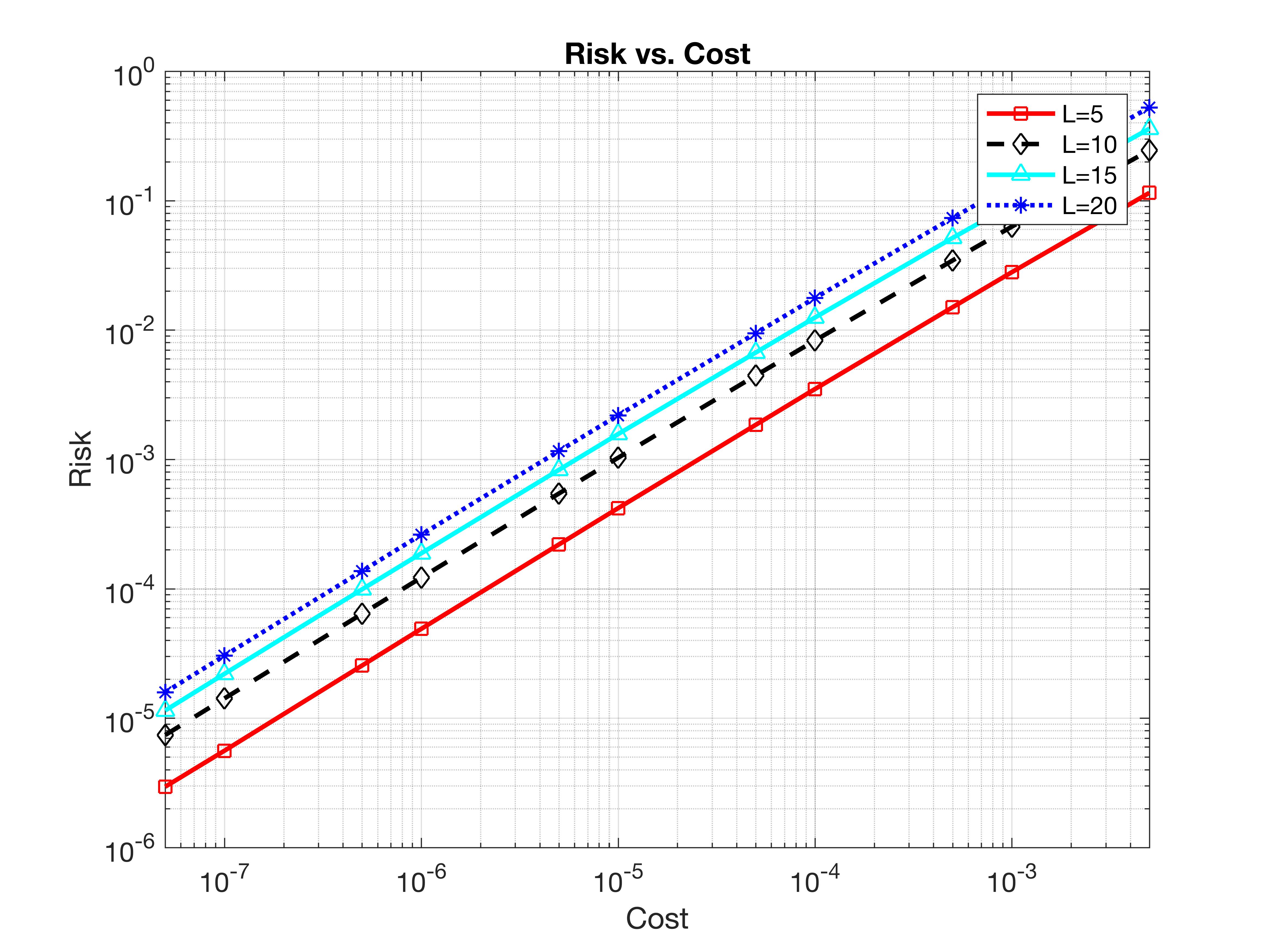}
  \caption{Performance of CCT for the ring with random attachments topology: risk vs. cost $c$ for different number of sensors~$L$}
  \label{fig:CCT1}
\end{minipage} \hspace{5pt}
\begin{minipage}{.48\textwidth}
  \centering
  \includegraphics[width=\linewidth]{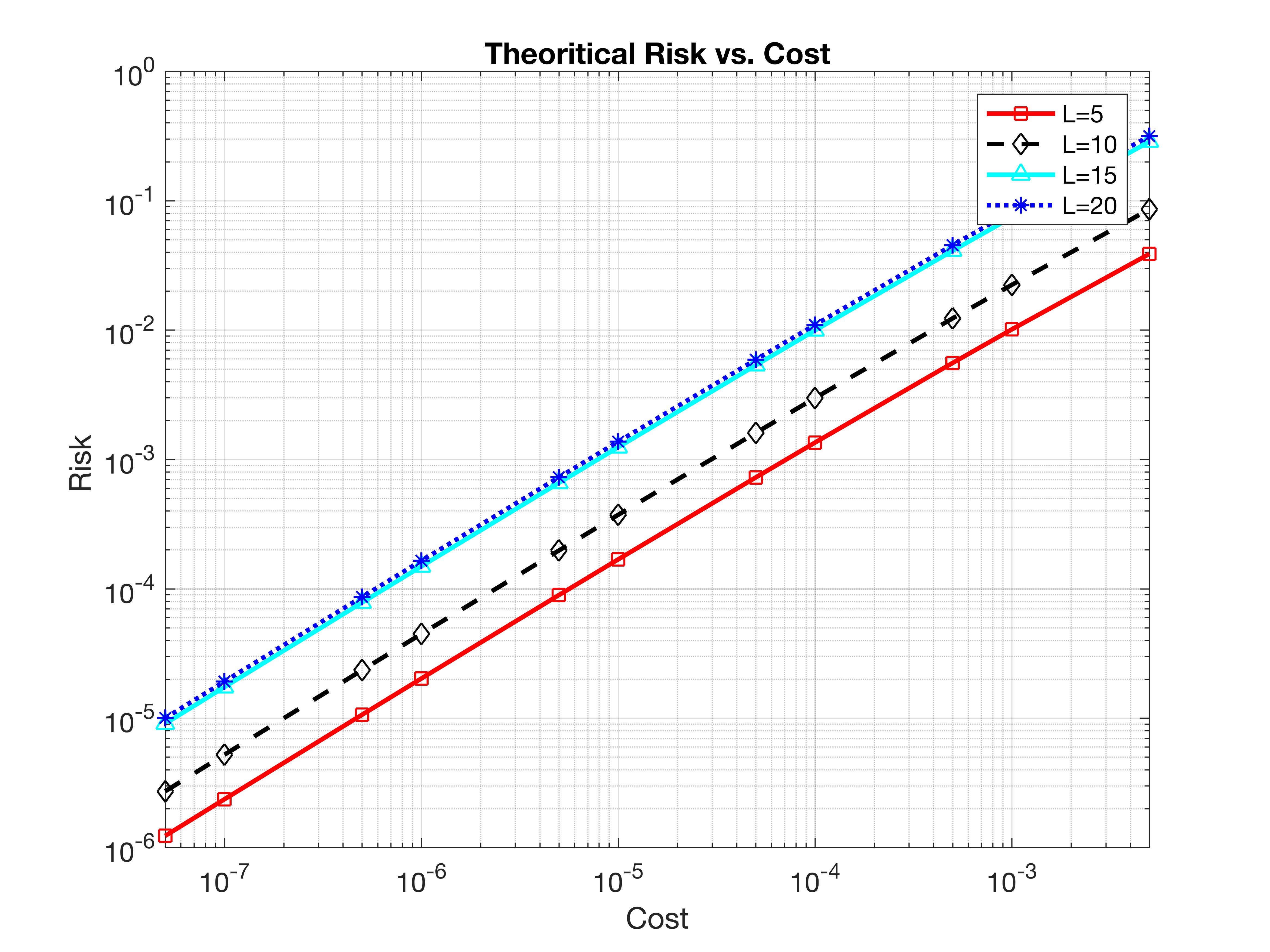}
  \caption{Performance of CCT according to Theorem \ref{CCT} for the ring with random attachments topology: risk vs. cost $c$ for different number of sensors~$L$}
  \label{fig:CCT2}
\end{minipage}
\end{figure*}
\begin{figure*}
\centering
\begin{minipage}{0.48\textwidth}
  \centering
  \includegraphics[width=\linewidth]{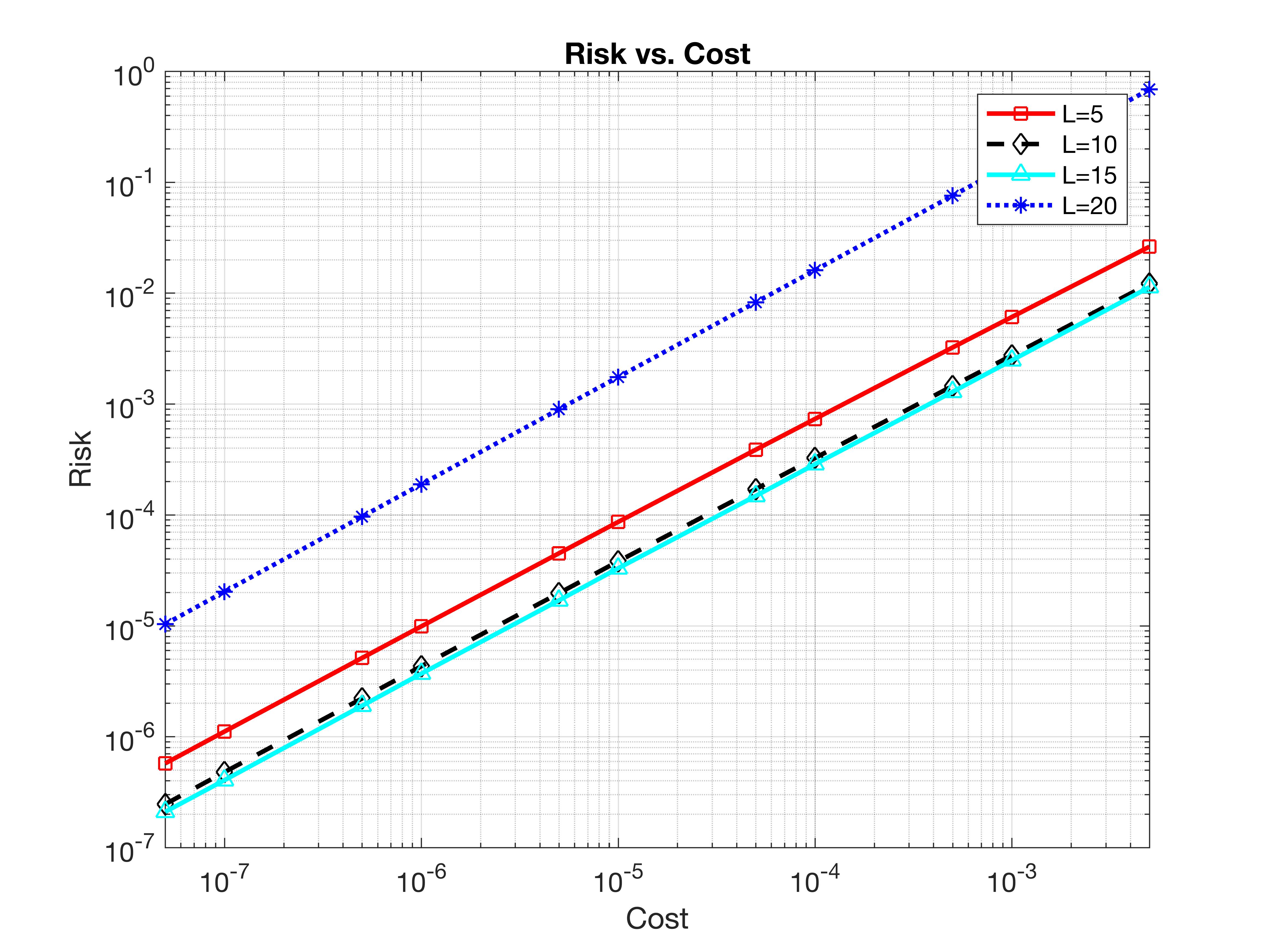}
  \caption{Performance of CCT for the tree topology: risk vs. cost $c$ for different number of sensors $L$}
  \label{fig:CCT3}
\end{minipage} \hspace{5pt}
\begin{minipage}{.48\textwidth}
  \centering
  \includegraphics[width=\linewidth]{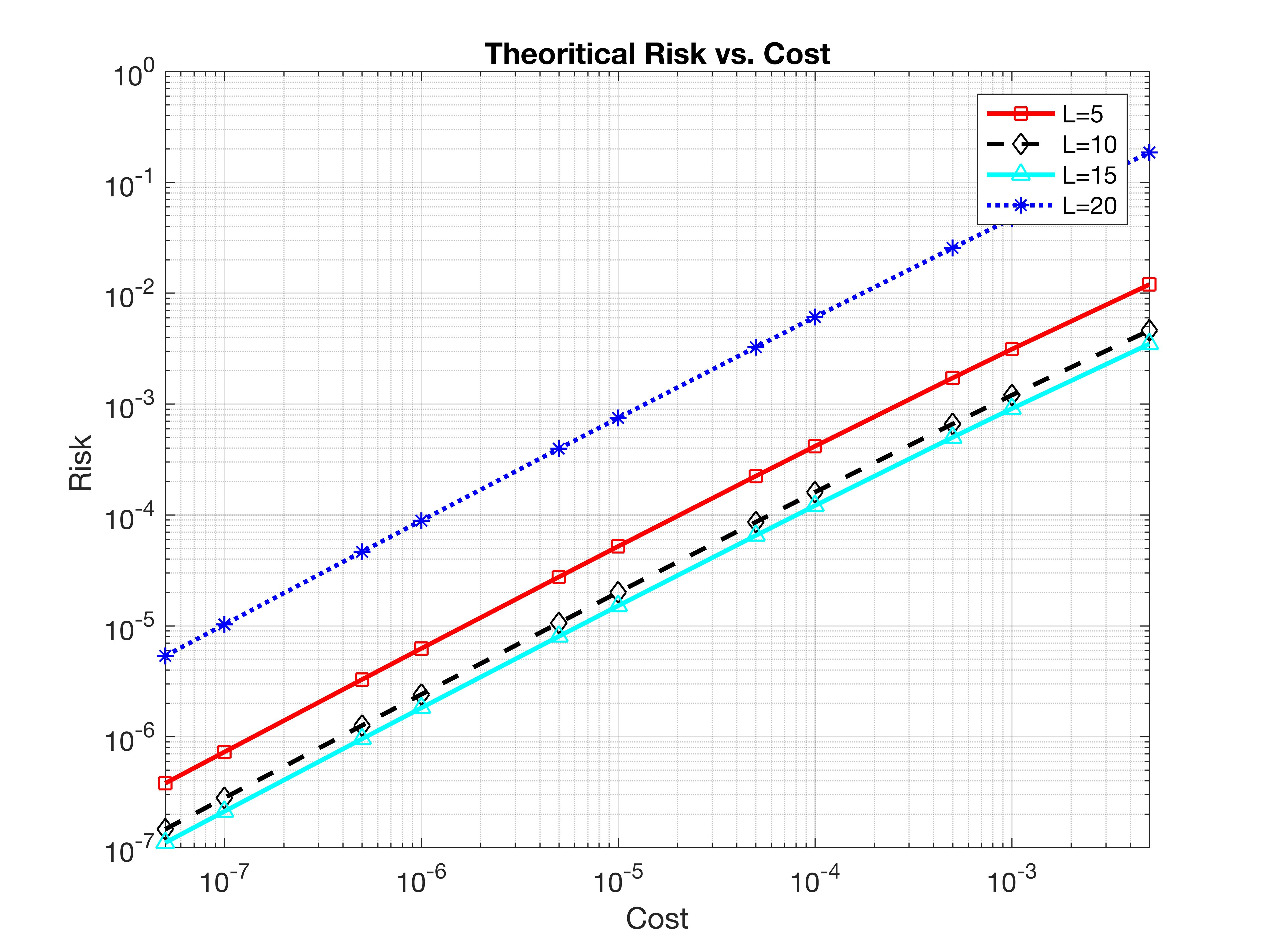}
  \caption{Performance of CCT according to Theorem \ref{CCT} for the tree topology: risk vs. cost $c$ for different number of sensors $L$}
  \label{fig:CCT4}
\end{minipage}
\end{figure*}

Figure~\ref{fig:DCT1} shows the risk of DCT in a fusion center based setup, as obtained by simulations. Figure~\ref{fig:DCT2} shows the corresponding value of the risk, as predicted by Theorem \ref{DCT}. The risk decreases as the observation cost~$c$ decreases. This is because the threshold in the triggering condition~(\ref{StoppingRuleDec11}) increases, which ensures that the nodes have a greater confidence about their local decision. On the other hand, the risk decreases by increasing the number of sensors $L$, because the cumulative capability of the network to detect the hypothesis, defined in (\ref{eq:CC}),  increases with $L$, and the task of hypothesis testing is divided among a larger number of sensors. Hence, the final decision can be reached more quickly, and this decreases the risk. The trends are in agreement with the theoretical results obtained for DCT. 

Our simulations also confirm the prediction that, on the average, only four channel usages 
are required, per single sensor, see (\ref{eq:average}) in Theorem \ref{mainDCT}. The results of these simulations are not reported here for the sake of brevity. We only mention that, on rare occasions, for individual realizations it may happen that the number of channel usages  is substantially larger than four ---a manifestation of the long-run phenomenon \cite[p.~110]{poorbook}. In practice, this can be remedied by resorting to a truncated version of the sequential test, for which the maximum number of probing actions is fixed, see \cite{tantaratana-poor-82,tantaratana-93} and references therein for a discussion, and see \cite{marano-03} for a simple implementation of truncation. A precise analysis of DCT using truncated tests is out of the scope of the present paper.

The performance of CCT is evaluated for two network topologies. In the first topology, given the number of network nodes $L$, $\lceil L/2 \rceil$ sensors are connected to form a ring, and the remaining sensors are randomly connected to the sensors in the ring. An example of network with $L=10$ is shown in Figure~\ref{fig:performance2}. In this case, the spanning height of the tree is linear in $L$. In the second topology, 
given the number of network nodes $L$, the nodes are connected to form a binary tree. In this case, the spanning height of the tree is  $\mathcal{O}(\log_2 L)$.

Figure \ref{fig:CCT1} shows the performance of CCT for the ring with random attachments, obtained by computer simulations. Figure~\ref{fig:CCT2} shows the value of risk according to Theorem \ref{CCT}. Like in the case of DCT, the risk of CCT decreases as the observation cost $c$ decreases. Instead, the behavior as function of $L$  is different. Unlike DCT, the risk of CCT increases by increasing the number of network nodes $L$. This effect can be explained by observing that in 
CCT there is a trade-off between the time required by the initialization phase and the time required by the test phase. For the considered network $\mathcal{G}$ and consensus matrix $W$, as the number of nodes $L$ increases, the consensus scheme in the initialization phase will require more time in comparison to the test phase. Additionally, the time required by the test phase decreases with $L$, for the same reasons as in the DCT case. 
Figures \ref{fig:CCT1} and \ref{fig:CCT2} show that the consensus between the sensors in the first phase of CCT becomes the dominating factor in the decision time. This is in agreement with the theoretical bounds provided in Theorem~\ref{CCT}. 

Figures \ref{fig:CCT3} and \ref{fig:CCT4} show the performance of CCT for the tree topology, via simulations and using the theoretical predictions of Theorem \ref{CCT}, respectively. The risk of CCT decreases as $c$ decreases. Unlike the ring topology with random attachments, the risk decreases by increasing $L$ until $L=15$, and then increases. In this setup, for the initial values of $L$, the time required by the test phase is larger than the time for the initialization phase, hence, it is the dominating factor in the decision time of CCT. 
On the contrary, for $L=20$, the time of the initialization phase becomes dominant, which explains why the risk increases with $L$.  
Finally, comparing Figures \ref{fig:CCT3} and \ref{fig:CCT4}, we see that the theoretical values of the risk are close to the results of numerical simulations. 

\section{ Extension to Channels with Quantized Messages and Link Failures } \label{sec:nonideal}
In the previous sections, we have assumed a communication model carrying  real numbers over ideal links, without errors. This models a situation where  transmission are finely quantized and adequately protected 
against errors. We now wish to explicitly take into account  
the effect of data quantization, 
and of  link failures leading to packet erasures. 

\subsection{ {Channels with Quantized Messages}}
We start by considering channels supporting  quantized messages, rather than  real numbers, as described in Section~\ref{sec:ProblemFormulation}.
We extend our previous results   by describing the key changes to both DCT and CCT  formulations.
\subsubsection{DCT with quantized messages} \label{subsDCT}
In the initialization phase, the vectors $v_{\ell}$ and $I$ 
need to be quantized using $C$ bits before transmission. Accordingly, at the sensor nodes we construct the quantized vector    $\floor{v_{\ell}}=[\floor{v_{1,\ell}},\ldots,\floor{v_{M,\ell}}]$ and at the fusion center we construct the  corresponding vector $\floor{I} = [\floor{I(1)},\dots, \floor{I(M)}]$.
By \eqref{eq:CC}  we have that for all $i\in[M]$ and $\ell\in[L]$,  $v_{i,\ell}\leq I(i)$. It follows that to construct the first vector we can divide the interval $[0,\max_{i}I(i)]$  uniformly into $Q$  sub-intervals, where $Q=2^{C/M}$, and 
let  $\floor{v_{i,\ell}}$  be the nearest  value among the $Q$ quantization levels smaller than $v_{i\ell}$. 
In this way, the difference between any two   contiguous quantization levels for $v_{i,\ell}$ is  
\begin{equation}\label{eq:QuantizationStep}
    \Delta \Big (\max_{i}I(i), Q \Big )=\frac{\max_{i}I(i)}{Q}. 
\end{equation}
 The quantized vector $\floor{v_{\ell}}=[\floor{v_{1,\ell}},\ldots,\floor{v_{M,\ell}}]$ is then sent by each node to the fusion center using $M\cdot \log_2 Q=C$  bits in one transmission.
On the other hand, for the second vector we let, for all $i \in [M]$
\begin{equation}\label{eq:newCC}
    \floor{I(i)}=\sum_{\ell =1}^{L}\floor{v_{i,\ell}}.
\end{equation}
Since $\floor{v_{i,\ell}}$ lies in the interval $[0,\max_{i}I(i)]$ and $\sum_{\ell=1}^Lv_{i,\ell} =  I(i)$, then $\floor{I(i)}$   also corresponds to
a quantization level of the interval $[0,\max_{i}I(i)]$ when this is uniformly divided into $Q$  sub-intervals. It follows that 
the fusion center can send the vector $\floor{I}$ to each node using $C$ bits in one transmission.  

Upon reception of $\floor{I}$ from the fusion center, every node $\ell$ computes a   vector $\rho_{\ell}=[\rho_{1,\ell},\dots,\rho_{M,\ell}]$, where for all $i\in [M]$ 
\begin{equation}\label{eq:newrho}
    \rho_{i,\ell}=\frac{v_{i,\ell}}{\sum_{\Tilde{\ell}=1}^{L}\floor{v_{i,\Tilde\ell}}}=\frac{v_{i,\ell}}{\floor{I(i)}}, 
\end{equation}
and uses it in the test phase for the determination of the threshold in  (\ref{StoppingRuleDec11}). 
In the test phase, each local preference  can be communicated using $\log_{2}M$ bits   and in the stopping phase the halting message can be communicated using a single bit. 


\subsubsection{CCT with quantized messages} \label{subsCCT}
In the initialization phase, we need to send   $z_{\ell}^n$  and  $\hat{I}_{\ell}^{n}$    over the channel at each transmission using $C$ bits.
Since the initialization phase terminates when $z^n_{\ell}>L+1$ (see Algorithm \ref{alg:Phase1}), it follows that at most $\log_{2}(L+2)$ bits are needed to communicate $z^n_{\ell}$. The remaining bits $\tilde{C}=C-\log_{2}(L+2)$ bits can then used to communicate the vector $\hat{I}_{\ell}^{n}$. 
Similar to DCT, we divide the interval $[0,\max_i I(i)]$ uniformly into $\tilde{Q}=2^{\tilde{C}/M}$ sub-intervals so that the  difference between any two adjacent quantization levels is
\begin{equation}\label{eq:quantizedConsensusLevel}
    \Delta \Big(\max_i I(i), \tilde{Q} \Big)=\frac{ \max_i I(i)}{\tilde{Q}}. 
\end{equation}
We let the initial estimate $\hat{I}_{\ell}^{0}=[\floor{v_{1,\ell}},\ldots,\floor{v_{M,\ell}}]$, where $\floor{v_{i,\ell}}$  is the nearest lower value among the $\tilde{Q}$ quantization levels representing $v_{i,\ell}$. The consensus protocol is then modified as follows:
\begin{equation}\label{eq:quantizedConsensus}
    \hat{I}_{\ell}^{n+1}= \Big \lfloor w_{\ell,\ell}\cdot \hat{I}_{\ell}^{n} +\sum_{j \in  \mathcal{N}_{\ell}}w_{\ell,j}\cdot \hat{I}_{j}^{n} \Big \rfloor.
\end{equation}
It follows that every node $\ell$ performs a  convex combination of the quantized self-estimate $\hat{I}_{\ell}^{n}$ and the quantized estimates $\{\hat{I}_{j}^{n}\}_{j \in {\cal N}_\ell}$ from its neighbors and the updated estimate $\hat{I}_{\ell}^{n+1}$ is a quantized version of this convex combination. The stopping rule of the initialization phase  remains the same as stated in Algorithm \ref{alg:Phase1}. In the following phases, we let $\floor{\hat{I}_{\ell}}=[\floor{\hat{I}_{\ell}(1)},\ldots, \floor{\hat{I}_{\ell}(M)}]$ denote  the estimate of vector $I$ using \eqref{eq:quantizedConsensus} at node $\ell$ at the end of the initialization phase.

In the test phase of CCT, the SCT is performed locally  using the result of the consensus algorithm to select the threshold, namely $\gamma=\hat{\rho}_{i^*_n,\ell}|\log c|$ and $\hat{\rho}_{i^*_n,\ell}=v_{i^*_n,\ell}/\floor{\hat{I}_{\ell}(i^*_n)}$. 
Finally, in the stopping phase of CCT (see Algorithm \ref{alg:Phase3}), the variable $d_{\ell}^{n}$ and local decision $\hat{H}^{n}_{\ell}$ are communicated over the channel. Since the 
stopping phase terminates when $d_{\ell}^{n}>L+1$, {no more than $\log_2(L+2)$ bits are needed} to communicate~$d_{\ell}^{n}$. 
The local preference $\hat{H}^{n}_{\ell}$ can be communicated by $\log_2 M$ bits.

\subsection{Performance analysis for {Channels with Quantized Messages}} \label{substheory}
In this section, we extend the results in Theorem \ref{DCT} and Theorem \ref{CCT} to channels 
with {quantized messages.} 

\begin{theorem}\label{DCT_quantized} \emph{(Direct).} Letting 
\begin{equation} \label{eq:fun}
    f(Q)=\frac{L\max_{i} I(i)}{Q },
\end{equation}
and assuming $C$ is sufficiently large such that for all $i\in [M]$, $f(Q)<I(i)$, the following statements hold for DCT: 
 \\
$(i)$ For all $c \in (0,1)$ and all $ i\in [M]$, 
the probability that the DCT takes an incorrect decision is 
\[{\mathbb{P}}^{{\cal D}}_i(\hat{H}\neq h_i)\leq \min\{(M-1)c,1\}.\]
\\
$(ii)$ For all $\ell \in [L]$, $i,j\in[M]$ and $u\in S$, if $\mathbb{E}\big[\log {p_{i,\ell}^{u}(Y)}/{p_{j,\ell}^{u}(Y)}\big]^2 < \infty$, then
the expected decision time is
\begin{equation} \label{EDCT1_1}
\mathbb{E}_{i}^{{\cal D}}[N]\leq (1+o(1))\frac{\abs{\log c}}{I(i)-f(Q)}, \;\; \mbox{ as }  c\to 0.
\end{equation}
$(iii)$ Combining $(i)$ and $(ii)$, the risk defined in~(\ref{risk}) is
\begin{equation} \label{RDCT_1}
\mathbb{R}_{i}^{{\cal D}}\leq (1+o(1))\frac{c\,\abs{\log c}}{I(i)-f(Q)}, \;\; \mbox{ as }  c \to 0.
\end{equation}
$(iv)$ For all $\ell \in [L]$, $i,j\in[M]$, $u\in S$ and {all integers} $r\geq 2$, if $\mathbb{E}\big[\abs{\log {p_{i,\ell}^{u}(Y)}/{p_{j,\ell}^{u}(Y)}}^{r+1}\big] < \infty$, then the $r^{th}$ moment of the decision time $N$ is 
\begin{align}\label{eq:momentsDCT_1}\small
    &\mathbb{E}_{i}^{{\cal D}}[N^{r}]
      \leq \Bigg( (1+o(1))\frac{c\,\abs{\log c}}{I(i)-f(Q)}\Bigg)^{r}, \mbox{ as }  c \to 0.
\end{align}
\end{theorem}

By Theorem \ref{DCT_quantized}, it follows that the performance of DCT  depends on the number  of quantization levels through the function $f(Q)$. As $Q\to \infty$, we have that $f(Q)\to 0$ and the results of Theorem~\ref{DCT} are recovered. Since as $Q\to \infty$, real numbers can be communicated perfectly over the channels, in this case DCT incurs  no loss of asymptotic performance. We can then view $f(Q)$ as quantifying the loss in performance of DCT due to quantization. This is also evident by combining \eqref{eq:QuantizationStep}  and \eqref{eq:newCC}, which show that the  quantization error      $|I(i) - \floor{I(i)}|$ is at most  $f(Q)$. Our theorem statement, by assuming that $f(Q)<I(i)$, ensures that  this error is smaller than $I(i)$. Since $Q= 2^{C/M}$, this constraint can be satisfied by having $C$ sufficiently large.  

Next, we consider the CCT case. We make the following assumptions that are commonly adopted in the literature of consensus over channels with quantized messages.

\begin{assumption}\emph{\cite[{Assumption 1}]{nedic2009distributed}}\label{assumption:capacity} The matrix $W$ is doubly stochastic, i.e. (\ref{eq:SC11}) and (\ref{eq:SC22})  holds, with positive diagonal entries. In addition, there exists a constant $\alpha>0$ such that if $w_{i,j}>0$, then $w_{i,j}>\alpha$. 
\end{assumption}
The double stochastic assumption on the weight matrix $W$ guarantees that the average of the sensor values remains the same at each consensus iteration. 
The second part of Assumption \ref{assumption:capacity} ensures that each sensor gives a non-negligible 
weight to its values and to the values of its neighbors at each time.
\begin{assumption}\emph{\cite[{Assumption 4}]{nedic2009distributed}}\label{assumption:initialValue} For all $\ell$ and $i$, $v_{i,\ell}$ is a multiple of $M/\tilde{Q}$. 
\end{assumption}

The above assumption states that the  values of vector $\hat{I}_{\ell}^{0}$ are already quantized, namely $ \hat{I}_{\ell}^{0}=[\floor{v_{1,\ell}},\ldots,\floor{v_{M,\ell}}] = [{v_{1,\ell}},\ldots,{v_{M,\ell}}]$. 


\begin{theorem}\label{CCT_quantized} \emph{(Direct).}
Let 
\begin{equation} \label{eq:Qfun}
    g(Q,c,\alpha)=\frac{L}{Q}\bigg( 2\frac{L^2}{\alpha}\log(\min(Q^2,L^4/ c^2)\cdot \max_j I^2(j))+1+h^{\mathcal{G}}\Bigg(\frac{-\log(d^{\mathcal{G}})}{\log\big(1-\eta(W^{h^{\mathcal{G}}})\big)}+1\Bigg)+L+1\bigg).
\end{equation}
Assume that $C$ is sufficiently large such that for all $i\in [M]$, $g(Q,c,\alpha)<I(i)$ and $C>\log_2(L+2)+\log_2 M$,
and Assumptions \ref{assumption:capacity} and \ref{assumption:initialValue} hold. Then, we have: \\
$(i)$ For all $c \in (0,1)$ and all $ i\in [M]$, 
the probability that CCT takes an incorrect decision is
\[{\mathbb{P}}^{{\cal C}}_i(\hat{H}\neq h_i)\leq \min\left \{(M-1)c^{\frac{I(i)}{I(i)+g(\tilde{Q},c,\alpha)}},1\right \}.\]
$(ii)$ For all $\ell \in [L]$, $i,j\in[M]$ and $u\in S$, if $\mathbb{E}\big[\log {p_{i,\ell}^{u}(Y)}/{p_{j,\ell}^{u}(Y)}\big]^2 < \infty$, then 
the expected decision time is
\begin{equation} \label{EDCT2}
\begin{split}
    &\mathbb{E}_{i}^{{\cal C}}[N]
    \leq (1+o(1)) \Bigg(\tilde Q\cdot g(\tilde Q,c,\alpha)+\frac{\abs{\log c}}{I(i)-g(\tilde Q,c,\alpha)}\Bigg), 
\end{split}
\end{equation}
as $c \to 0$.\\
$(iii)$ Combining $(i)$ and $(ii)$, the risk defined in~(\ref{risk}) is
\begin{equation} \label{RDCT2}
\begin{split}
  \mathbb{R}_{i}^{{\cal C}}&\leq\hspace{-2pt} (1+o(1))\Bigg(\frac{\tilde Q\cdot g(\tilde Q,c,\alpha)}{c}+\frac{1}{I(i)-g(\tilde Q,c,\alpha)}\hspace{-3pt}\Bigg)
\cdot c \,\abs{\log c},  \qquad
\end{split}
\end{equation} \\
\textnormal{\emph{as} $c\to 0$}. \\
$(iv)$ For all $\ell \in [L]$, $i,j\in[M]$, $u\in S$ and all integers $r\geq 2$, if $\mathbb{E}\big[\abs{\log {p_{i,\ell}^{u}(Y)}/{p_{j,\ell}^{u}(Y)}}^{r+1}\big] < \infty$, then the $r^{th}$ moment of the expected decision time is 
\begin{align}\label{eq:moments}\small
    &\mathbb{E}_{i}^{{\cal C}}[N^{r}]\nonumber
      \leq\hspace{-3pt} \Bigg[\hspace{-3pt} (1+o(1)) \Bigg(\tilde Q\cdot g(\tilde Q,c,\alpha)+\frac{\abs{\log c}}{I(i)-g(\tilde Q,c,\alpha)}\Bigg)\hspace{-3pt}\Bigg]^{r}\hspace{-3pt},
\end{align}
as $c\to 0$.\\
\end{theorem}

By Theorem \ref{CCT_quantized}, it follows that the performance of CCT  depends on the number $\tilde Q$ of quantization levels through {the function $g(\tilde Q,c,\alpha)$.} As $Q\to \infty$, $\tilde Q\to\infty$ and $g(\tilde Q,c,\alpha)\to 0$. The time {required by the} initialization phase is given by $\Tilde{Q}\cdot g(\tilde Q,c,\alpha)=\mathcal{O}(|\log(c)|)$ as $Q\to\infty$, which is of the same order as  ${h^{\mathcal{G}}\cdot |\log(c/\max_{j\in [L]} I(j))|}/{\log \big(1-\eta(W^{h^{\mathcal{G}}})\big)}=\mathcal{O}(|\log(c)|)$ 
{appearing in} Theorem \ref{CCT}.  
As $Q\to\infty$, Theorem~\ref{CCT_quantized} {retcovers the same optimality of CCT expressed by~Theorem \ref{main2}.} 
In conclusion, $g(\tilde Q,c,\alpha)$   quantifies, in terms of the relevant system parameters, the loss in asymptotic performance of CCT due to quantization. In this case,  the  error for    $|I(i) - \floor{\hat{I}_{\ell}(i)}|$ is at most  $g(\tilde Q,c,\alpha)$ and our theorem   assumes that this error is smaller than $I(i)$. Since $\tilde{Q}= 2^{\tilde C/M}$, this constraint can be satisfied by having $C$ sufficiently large. The additional capacity constraint $C>\log_2(L+2)+\log_2 M$ in the statement of the theorem is due to the transmission of $d_\ell^n$ and the local preference $\hat{H}^{n}_{\ell}$. 

\subsection{{Channels with Quantized Messages and Erasures}}
{In this section, we consider both quantized channels and $\epsilon$-random packet erasures, as described in Section \ref{sec:ProblemFormulation}. We extend our previous results by describing  key changes to both DCT and CCT. }

\subsubsection{DCT with Quantization and Erasures}
In the initialization phase each node~$\ell$ communicates the vector $\floor{v_\ell}$ to the fusion center using a packet of $C$ bits.   The expected time for   successful transmission of the packet is $1/(1-\epsilon)$. After receiving the vector $\floor{v_{\ell}}$ from all the nodes, the fusion center communicates the  vector $\floor{I}=[\floor{I(1)},\ldots,\floor{I(M)}]$  back to each node $\ell$, which requires an expected time $1/(1-\epsilon)$ as well.

In the test phase, each local preference is communicated using a packet of $\log_{2} M $ bits to the fusion center, also with an expected time $1/(1-\epsilon)$.

The final decision $\hat{H}$ at the fusion center is made in favor of hypothesis $h_i$  when the local decisions received from all the network nodes are in favor of the hypothesis $h_i$. Given the local decision $h_i$ is reached at all the nodes, the expected time for reaching the final decision $\hat{H}$ is $1/(1-\epsilon)^{L}$, as it is required that all the links be 
simultaneously active. Upon taking the final decision, the fusion center sends a halting message to each node $\ell$.  

\subsubsection{CCT with Quantization and Erasures}
In this case, at each time step $n$, we consider the time-varying graph $\mathcal{G(L,E}(n))$, where $\mathcal{E}(n)\subseteq \mathcal{E}$ denotes the set of communication links where a packet can be sent successfully. 

In the initialization phase of CCT, since the graph is time-varying, the weight matrix $W=W(n)$ also varies over time.
This matrix can be expressed as \cite{kar2008sensor}
\begin{equation}
    W(n)=U_{L\times L}-\beta \Bar{L}(n),
\end{equation}
where $\beta$ is a design parameter, $U_{L\times L}$ is the identity matrix of dimension $L\times L$,  $\Bar{L}(n)$ is the $L\times L$ dimensional Laplacian matrix 
of $\mathcal{G(L,E}(n))$ \cite{kar2008sensor}, {with entries:}
\begin{equation}
\Bar{l}_{i,j}(n)\hspace{-3pt}=\hspace{-2pt}\begin{cases}
        \sum_{j^\prime\neq i}\mathbf{1}((i,j^\prime)\in \mathcal{E}(n))
    \quad\hspace{-.5pt}\mbox{ if } i=j,  \\
         -1 \qquad\quad\quad\qquad\quad\quad\hspace{12pt}\mbox{if } (i,j)\in \mathcal{E}(n),  \\
         0 \qquad\quad\quad\quad\quad\qquad\quad\quad\hspace{-4pt}\mbox{          otherwise,}
       \end{cases}
\end{equation}
{where $\mathbf{1}(\cdot)$ denotes the indicator function.}
Each node $i$ can compute locally the {values} $\Bar{l}_{i,j}(n)$, based on whether a packet is received from node $j$ at time $n$. Since $\Bar{l}_{i,j}(n)=\Bar{l}_{j,i}(n)$, it follows that $W(n)$ is a symmetric matrix, where \cite{kar2008sensor}
\begin{equation}
{w}_{i,j}(n)\hspace{-3pt}=\hspace{-2pt}\begin{cases}
        1-\beta\sum_{j^\prime\neq i}\mathbf{1}((i,j^\prime)\in \mathcal{E}(n))
    \quad\hspace{-.5pt}\mbox{ if } i=j,  \\
         \beta \qquad\quad\quad\qquad\quad\quad\hspace{1pt}\mbox{if } (i,j)\in \mathcal{E}(n),  \\
         0 \qquad\quad\quad\quad\quad\qquad\quad\quad\hspace{-1pt}\mbox{          otherwise}.
       \end{cases}
\end{equation}
Then, as in~(\ref{eq:quantizedConsensus}), node $\ell$ updates its quantized estimate at time step $n$ as
\begin{equation}\label{eq:consensus3}
    \hat{I}_{\ell}^{n+1}=\Big \lfloor w_{\ell,\ell}(n)\cdot \hat{I}_{\ell}^{n} +\sum_{j \in  \mathcal{N}_{\ell}}w_{\ell,j}(n)\cdot \hat{I}_{j}^{n} \Big \rfloor.
\end{equation}
{Whenever links are active, the information communicated over the channels is of the same form as that over channels with quantized messages.} The stopping rule of this phase remains the same as stated in Algorithm \ref{alg:Phase1}. In the following phases, we let $\floor{\hat{I}^\epsilon_{\ell}}=[\floor{\hat{I}_{\ell}^\epsilon(1)},\ldots, \floor{\hat{I}_{\ell}^\epsilon (M)}]$ denote  the estimate of vector $I$ using \eqref{eq:consensus3} at node $\ell$ at the end of the initialization phase in this channel model. 

In the test phase of CCT, the SCT is performed locally  using the result of the consensus algorithm to select the threshold, namely $\gamma=\hat{\rho}^\epsilon_{i^*_n,\ell}|\log c|$ and $\hat{\rho}^\epsilon_{i^*_n,\ell}=v_{i^*_n,\ell}/\floor{\hat{I}^\epsilon_{\ell}(i^*_n)}$.

Finally, in the stopping phase of CCT (see Algorithm \ref{alg:Phase3}), the variable $d_{\ell}^{n}$ and the local decision $\hat{H}^{n}_{\ell}$ are communicated over channel by $\log_2(L+2)+\log_2 M$ bits. {Of course, these communications are successful only when the link between the nodes is active.} 

\subsection{Performance Analysis for Channels with {Quantized Messages and  Erasures}}
In this section, we  extend the results of Theorem \ref{DCT_quantized} and Theorem \ref{CCT_quantized} to channels with {quantized messages and 
erasures. }

\begin{theorem}\label{DCT_quantized_linkFailure}
{In the presence of channel with quantized messages and $\epsilon$-random packet erasures, Theorem \ref{DCT_quantized} holds unmodified.}
\end{theorem}

{Intuitively, the reason  why the results of Theorem \ref{DCT_quantized} hold unmodified is  as follows. Link failures only delay the communication} of the quantized information over the channel, which impacts the decision time. Note that the expected time for communication of $\floor{v_{\ell}}$ from all the nodes is at most $L/(1-\epsilon)$, {as is the expected time to communicate the response vector to all the nodes.} Given the same local decision is reached at the nodes, the expected time to reach the final decision is $1/(1-\epsilon)^L$. Likewise, the expected time to communicate the halting message is $L/(1-\epsilon)$. All these delays introduced by the $\epsilon$-erasure channel are finite and independent of $c$, and 
are {embodied in the terms $o(1)$ appearing in the statement of Theorem~\ref{DCT_quantized_linkFailure}.} 

{Next, we give a lemma needed to provide the performance guarantees of CCT.} 

\begin{lemma}\label{thm:propertiesOfW}For all   $n$ and $0<\beta<1/(2|\mathcal{E}|)$, the following holds:\\
$(i)$ $W(n)$ is a doubly stochastic matrix i.e. (\ref{eq:SC11}) and (\ref{eq:SC22})  holds.\\
$(ii)$ For all $i,j\in [L]$, if $w_{i,j}(n)>0$, then $w_{i,j}(n)>\min{(1-\mathcal{D(G)}\beta,\beta)}$, where  $\mathcal{D(G)}=\max_{s}\sum_{j\neq s}\mathbf{1}((j,s)\in \mathcal{E})$ is the maximum node degree   in the graph $\mathcal{G(V,E)}$.\\
$(iii)$ The spectral radius verifies
\begin{equation}\label{eq:spectralRadiusCheck}
R\left(W(n)-\frac{\textbf{{1}}_{L\times1}\cdot \textbf{{1}}_{1\times L}}{L}\right)< 1.
\end{equation}
\end{lemma}

\begin{theorem}\label{CCT_quantized_linkFailure} \emph{(Direct).} Let 
\begin{equation} \label{eq:needthis0}
    h(Q,c,\alpha,\epsilon)=\frac{g(Q,c,\alpha)(2-|\mathcal{E}|\epsilon)}{(1-|\mathcal{E}|\epsilon)^2},
\end{equation}
\begin{equation} \label{eq:needthis2}
  q({Q},c,\alpha,\epsilon )=\frac{Q\cdot g({Q},c,\alpha)}{L (2-|\mathcal{E}|\epsilon)},
\end{equation}
$\epsilon<1/|\mathcal{E}|$, and $0<\beta<1/(2|\mathcal{E}|)$.
{Assume that $C$ is sufficiently large such that for all $i\in [M]$, $h(\tilde Q,c, \min{(1-\mathcal{D(G)}\beta,\beta)},\epsilon) < I(i)$ and $C>\log_2(L+2)+\log_2 M$, and Assumption \ref{assumption:initialValue} is verified. Then the following statements hold for CCT:} \\
$(i)$ For all $c \in (0,\sqrt{({1-|\mathcal{E}|\epsilon})/{2}})$ and all $ i\in [M]$, 
the probability that CCT takes an incorrect decision is
\begin{equation}
\begin{split}
    {\mathbb{P}}^{{\cal C}}_i(\hat{H}\neq h_i)&\leq \min \{(M-1)(1- \exp(-2 q(\Tilde{Q},c,\min{(1-\mathcal{D(G)}\beta,\beta),\epsilon} ))\cdot c^{I(i)/(I(i)+h(\Tilde{Q},c,\min{(1-\mathcal{D(G)}\beta,\beta)},\epsilon))}\\
    &+\exp(-2 q(\Tilde{Q},c,\min{(1-\mathcal{D(G)}\beta,\beta),\epsilon} ),1 \}.
\end{split}
\end{equation}
$(ii)$ For all $\ell \in [L]$, $i,j\in[M]$ and $u\in S$, if $\mathbb{E}\big[\log {p_{i,\ell}^{u}(Y)}/{p_{j,\ell}^{u}(Y)}\big]^2 < \infty$, then 
the expected decision time is
\begin{align} \label{EDCT1_new_error-quantized}
    \mathbb{E}_{i}^{{\cal C}}[N\big|\{\floor{\hat{I}^{\epsilon}_{\ell}}\}_{\ell\in [L]}]
    &\leq (1+o(1)) \Bigg({\tilde Q\cdot h(\tilde Q,c,\min{(1-\mathcal{D(G)}\beta,\beta)},\epsilon})+\frac{\abs{\log c}}{\min_{\ell\in [L]}\floor{\hat{I}^\epsilon_{\ell}(i)}}\Bigg)\\
    \label{EDCT1_new_error-quantized1}
    &\leq (1+o(1)) \Bigg({\tilde Q\cdot h(\tilde Q,c,\min{(1-\mathcal{D(G)}\beta,\beta)},\epsilon})+\frac{\abs{\log c}}{I(i)-h(\tilde Q,c,\min{(1-\mathcal{D(G)}\beta,\beta)},\epsilon)}\Bigg), 
\end{align}
with probability at least $1-\exp{(-2q(\Tilde{Q},c,\min{(1-\mathcal{D(G)}\beta,\beta)},\epsilon))}$, as $c \to 0$. \\
$(iii)$ Combining $(i)$ and $(ii)$, the risk  is
\begin{equation} \label{RDCT1}
\begin{split}
  \mathbb{R}_{i}^{{\cal C}}&\leq\hspace{-2pt} (1+o(1))\Bigg(\frac{\tilde Q\cdot h(\tilde Q,c,\min{(1-\mathcal{D(G)}\beta,\beta)},\epsilon)}{c }+\frac{1}{I(i)-h(\tilde Q,c, \min{(1-\mathcal{D(G)}\beta,\beta)},\epsilon)}\hspace{-3pt}\Bigg)
\cdot c\abs{\log c},  \qquad
\end{split}
\end{equation} \\
with probability $1-\exp{(-2q(\Tilde{Q},c,\min{(1-\mathcal{D(G)}\beta,\beta)},\epsilon))}$, \textnormal{\emph{as} $c\to 0$}.\\
$(iv)$ For all $\ell \in [L]$, $i,j\in[M]$, $u\in S$ and all integers $r\geq 2$, if $\mathbb{E}\big[\abs{\log {p_{i,\ell}^{u}(Y)}/{p_{j,\ell}^{u}(Y)}}^{r+1}\big] < \infty$, then the $r^{th}$ moment of the expected decision time is 
\begin{align}\label{eq:moments}\small
    &\mathbb{E}_{i}^{{\cal C}}[N^{r}\big|\{\floor{\hat{I}^{\epsilon}_{\ell}}\}_{\ell\in [L]}]\nonumber
      \leq\hspace{-3pt} \Bigg(\hspace{-3pt} (1+o(1)) \Bigg({\tilde{Q}\cdot h(\tilde Q,c,\min{(1-\mathcal{D(G)}\beta,\beta)},\epsilon)}+\frac{\abs{\log c}}{I(i)-h(\tilde Q,c, \min{(1-\mathcal{D(G)}\beta,\beta)},\epsilon)}\Bigg)\hspace{-3pt}\Bigg)^{r}\hspace{-3pt}.
\end{align}
with probability at least $1-\exp{(-2q(\Tilde{Q},c,\min{(1-\mathcal{D(G)}\beta,\beta)},\epsilon))}$, as $c\to 0$.\\
\end{theorem}
 We point out that when estimating  the vector $\floor{\hat{I}^{\epsilon}_{\ell}}$ in the initialization phase of CCT, the $\epsilon$-random  erasure  model introduces additional randomness. For this reason, \eqref{EDCT1_new_error-quantized} represents the conditional expected decision time given $\{\floor{\hat{I}^{\epsilon}_{\ell}}\}_{\ell\in [L]}$. To obtain \eqref{EDCT1_new_error-quantized1},  we use the fact that for all 
 $\ell\in [L]$, we have that the random variable
 \begin{equation}
 \floor{\hat{I}^{\epsilon}_{\ell}(i
)} \in  [I(i)-h(\tilde Q,c, \min{(1-\mathcal{D(G)}\beta,\beta)},\epsilon), I(i)+h(\tilde Q,c, \min{(1-\mathcal{D(G)}\beta,\beta)},\epsilon)]
\end{equation}
with probability at least $1-\exp{(-2q(\Tilde{Q},c,\min{(1-\mathcal{D(G)}\beta,\beta)},\epsilon))}$ (see \eqref{eq:estimationErrorCCT_error} in  Appendix \ref{app:CCT_quantized_linkFailure}). 


In Theorem \ref{CCT_quantized_linkFailure},  performance guarantees are provided with high probability and this probability depends on the number of quantization levels and on the packet erasures through $q(\Tilde{Q},c,\min{(1-\mathcal{D(G)}\beta,\beta)},\epsilon)$. As $c\to 0$ and $Q\to \infty$ {(in arbitrary order)}, we have that $q(\Tilde{Q},c,\min{(1-\mathcal{D(G)}\beta,\beta)},\epsilon)\to \infty$ and   $1-\exp{(-2q(\Tilde{Q},c,\min{(1-\mathcal{D(G)}\beta,\beta)},\epsilon))}$ converges to one. 
Additionally, the performance of CCT also depends on $h(\tilde Q,c, \min{(1-\mathcal{D(G)}\beta,\beta)},\epsilon)$. As $Q\to \infty$,  we have $g(\tilde Q,c,\alpha)\to 0,$ which implies $h(\tilde Q,c, \min{(1-\mathcal{D(G)}\beta,\beta)},\epsilon)\to 0$. Finally,  the time required to complete the initialization phase is given  by $\Tilde{Q}\cdot h(\tilde Q,c, \min{(1-\mathcal{D(G)}\beta,\beta)},\epsilon)=\mathcal{O}(|\log(c)|)$ as $Q\to\infty$. 
 
As $Q\to\infty$,  Theorem~\ref{CCT_quantized_linkFailure}  {recovers the  same optimality of CCT expressed} in Theorem \ref{main2}. 
The quantity $h(\tilde Q,c, \min{(1-\mathcal{D(G)}\beta,\beta)},\epsilon)$ {quantifies} the loss in performance of CCT due to both quantization and random packet erasures.  In this case, since $\floor{\hat{I}^{\epsilon}_{\ell}}$ is a random variable,  the  error for    $|I(i) - \floor{\hat{I}^{\epsilon}_{\ell}(i)}|$ is at most  $h(\tilde Q,c, \min{(1-\mathcal{D(G)}\beta,\beta)},\epsilon)$ with high probability, and our theorem   assumes that this error is smaller than $I(i)$. Since $\tilde{Q}= 2^{\tilde C/M}$, this constraint can be satisfied by having $C$ sufficiently large. The additional capacity constraint $C>\log_2(L+2)+\log_2 M$ in the statement of the theorem is due to the transmission of $d_\ell^n$ and the local preference $\hat{H}^{n}_{\ell}$. 

\section{{Summary}}\label{sec:Conclusion}

Networked sensor systems are becoming increasingly popular for inference problems due to  their improved robusteness, scalability, versatility, and performance. Initial implementations were based on inexpensive small sensors, with extremely limited hardware/software capabilities. Progressively, these  devices acquired more and more functionalities, and are nowadays capable of active sensing, namely they can adapt the probing signal on the basis of previous measurements, in order to optimize their sensing capability. Thus, individual sensors have become intelligent devices which continuously learn from the past and can decide their future actions, in a closed-loop adaptive scheme.

We considered two network configurations of these ``intelligent'' sensors: a star network configuration with a fusion center, and a general network configuration that is fully distributed. In the first configuration, the  fusion center  coordinates the actions of the remote nodes, and takes the final decision. The second configuration does not have a central coordination, and  all the processing takes place at the nodes: they actively collect measurements, exchange information with  immediate neighbors, and collectively take a decision.

For the first configuration   we proposed a sequential  adaptive decision system --- referred to as DCT --- which  operates in three phases. 
 First, there is a round of communication between the fusion center and the remote nodes, needed to define the relative capability of each node to detect the hypotheses. This capability is then used to  divide the decision task among the nodes. Each node begins to continuously sense the environment, and makes the central entity aware about decisions that are locally believed to be sufficiently reliable. The final decision is taken by the fusion center on the basis of these \emph{local suggestions} about the true hypothesis.

We provided a theoretical analysis of  detection performance and expected time to reach a decision. 
We show that the test is asymptotically optimal in terms of detection performance (risk), as the observation cost per unit time tends to zero. 

For the second configuration, we exploit ideas from the DCT implementation, combined with gossip protocols that use consensus techniques, to design a fully distributed adaptive sequential decision system, which is referred to as CCT. Our approach is markedly different from those usually exploited in the literature, where real-valued belief vectors are continuously exchanged over the network to reach   consensus. 

Our CCT works in three phases. In the first phase, a consensus about the relative capability of the nodes to detect the state of nature is achieved by means of gossip protocols with local information exchange. In the second phase, nodes implement the Chernoff test and, once  all the network nodes reach their preference, the final decision is reached in a  distributed way in the  third phase of operation, by diffusing messages across the network that percolate the information of whether the other sensors have terminated their share of task.
We prove  the asymptotic optimality of CCT, up to a multiplicative factor in terms of both risk and higher moments of the decision time.

\appendices
\section{Proof of Theorem\ref{converse}} \label{app:th2}
\begin{proof}
Let $H^*=h_{i}$ be the true hypothesis. The proof of Theorem \ref{converse} consists of two parts.  First, for all $0<\epsilon<1$, we show that for the probability of error 
to be close to zero, {the log-likelihood ratio between $h_i$ and all $h_m\neq h_i$, 
should be greater than} $-(1-\epsilon)\log c $ with high probability  as $c\to 0$. Namely, the inequality
\begin{equation}
S^{N}(h_{i},h_{m}) = \sum_{\ell=1}^{L}\sum_{k=1}^{N}\log\frac{p_{i,\ell}^{u_{k,\ell}}{(y_{k,\ell})}}{p_{m,\ell}^{u_{k,\ell}}{(y_{k,\ell})}}\geq -(1-\epsilon)\log c  
\label{eq:2}
\end{equation}
must hold with high probability, as $c\to 0$. Second, we show that for all  $0<\epsilon<1$ and $n< -(1-\epsilon)\log  c /I(i)$, it is unlikely that such inequality is not satisfied for some hypothesis $h_m\neq h_i$. 

We start by defining two sets of hypotheses  $\mathcal{H}^{\prime}_{0}=\{h_{i}\}$ and $\mathcal{H}^{\prime}_{1}=\{ h_{m}\}_{m \neq i}$. By (\ref{eq:assumption}), both type $\RN{1}$ and type $\RN{2}$ error probabilities  of the hypothesis test $\mathcal{H}^{\prime}_{0}$ vs. $\mathcal{H}^{\prime}_{1}$ are $O(-c\log c )$. Thus, by \cite[Lemma 4]{chernoff1959sequential}, for all hypotheses $h_m\neq h_i$ and $0<\epsilon<1$, we have
\begin{equation}
\label{eq:1}
\mathbb{P}_{i}\Big(S^{N}(h_{i},h_{m})\leq -(1-\epsilon)\log c \Big) = {\cal O}(-c^{\epsilon}\log c ).
\end{equation}
Therefore, as $c\to 0$, the probability in (\ref{eq:1}) tends to 0, which concludes the first part of the proof.\\
Now, we show that for all $\epsilon > 0$, we have
\begin{equation}
\label{eq:3}
\lim_{n^{\prime} \rightarrow \infty} \mathbb{P}_i\left(\max_{1\leq n \leq n^{\prime}}\min_{m\neq i} S^{n}(h_{i},h_{m})\geq n^{\prime}(I(i)+\epsilon)\right)= 0.
\end{equation}
 We have 
\begin{align}
S^{n}(h_{i},h_{m}) &= \sum_{\ell=1}^{L}\sum_{k=1}^{n} \Bigg( \log\frac{p_{i,\ell}^{u_{k,\ell}}{(y_{k,\ell})}}{p_{m,\ell}^{u_{k,\ell}}{(y_{k,\ell})}}\nonumber\\
&-D(p_{i,\ell}^{u_{k,\ell}}||p_{m,\ell}^{u_{k,\ell}}) \Bigg)\nonumber\\
&+\sum_{\ell=1}^{L}\sum_{k=1}^{n}D(p_{i,\ell}^{u_{k,\ell}}||p_{m,\ell}^{u_{k,\ell}})\nonumber\\
&=M_{1}^{n}+M_{2}^{n},
\end{align}
where
\begin{equation}
\begin{split}
M_{1}^{n}&= \sum_{\ell=1}^{L}\sum_{k=1}^{n}\Bigg(\log\frac{p_{i,\ell}^{u_{k,\ell}}{(y_{k,\ell})}}{p_{m,\ell}^{u_{k,\ell}}{(y_{k,\ell})}}-D(p_{i,\ell}^{u_{k,\ell}}||p_{m,\ell}^{u_{k,\ell}}) \Bigg),\\
\end{split}
\end{equation}
is a martingale with mean 0, and 
\[M_{2}^{n}=\sum_{\ell=1}^{L}\sum_{k=1}^{n}D(p_{i,\ell}^{u_{k,\ell}}||p_{m,\ell}^{u_{k,\ell}}).\]
Then, for all $1\leq n \leq n^{\prime}$, we have
\begin{align}
\min_{m\neq i}M_{2}^{n}&\stackrel{}{=}\min_{m\neq i}\sum_{\ell=1}^{L}\sum_{k=1}^{n}D(p_{i,\ell}^{u_{k,\ell}}||p_{m,\ell}^{u_{k,\ell}})\nonumber\\
&\stackrel{(a)}{\leq}\sum_{\ell=1}^{L}\sum_{k=1}^{n} v_{i,\ell}\nonumber\\
&\stackrel{(b)}{=} nI(i)\nonumber\\
&\stackrel{(c)}{\leq} n^{\prime} I(i),
\label{eq:cumulativeCapability_1}
\end{align}
where $(a)$ follows from the definition of $v_{i,\ell}$ in 
(\ref{eq:value}), $(b)$ follows from the definition of $I(i)$ in (\ref{eq:CC}), and $(c)$ follows from $n \leq n^{\prime}$. Now, using \eqref{eq:cumulativeCapability_1},
if the event in (\ref{eq:3}) occurs for a fixed $n_{1}$,  i.e. 
\[\min_{m\neq i}(M_{1}^{n_{1}}+M_{2}^{n_{1}})\geq n^{\prime}\big(I(i)+\epsilon\big),\]
then there exists a hypothesis $h_{m}$ such that $M_{1}^{n_{1}}\geq n^{\prime}\epsilon$. Thus, there exists a constant $K^{\prime}>0$ such that the probability in (\ref{eq:3}) becomes
\begin{align}
&\mathbb{P}_i\bigg(\max_{1\leq n \leq n^{\prime}}\min_{m\neq i}S^{n}(h_{i},h_{m})\geq n^{\prime}(I(i)+\epsilon)\bigg)\nonumber\\
&\leq \sum_{m\neq i}\mathbb{P}_i\bigg(\max_{1\leq n\leq n^{\prime}}M_{1}^{n}\geq n^{\prime}\epsilon\bigg)\nonumber\\
&\stackrel{(a)}{\leq}\frac{(M-1)K^{\prime}}{n^{\prime}\epsilon^{2}},
\end{align}
where $(a)$ follows from the fact $M_{1}^{n}$ is a martingale with mean zero and using  the Doob-Kolmogorov extension of Chebyshev's inequality\cite{doob1953stochastic}. 
Thus, (\ref{eq:3}) follows. As discussed in \cite[Theorem 2]{chernoff1959sequential}, for  $n_0= -(1-\epsilon)\log  c /(I(i)+\epsilon)$,
we have 
\begin{align}
&\mathbb{P}_i(N\leq n_{0})\nonumber\\
&\leq \mathbb{P}_i\Big( N \leq n_{0} \mbox{ and } \forall  m\neq i: \nonumber\\
&\qquad S^{N}(h_{i},h_{m})\geq n_{0}(I(i)+\epsilon) \Big)\nonumber\\
&+\mathbb{P}_i\Big(\exists m \neq i:
  S^{N}(h_{i},h_{m})
\leq n_{0}(I(i)+\epsilon)\Big)\nonumber\\
&\leq \mathbb{P}_i\Big( \max_{1\leq n \leq n_{0}}\min_{m\neq i} S^{n}(h_{i},h_{m})\geq n_{0}(I(i)+\epsilon)\Big)\nonumber\\
&+\mathbb{P}_i\Big(\exists m \neq i:
  S^{N}(h_{i},h_{m})
\leq n_{0}(I(i)+\epsilon)\Big).
\label{eq:5}
\end{align}
The first and the second terms at the right-hand side of (\ref{eq:5}) approach   zero  by  (\ref{eq:3}) and (\ref{eq:1})  respectively. Now, using (\ref{eq:5}), we also have 
\begin{equation}
\begin{split}
\mathbb{P}_i(N^r\leq n_{0}^r)
&=\mathbb{P}_i(N\leq n_{0})\to 0,\\
\end{split}
\end{equation}
as $c\rightarrow 0$.
(\ref{Econverse}) follows upon observing that as $c\to 0$, $\mathbb{E}_i[N^r]\geq n_0^r$ which is 
\[\mathbb{E}_i[N^r]\geq \Bigg((1+o(1))\frac{\abs{\log c}}{I(i)}\Bigg)^{r}.\]
The proof of (\ref{Rconverse}) is straightforward.
\end{proof}
\section{Proofs for DCT and CCT}
\subsection{Proof of Theorem \ref{DCT}}\label{app:DCT}
\begin{proof}
To prove Theorem \ref{DCT}, we need some   additional notation. Let $A_{n,j}$  be  the set of sample paths where the decision made by the fusion center is in favor of $h_{j}$ at the $n^{th}$ step, and we indicate a single sample path  as $\{(u^{n}_{1},y^{n}_{1})\ldots(u^{n}_{L},y^{n}_{L})\}$. We indicate by $A_{n,j,\ell}$   the set of sample paths  in $A_{n,j}$ corresponding to the $\ell^{th}$ node.  Finally, we define
\begin{equation}\label{SRsensor}
     N_{i,\ell} =
     \inf\left\{n: \log\frac{\mathbb{P}(H^{*}=h_{i^*_{n},}|y^{n+1}_{\ell},u^{n+1}_{\ell})}{\max_{j \neq i^*_{n}}\mathbb{P}(H^{*}=h_{j}|y^{n+1}_{\ell},u^{n+1}_{\ell})} \geq \rho_{i,\ell}\,\abs{\log c}\right\}
     =\inf\left\{n: \sum_{k=1}^{n} \log \frac{p_{i,\ell}^{u_{k,\ell}}(y_{k,\ell})}{\max_{j\neq i}p_{j,\ell}^{u_{k,\ell}}(y_{k,\ell})}\geq \rho_{i,\ell}\,\abs{\log c}\right\}.  \nonumber
 \end{equation}
 
 The proof consists of two parts. First, we write  ${\mathbb{P}}^{{\cal D}}_i(\hat{H}\neq h_i)$   as the probability of a countable union of disjoint sets of sample paths. An upper bound on  this probability then follows from an upper bound on the probability of these disjoint sets, in conjunction with  the union bound.  Second, we upper bound $\mathbb{E}_{i}^{{\cal D}}[N]$   by the sum of the expected time required to reach the threshold in  ~(\ref{StoppingRuleDec11}) at node $\ell$ for $H ^*=h_{i}$, and the expected delay between the time of reaching the threshold and the time when the final decision is taken  in favor of hypothesis $h_{i}$ at the fusion center. We then show that these expectations are the same at all the nodes, so that~(\ref{EDCT1}) follows.
 
 Consider the probability ${\mathbb{P}}^{\mathcal{D}}_i(\hat{H}= h_j)$.
This is the same as the probability of  the countable union of disjoint sets  $A_{n,j}$.
Thus, for all $j\neq i$, we can write 
 \begin{align}\label{probAnj1}
&{\mathbb{P}}_{i}^{{\cal D}}(A_{n,j}) \nonumber \\
&= \int_{A_{n,j}}\prod_{\ell=1}^{L}\prod_{k=1}^{n}p_{i,\ell}^{u_{k,\ell}}(y_{k,\ell})\,d{y_{1,\ell}(u_{1,\ell})}\ldots\,d{y_{n,\ell}(u_{n,\ell})} \nonumber \\
& \stackrel{(a)}{=}\prod_{\ell=1}^{L}\int_{A_{n,j,\ell}}\prod_{k=1}^{n}p_{i,\ell}^{u_{k,\ell}}(y_{k,\ell})\,d{y_{1,\ell}(u_{1,\ell})}.....\,d{y_{n,\ell}(u_{n,\ell})} \nonumber \\
& \stackrel{(b)}{\leq} \prod_{\ell=1}^{L}\int_{A_{n,j,\ell}}c^{\rho_{j,\ell}}\prod_{k=1}^{n}p_{j,\ell}^{u_{k,\ell}}(y_{k,\ell})\,d{y_{1,\ell}(u_{1,\ell})}\ldots\,d{y_{n,\ell}(u_{n,\ell})} \nonumber \\
& \stackrel{(c)}{=} c\prod_{\ell=1}^{L}\int_{A_{n,j,\ell}}\prod_{k=1}^{n}p_{j,\ell}^{u_{k,\ell}}(y_{k,\ell})\,d{y_{1,\ell}(u_{1,\ell})}\ldots\,d{y_{n,\ell}(u_{n,\ell})}\nonumber \\
& = c\prod_{\ell=1}^{L}{\mathbb{P}}^{\mathcal{D}}_{j}(\hat{H}=h_j \mbox{ at sample } n \mbox{ at } \ell^{th} \mbox{ sensor}) \nonumber \\
&= c \, \mathbb{P}^{\mathcal{D}}_{j}(\hat{H}=h_j \mbox{ at sample } n),
\end{align}
where $(a)$ follows from the definition of $A_{n,j,\ell}$; $(b)$ follows from the definition of $N_{i,\ell} $; $(c)$ follows from $\sum_{\ell=1}^{L} \rho_{j,\ell}=1$. Now, we can bound $\mathbb{P}^{{\cal D}}_i(\hat{H}\neq h_i)$ as follows
\begin{align}\label{eq:proberror2}
&{\mathbb{P}}^{{\cal D}}_i(\hat{H}\neq h_i) = \sum_{j\neq i}{\mathbb{P}}^{{\cal D}}_i(\hat{H} = h_j)= \sum_{j\neq i}\sum_{n= 1}^{\infty}{\mathbb{P}}^{\mathcal{D}}_{i}(A_{n,j}) \nonumber \\
&\leq \sum_{j\neq i}\sum_{n=1}^{\infty} c \,  {\mathbb{P}}_{j}^{{\cal D}}(\hat{H}=h_j \mbox{ at sample } n) \nonumber \\ 
&= \sum_{j\neq i} c \,  {\mathbb{P}}_{j}^{{\cal D}}(\hat{H}=h_j) \leq c \,(M-1),
\end{align}
where the first inequality follows from~(\ref{probAnj1}).
This proves part~$(i)$ of the theorem. 

Let us now define 
\[
\tau(N_{i,\ell}) = \sup\Big\{n\ : \hspace*{-5pt }\sum_{k=N_{i,\ell}+1}^{N_{i,\ell}+n} \hspace*{-5pt } \log \frac{p_{i,\ell}^{u_{k,\ell}}(y_{k,\ell})}{\max_{j\neq i}p_{j,\ell}^{u_{k,\ell}}(y_{k,\ell})} < 0\Big\}.
\]
The  condition in \eqref{StoppingRule} is satisfied for threshold in  (\ref{StoppingRuleDec11}) at the $\ell^{th}$ node for all $n>N_{i,\ell}+\tau(N_{i,\ell})$, yielding
\[
{\small N \hspace*{-3pt} \leq \hspace*{-3pt} \max_{1\leq \ell\leq L} \hspace*{-3pt} (N_{i,\ell}+\tau(N_{i,\ell})+3(M+1))
\hspace*{-2pt}\leq\hspace*{-2pt} \max_{1\leq \ell\leq L} \hspace*{-2pt}N_{i,\ell} + \sum_{\ell=1}^{L}\tau(N_{i,\ell}) \hspace*{-2pt}+\hspace*{-2pt}3(M+1), }
\]
where if $C\geq M$, then three time steps are needed to communicate $v_{\ell}$, $I$ and the halting message; otherwise at most $3(M+1)$ time steps are needed to communicate this information. 

Taking the expectation of both sides, we get
\begin{equation}\label{SampleSize}
\mathbb{E}^{{\cal D}}_i[N]\leq \mathbb{E}_i \left [\max_{1\leq \ell\leq L}N_{i,\ell} \right ] +\sum_{\ell=1}^{L}\mathbb{E}_i[\tau(N_{i,\ell})]+3(M+1). 
\end{equation}

We now bound the terms on the right-hand side of~(\ref{SampleSize}). 
Since each node performs the Chernoff test individually, for all $\ell\in [L]$ and $i\in[M]$, there exist two constants $K_{i,\ell}>0$ and $b_{i,\ell}>0$ such that for all $\epsilon>0$ and $n\geq (1+\epsilon)\abs{\log(c)}/I(i)$, we have \cite[Lemma 2]{chernoff1959sequential}
\begin{equation}\label{eq:conc}
    \mathbb{P}_{i}(N_{i,\ell}\geq n)\leq K_{i,\ell}\cdot e^{-b_{i,\ell}n}. 
\end{equation}
Thus, we have
\begin{equation}
\label{exTime}
\mathbb{E}_i[N_{i,\ell}]={(1+o(1))\abs{\log c}}/{I(i)},
\end{equation}
which is independent of $\ell$.
Using (\ref{eq:conc}), we also have that  for all $\epsilon>0$ and $n\geq (1+\epsilon)\abs{\log(c)}/I(i)$ ,
\begin{equation}
\begin{split}
    \mathbb{P}_{i}\bigg(\max_{1\leq \ell\leq L}N_{i,\ell}\geq n\bigg) &\leq \sum_{\ell=1}^{L}\mathbb{P}_{i}(N_{i,\ell}\geq n),\\
    &\leq L K_{i}e^{-b_{i}n},
\end{split}
\end{equation}
where $K_{i}=\max_{\ell}K_{i,\ell}$ and $b_{i}=\min_{\ell} b_{i,\ell}$. For all $r\geq 1$, we have the bound on the $r^{th}$ moment of $\max_{1\leq \ell\leq L}N_{i,\ell}$ ,i.e. 
\begin{equation}\label{SS2DCT}
    \mathbb{E}_i \left [\Big(\max_{1\leq \ell\leq L}N_{i,\ell}\Big)^r \right ]\leq \Bigg((1+o(1))\frac{\abs{\log c} }{I(i)}\Bigg)^{r}.
\end{equation}

Now, we bound the higher moments of $\tau(N_{i,
\ell})$. Let $N^{*}$ be the time instance such that for all $n\geq N^{*}$, the local decision $\hat{H}_{\ell}$ at node $\ell$ is $h^*$, i.e., $\hat{H}_{\ell}={h}^{*}$.  
Using \cite[Lemma 1]{chernoff1959sequential}, there exists $K>0$ and $b>0$ such that
\begin{equation}\label{eq:chernoffTime}
    \mathbb{P}_{i}(N^{*}\geq n)\leq k\exp{(-bn)},
\end{equation}
which implies $\mathbb{P}_{i}(N^{*}<\infty)=1$. Then, node $\ell$ following time $N^{*}$ selects the actions in an i.i.d.\ fashion according to the probability mass function given by (\ref{eq:pmf}). 

Let $G_{n,\ell}$ be the joint cumulative distribution function of the variables $(y_{n,\ell},u_{n,\ell})$ at round $n$ and node $\ell$ for the Chernoff test. Also, let $F_{\ell}$ be the joint cumulative distribution function of $(y_{n,\ell},u_{n,\ell})$ under the true hypothesis $h^*$ when the actions are selected according to $Q_{h^*}^{\ell}$ (see (\ref{eq:pmf})) at each round at sensor~$\ell$. Then, for all $n>N^*$, $G_{n,\ell}=F_{\ell}$. Since $\mathbb{P}_{i}(N^{*}<\infty)=1$, it follows that the distribution $G_{n,\ell}$ converges to $F_{\ell}$. 

Given that for all $n$, $(y_{n},u_{n})\sim F_{\ell}$ are i.i.d.\ random variables, we have that
\begin{equation}\label{eq:111}
    \mathbb{E}_i\Big[\log ( {p_{i,\ell}^{u_{i,\ell}}(y_{k,\ell})}/{\max_{j\neq i}p_{j,\ell}^{u_{i,\ell}}(y_{k,\ell}))}\Big]=v_{i,\ell}>0. 
\end{equation}
Additionally, using (\ref{eq:111}), finiteness of the $r+1^{st}$ moment of log-likelihood ratio for $r\geq 1$, and by Corollary \ref{corr:1}, Lemma \ref{lemma:BoundingTime} and \eqref{eq:AppendixC3} in Appendix \ref{app:C}, we have that 
\begin{equation}\label{eq:MomentFinite}
    \begin{split}
    \mathbb{E}_{i}[\tau(N_{i,\ell}) )^{r}]
    &<\infty,
    \end{split}
\end{equation}
where the expectation is with respect to $F_{\ell}$. 

We now note   that (\ref{eq:MomentFinite}) also holds when the expectation is with respect to $G_{n,\ell}$. To show this claim, first observe that $\mathbb{E}_{i}[\tau(N_{i,\ell}) )^{r}]$ is upper bounded by the two terms at the right hand side of (\ref{eq:timeBound}) in Corollary \ref{corr:1}. The first term is bounded, since the KL-divergence between any two probability measures is finite. 
The second term  can be split into two summations, one for $1\leq n\leq N^*$, and the other for $n \geq N^*+1$. The first summation is finite  since $N^*<\infty$ a.s., and the probability is at most one. By using Lemma \ref{lemma:BoundingTime} in Appendix \ref{app:C} and $G_{n,\ell}=F_{\ell}$,  the second summation is also finite. It follows that (\ref{eq:MomentFinite}) holds for the SCT.

Since $\mathbb{E}[\log ( {p_{i,\ell}^{u_{i,\ell}}(y_{k,\ell})}/{\max_{j\neq i}p_{j,\ell}^{u_{i,\ell}}(y_{k,\ell}))}]^2$ is finite, using (\ref{eq:MomentFinite}), the   term $\mathbb{E}_i[\tau(N_{i,\ell})]$ on the right-hand side of (\ref{SampleSize})  is finite and independent of $c$.
Now, combining equation (\ref{SampleSize}), (\ref{SS2DCT}) and the finiteness of $\mathbb{E}_i[\tau (N_{i,\ell})]$, as $c\to 0$ we get~(\ref{EDCT1}). Thus, part $(ii)$ of the theorem is proved. 

Now, 
\begin{equation}\label{eq:MomentDCT}
    \mathbb{E}^{\mathcal{D}}_{i}[N^{r}]\leq \mathbb{E}_{i}\Big[\Big(\max_{1\leq \ell\leq L} N_{i,\ell}+ \sum_{\ell\in[L]}\tau(N_{i,\ell}) +1\Big)^r\Big].
\end{equation}
The moments of $\sum_{\ell\in[L]}\tau(N_{i,\ell})$ are finite and independent of~$c$.
Hence, the dominant term, dependent on $c$, in the expansion of the right-hand side of (\ref{eq:MomentDCT}) is given only by $\max_{1\leq \ell\leq L} N_{i,\ell}$. It follows that using (\ref{eq:MomentFinite}) and (\ref{SS2DCT}),  as $c\to 0$, we have
\begin{equation}\label{eq:MomentFinalDCT}
\begin{split}
      \mathbb{E}_{i}^{\mathcal{D}}[N^{r}]
      \leq \hspace*{-5pt}\Bigg( (1+o(1)) \frac{\abs{\log c}}{I(i)}\Bigg)^{r},
\end{split}
\end{equation}
which proves part $(iv)$ of the theorem.
\end{proof}

\subsection{Proof of Theorem \ref{mainDCT}}
\label{app:DCTMain}
\begin{proof}
Combining Theorems \ref{converse} and \ref{DCT}, (\ref{RfinalDCT}) and (\ref{EfinalDCT})  follow  immediately. We then turn to the proof of (\ref{eq:average}).

For all $\ell \in [L]$, given that hypothesis $h_i$ is true,  we have that as $c \rightarrow 0$ the probability of incorrect detection tends to zero. It follows that  $\hat{H}=h_{i}$ and
\begin{equation}
    \begin{split}
        \mathbb{E}^{{\cal D}}_{i}[N]&= (1+o(1))\frac{\abs{\log c}}{I(i)},\\
        &=\mathbb{E}^{{\cal D}}_{i}[N_{i,\ell}],
    \end{split}
\end{equation}
 where the last equality follows from (\ref{exTime}). 
 Thus, as $c\to 0$  all the nodes reach the same local decision,   on average, at the same time,  and the average number of messages that each node sends to the fusion center to communicate this local decision is one. It follows that,  as $c \rightarrow 0$ , the total expected communication overhead is four: two in the initialization phase, one to communicate the local decision, and one to receive the halting message.
 \end{proof}
 \subsection{Proof of Theorem \ref{CCT}} 
 \begin{proof}
 Let $B_{n,j}$  be  the set of sample paths where the final decision $\hat{H}$ is initiated in favor of $h_{j}$ at the $n^{th}$ step, and we indicate a single sample path  as $\{(u^{n}_{1},y^{n}_{1})\ldots(u^{n}_{L},y^{n}_{L})\}$. We indicate by $B_{n,j,\ell}$   the set of sample paths  in $B_{n,j}$ corresponding to the $\ell^{th}$ node. $N^{c}$ denotes the time taken to terminate the initialization phase of CCT. Now, we define the two times associated with the test phase of CCT:
\begin{equation}\label{SRsensor_1}
     T_{i,\ell}=
     \inf\left\{n: \log\frac{\mathbb{P}(H^{*}=h_{i^*_{n},}|y^{n+1}_{\ell},u^{n+1}_{\ell})}{\max_{j \neq i^*_{n}}\mathbb{P}(H^{*}=h_{j}|y^{n+1}_{\ell},u^{n+1}_{\ell})} \geq \hat{\rho}^{(N^{c})}_{i,\ell}\,\abs{\log c}\right\} =\inf\left\{n: \sum_{k=1}^{n} \log \frac{p_{i,\ell}^{u_{k,\ell}}(y_{k,\ell})}{\max_{j\neq i}p_{j,\ell}^{u_{k,\ell}}(y_{k,\ell})}\geq \hat{\rho}^{(N^{c})}_{i,\ell}\,\abs{\log c}\right\},  \nonumber
 \end{equation}
 and 
 \[
\tau(T_{i,\ell}) = \sup\Big\{n\ : \hspace*{-5pt }\sum_{k=T_{i,\ell}+1}^{T_{i,\ell}+n} \hspace*{-5pt } \log \frac{p_{i,\ell}^{u_{k,\ell}}(y_{k,\ell})}{\max_{j\neq i}p_{j,\ell}^{u_{k,\ell}}(y_{k,\ell})} < 0\Big\}.
\]

The proof consists of two parts. First, we write 
${\mathbb{P}}^{{\mathcal{ C}}}_i(\hat{H}\neq h_i)$   as the probability of a countable union of disjoint sets of sample paths. An upper bound on  this probability then follows from an upper bound on the probability of these disjoint sets, in conjunction with  the union bound.  Second, $\mathbb{E}_{i}^{{\mathcal{ C}}}[N]$  is dependent on the time required to reach and detect the consensus during the initialization phase, the time required to reach the threshold in \eqref{StoppingRuleDec111} in the test phase, and the time required to reach and detect that the nodes have reached a common preference about a hypothesis in the stopping phase. The stopping time $N$ can be bounded as
\begin{equation}
\begin{split}
    N &\leq N^{c}+\max_{1\leq \ell\leq L}( T_{i,\ell}+\tau(T_{i,\ell}) )+N^{s},
\end{split}
\end{equation}
where $N^{s}$ is the time taken to detect the common preference about the hypothesis in the stopping phase of CCT.

Consider the probability ${\mathbb{P}_i^{\mathcal{C}}}(\hat{H}= h_j)$.
This is the same as the probability of  the countable union of disjoint sets  $B_{n,j}$.
Thus, for $j\neq i$, we can write 
 \begin{align}\label{probBnj}
&{\mathbb{P}}_{i}^{{\mathcal{C}}}(B_{n,j}) \nonumber \\
&= \int_{B_{n,j}}\prod_{\ell=1}^{L}\prod_{k=1}^{n}p_{i,\ell}^{u_{k,\ell}}(y_{k,\ell})\,d{y_{1,\ell}(u_{1,\ell})}\ldots\,d{y_{n,\ell}(u_{n,\ell})} \nonumber \\
& \stackrel{(a)}{=}\prod_{\ell=1}^{L}\int_{B_{n,j,\ell}}\prod_{k=1}^{n}p_{i,\ell}^{u_{k,\ell}}(y_{k,\ell})\,d{y_{1,\ell}(u_{1,\ell})}.....\,d{y_{n,\ell}(u_{n,\ell})} \nonumber \\
& \stackrel{(b)}{\leq} \prod_{\ell=1}^{L}\int_{B_{n,j,\ell}}c^{\hat{\rho}^{(n)}_{j,\ell}}\prod_{k=1}^{n}p_{j,\ell}^{u_{k,\ell}}(y_{k,\ell})\,d{y_{1,\ell}(u_{1,\ell})}\ldots\,d{y_{n,\ell}(u_{n,\ell})} \nonumber \\
& \stackrel{(c)}{\leq} c^{I(i)/(I(i)+c)}\cdot\nonumber \\
&\qquad\prod_{\ell=1}^{L}\int_{B_{n,j,\ell}}\prod_{k=1}^{n}p_{j,\ell}^{u_{k,\ell}}(y_{k,\ell})\,d{y_{1,\ell}(u_{1,\ell})}\ldots\,d{y_{n,\ell}(u_{n,\ell})}\nonumber \\
& = c^{I(i)/(I(i)+c)}\prod_{\ell=1}^{L}{\mathbb{P}_{j}^{\mathcal{ C}}}(\hat{H}=h_j \mbox{ at sample } n \mbox{ at } \ell^{th} \mbox{ sensor}) \nonumber \\
&= c^{I(i)/(I(i)+c)} \mathbb{P}^{\mathcal{C}}_{j}(\hat{H}=h_j \mbox{ at sample } n),
\end{align}
where $(a)$ follows from the definition of $B_{n,j,\ell}$; $(b)$ follows from the definition of $T_{i,\ell} $; $(c)$ follows from the fact that using Theorem \ref{th:1} and \eqref{eq:error2}, $I(i)/(I(i)+c)\leq\sum_{\ell=1}^{L} \hat{\rho}^{(n)}_{j,\ell}\leq I(i)/(I(i)-c)$ and $c<1$. Now, we can bound $\mathbb{P}^{{\cal C}}_i(\hat{H}\neq h_i)$ as follows
\begin{align}\label{eq:CCTPRobError2}
&{\mathbb{P}}^{{\mathcal{C}}}_i(\hat{H}\neq h_i) = \sum_{j\neq i}{\mathbb{P}}^{{\mathcal{C}}}_i(\hat{H} = h_j)= \sum_{j\neq i}\sum_{n= 1}^{\infty}{\mathbb{P}_{i}^{\mathcal{C}}}(B_{n,j}) \nonumber \\
&\leq \sum_{j\neq i}\sum_{n=1}^{\infty} c^{I(i)/(I(i)+c)} \,  {\mathbb{P}}_{j}^{{\mathcal{C}}}(\hat{H}=h_j \mbox{ at sample } n) \nonumber \\ 
&=  c^{I(i)/(I(i)+c)}  \,(M-1),
\end{align}
where the inequality in the chain follows by~(\ref{probBnj}).
This proves part $(i)$ of the theorem. 

Let us bound the time required to terminate the initialization phase, i.e., $N^{c}$. Since matrix $W$ in (\ref{eq:Consensus2}) is row stochastic (see (\ref{eq:SC22})) and the graph $\mathcal{G(N,E)}$ is connected,  the ergodic coefficient $\eta(W)\in (0,1)$ (see Lemma \ref{lemma:ErgodicCoeff}). It follows from \cite{dong2017flocking} that for all $k,n \in \mathbb{N}$ and $\ell,j\in [L]$, we have
\begin{equation}
\label{eq:ConvRate}
e^{k+n}_{\ell,j}\preceq \big(1-\eta(W^{n})\big)e^{k}_{\ell,j},
\end{equation}
where $e^{k}_{\ell,j} = \abs{\hat{I}^{k}_{\ell}-\hat{I}^{k}_{j}}$. Now, if the initialization phase reaches uniformly local $c$-consensus at time instance $k_{0}$, then for all $\ell,j\in[L]$, we have  (see (\ref{eq:error2}))
\[e^{k_{0}}_{\ell,j} \preceq c\cdot\textbf{1}_{1\times M}.\]
 Thus, there exists $k^\prime\in \mathbb{N}$ such that $h^{\mathcal{G}}\cdot k^\prime\leq k_{0}\leq h^{\mathcal{G}}\cdot( k^\prime+1)$. Using (\ref{eq:ConvRate}), for all $\ell,j\in[L]$, we have
\begin{align}
\label{eq:Ph1time1}
    e^{k_{0}}_{\ell,j}&\preceq e^{h^{\mathcal{G}}\cdot k^\prime}_{\ell,j}
    \stackrel{(a)}{\preceq} \Big(1-\eta\big(W^{h^{\mathcal{G}}}\big)\Big)^{k^\prime} e^{0}_{\ell,j}\nonumber\\
    &\stackrel{(b)}{\preceq} \Big(1-\eta\big(W^{h^{\mathcal{G}}}\big)\Big)^{k^\prime} I,
\end{align}
where (a) follows from $\hat{I}^{h^{\mathcal{G}}\cdot k^\prime}=W^{h^{\mathcal{G}}}\cdot \hat{I}^{h^{\mathcal{G}}\cdot (k^\prime-1)}$ and  Lemma~\ref{lemma:ErgodicCoeff}, and (b) follows from the fact that for all $\ell,j \in [L]$, $e^{0}_{\ell,j}\preceq I$. Since
for all $\ell,j\in [L]$, $e^{k_{0}}_{\ell,j} \preceq c\cdot \textbf{1}_{1\times M}$,  using (\ref{eq:Ph1time1})  we have
\begin{equation}
    \Big(1-\eta(W^{h^{\mathcal{G}}})\Big)^{k^\prime}\cdot I\preceq c,
\end{equation}
\begin{equation}
    k^{'}\leq \frac{\log(c/\max_{j\in [L]}I(j))}{\log\big(1-\eta(W^{h^{\mathcal{G}}})\big)}.
\end{equation}
Since $k_{0} \leq h^{\mathcal{G}}(k^\prime+1)$, we have
\begin{equation}
    \begin{split}
        k_{0}
        &\leq h^{\mathcal{G}}\Bigg(\frac{\log(c/\max_{j\in [L]}I(j))}{\log\big(1-\eta(W^{h^{\mathcal{G}}})\big)} +1\Bigg).
    \end{split}
\end{equation}
Now, let $k_{d}$ be the time to detect the   local $c$-consensus. From \cite{xie2017stop}, we have 
\begin{equation}\label{eq:timeToDetectConsensus}
    k_{d} \leq h^{\mathcal{G}}\Bigg(\frac{-\log(d^{\mathcal{G}})}{\log\big(1-\eta(W^{h^{\mathcal{G}}})\big)}+1\Bigg)+L+1.
\end{equation}
Now, the time $N^c$ for initialization phase is bounded as follows
\begin{align}
\label{eq:Ph1}
        N^{c} &\leq k_{0}+k_{d}\nonumber\\
                 &\leq h^{\mathcal{G}}\Bigg(\frac{\log(c/\max_{j\in [L]}I(j))}{\log\big(1-\eta(W^{h^{\mathcal{G}}})\big)} +1\Bigg)\nonumber\\
     &\qquad+h^{\mathcal{G}}\Bigg(\frac{-\log(d^{\mathcal{G}})}{\log\big(1-\eta(W^{h^{\mathcal{G}}})\big)}+1\Bigg)+L+1.
\end{align}

The expected time of the test phase of CCT is bounded above by $\mathbb{E}_i \left [\max_{1\leq \ell\leq L}( T_{i,\ell}+\tau(T_{i,\ell}) ) \right ]$. 
Using (\ref{SS2DCT}) and (\ref{eq:MomentFinite}) of Theorem \ref{DCT}, and the fact that $|\hat{I}_{\ell}(i)-I(i)|\leq c$ (see using Theorem \ref{th:1} and \eqref{eq:error2}), as $c\to 0$, we have 
\begin{eqnarray}
\label{eq:Ph2}
\mathbb{E}_i \left [\max_{1\leq \ell\leq L}( T_{i,\ell}+\tau(T_{i,\ell}) ) \right ] \leq \frac{\abs{\log c}}{I(i)-c}(1+o(1)). 
\end{eqnarray}

Now, we compute the time for the decision phase of CCT. 
The network will reach  the final decision for all $n> \max_{1\leq \ell\leq L}\tau(T_{i,\ell}) + k_{r}$, where $k_{r}$ is the time taken by the termination message $m_{t}^{(3)}$ to reach every node after its initiation at any node. Thus, the time  $N^{s}$ of the decision phase  is  bounded above as
\[N^{s} \leq \max_{1\leq \ell\leq L}\tau(T_{i,\ell}) + k_{r}. \]
Therefore, we have
\begin{equation}
\label{eq:Ph3}
  \mathbb{E}_{i}[ N^{s}] \leq \sum_{\ell=1}^{L}\mathbb{E}_{i}[ \tau(N_{i,\ell})] + \mathbb{E}_{i}[ k_{r}].
\end{equation}
Using (\ref{eq:MomentFinite}), the   term $\mathbb{E}_{i}[\tau(T_{i,\ell})]$ at the right-hand side of (\ref{eq:Ph3})  is finite and independent of $c$.
Additionally, $k_{r} <d^{\mathcal{G}}+1$. Thus, $\mathbb{E}_i[ N^{s}]$ is finite and independent of $c$.

Combining equations (\ref{eq:Ph1}), (\ref{eq:Ph2}), and the finiteness of  $\mathbb{E}_i[ N^{s}]$, we get that~(\ref{EDCT}) holds as $c\to 0$, proving part $(ii)$ of the theorem.

Now we derive the bounds for the higher moments of the decision time $N$. We have
\begin{align}
\label{eq:momentProof}
    N &\leq   N^{c}+\max_{1\leq \ell\leq L}( T_{i,\ell}+\tau(T_{i,\ell}) )+N^{s},\nonumber\\
    &\leq N^{c}+\max_{1\leq \ell\leq L}( T_{i,\ell})+ 2\max_{1\leq \ell\leq L}\tau(T_{i,\ell}) + k_{r},\nonumber\\
    &\leq N^{c}+\max_{1\leq \ell\leq L}( T_{i,\ell})+ 2\sum_{\ell\in[L]}\tau(T_{i,\ell}) + k_{r}.
\end{align}
Now, we present the bound on the $r^{th}$ moment of each term in the right-hand side of (\ref{eq:momentProof}). Using (\ref{eq:Ph1}), $N^{c}$ is bounded above by a constant. As $c\to 0$, we have
\begin{equation} \label{eq:MomentPh1}
\begin{split}
    (N^{c})^{r}&\leq 
    \Bigg((1+o(1))\frac{h^{\mathcal{G}}\log(c/\max_{j\in [L]}I(j))}{\log(1-\eta(W^{h^{\mathcal{G}}})}\Bigg)^{r}.
\end{split}
\end{equation}
Using (\ref{SS2DCT}) and the fact that $|\hat{I}_{\ell}(i)-I(i)|\leq c$, we have
\begin{align}\label{SS2}
\mathbb{E}_i \left [\max_{1\leq \ell\leq L}T_{i,\ell}^{r} \right ] 
&= \Bigg((1+o(1))\frac{\abs{\log(c)}}{I(i)-c}\Bigg)^{r}. 
\end{align}
Using (\ref{eq:MomentFinite}), the higher moments of the third term in the right-hand side of  (\ref{eq:momentProof}) are finite and independent of $c$ by definition of $\tau(T_{i,\ell})$. 
Additionally, $k_{r}\leq L+1<\infty$.
Now, 
\begin{equation}\label{eq:Moment}
    \mathbb{E}^{\mathcal{C}}_{i}[N^{r}]\leq \mathbb{E}_{i}\Big[N^{c}+\max_{1\leq \ell\leq L}( T_{i,\ell})+ 2\sum_{\ell\in[L]}\tau(T_{i,\ell}) + k_{r}\Big]^{r}.
\end{equation}
The moments of $\sum_{\ell\in[L]}\tau(T_{i,\ell}) + k_{r}$ are finite and independent of $c$.
The dominant terms, dependent on $c$, in the expansion of the right-hand side of (\ref{eq:Moment}) depend only on $N^{c}+\max_{1\leq \ell\leq L}( T_{i,\ell})$. Therefore, as $c\to 0$, we have
\begin{equation}\label{eq:MomentFinal}
\begin{split}
      &\mathbb{E}_{i}^{\mathcal{C}}[N^{r}]\\&
      \leq \hspace*{-5pt}\Bigg( (1+o(1)) \Bigg(\frac{h^{\mathcal{G}}\cdot \log(c/\max_{j\in [L]} I(j))}{\log (1-\eta(W^{h^{\mathcal{G}}}))}+\frac{\abs{\log c}}{I(i)-c}\Bigg)\Bigg)^{r},
\end{split}
\end{equation}
which proves part $(iv)$ of the theorem.
 \end{proof}
 \subsection{Proof of Theorem \ref{main2}}
\label{app:CCTMain}
\begin{proof}
Combining Theorems \ref{converse} and \ref{CCT}, (\ref{RfinalCCT}) and (\ref{EfinalCCT})  follow  immediately. 
\end{proof}
 \subsection{Decision Phase of CCT}\label{app:Phase3CCT}
\begin{lemma}\label{Lemma:stoppingrule}
If $d_{\ell}^{N}>L+1$, then there exists a time $k\leq N$ at which the local decision of all the nodes are the same, i.e., $\min_{j\in[L]}x_{j}^{k} \geq 1$. This decision is the same as the local decision $\hat{H}_{\ell}^{N}$ of node $\ell$ at time   $N$.
\end{lemma}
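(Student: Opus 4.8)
The plan is to treat $x_\ell^{(n)}$ as a local \emph{agreement counter} and $d_\ell^{(n)}$ as a quantity that percolates the minimum of these counters across the graph at the rate of one hop per time step. The whole argument rests on a single \emph{one-step unrolling} of the recursion~(\ref{eq:bit1}): since the minimum defining $d_\ell^{(n)}$ ranges over $x_\ell^{(n-1)}$ and over $d_{j}^{(n-1)}$ for every $j \in \mathcal{N}_\ell \cup \{\ell\}$, the inequality $d_\ell^{(n)} \geq m$ forces both $x_\ell^{(n-1)} \geq m-1$ and $d_{j}^{(n-1)} \geq m-1$ for all $j \in \mathcal{N}_\ell \cup \{\ell\}$. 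I would isolate this implication first, as it is applied repeatedly.

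Next I would run a backward induction on a coupled time/ball pair. Writing $\mathrm{dist}(\ell,j)$ for the graph distance in $\mathcal{G(L,E)}$, the claim is that for every $t \geq 0$ and every $j$ with $\mathrm{dist}(\ell,j) \leq t$ one has $d_j^{(N-t)} \geq L+2-t$. The base case $t=0$ is exactly the hypothesis $d_\ell^{(N)} > L+1$ (integer values give $\geq L+2$). For the step, any $j'$ at distance $\leq t+1$ lies in $\mathcal{N}_j \cup \{j\}$ for some $j$ at distance $\leq t$; applying the one-step unrolling to $d_j^{(N-t)} \geq L+2-t$ yields $d_{j'}^{(N-t-1)} \geq L+1-t = L+2-(t+1)$, closing the induction. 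Because the network is strongly connected its diameter is at most $L$, so taking $t=L$ covers \emph{every} node: $d_j^{(N-L)} \geq 2$ for all $j \in [L]$, and one more unrolling gives $x_j^{(N-L-1)} \geq 1$ for all $j$. Setting $k = N-L-1$ establishes $\min_{j\in[L]} x_j^{(k)} \geq 1$. The indices stay non-negative because the recursion gives $d_\ell^{(n)} \leq d_\ell^{(n-1)}+1$, so $d_\ell^{(N)} \geq L+2$ forces $N \geq L+2$ and hence $k \geq 1$.

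It remains to identify the common decision. For each $j$, $x_j^{(k)} \geq 1$ rules out the ``otherwise'' branch of~(\ref{eq:bit2}), so at time $k$ sensor $j$ holds a non-null decision equal to those of all its neighbors; traversing edges and using connectivity of $\mathcal{G(L,E)}$ propagates this equality to a single hypothesis $H^*$ with $\hat{H}_j^{(k)} = H^*$ for all $j$. Finally the \emph{stability} component of the counter at $\ell$ ties this back to the stopping time: from $d_\ell^{(N)} \geq L+2$ the unrolling already gave $x_\ell^{(N-1)} \geq L+1$, and by~(\ref{eq:bit2}) a value $x_\ell^{(N-1)} = p$ means $\hat{H}_\ell$ was constant over the window $[N-p,N-1] \supseteq [k,N-1]$; hence $\hat{H}_\ell^{(k)} = \hat{H}_\ell^{(N-1)}$, so $H^* = \hat{H}_\ell^{(N-1)}$, which coincides with the declared final decision $\hat{H}_\ell^{(N)}$ since the triggering condition~(\ref{StoppingRuleDec1}) remains satisfied through the stopping step.

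The main obstacle I anticipate is the bookkeeping in the backward induction --- keeping the three coupled shifts synchronized (the counter threshold dropping by one, the ball radius growing by one hop, and the time index moving back by one) and verifying that the required time indices never become negative. Once that coupling is stated cleanly, the connectivity argument and the stability argument are routine.
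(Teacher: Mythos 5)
Your proof is correct in substance and, for the first claim, follows the same percolation idea as the paper: the paper likewise unrolls the recursion~(\ref{eq:bit1}) one hop and one time step at a time, obtaining $d_j^{(N-s_{\ell,j})}>L-s_{\ell,j}+1$ and hence $x_j^{(N-d^{\mathcal{G}}-1)}\geq 1$ for all $j$; you do the same induction but bound the radius by $L$ rather than $d^{\mathcal{G}}$, landing at $k=N-L-1$ instead of $k=N-d^{\mathcal{G}}-1$ (both choices are valid, the paper's window is tighter). You also make explicit a point the paper glosses over, namely that local pairwise agreement everywhere plus connectivity of $\mathcal{G(L,E)}$ yields a single common hypothesis. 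Where you genuinely diverge is the second claim. The paper argues by contradiction: if the common decision at time $N-d^{\mathcal{G}}-1$ differed from $\hat{H}_\ell^{(N)}$, the decision at $\ell$ must flip at some $n$ in the intervening window, forcing $x_\ell^{(n)}\leq 1$, hence $d_\ell^{(n+1)}\leq 2$ and $d_\ell^{(N)}\leq 2+N-n-1<L+2$, contradicting the hypothesis. You instead argue directly: the unrolling gives $x_\ell^{(N-1)}\geq L+1$, and by~(\ref{eq:bit2}) this counter value certifies that $\hat{H}_\ell$ was constant over $[N-L-1,N-1]\supseteq[k,N-1]$. Your direct route is cleaner and avoids case analysis, but note that both arguments share the same soft spot at the very last step: tying $\hat{H}_\ell^{(N-1)}$ to $\hat{H}_\ell^{(N)}$ (equivalently, excluding a decision flip exactly at the stopping instant $n=N$, where the paper's inequality $d_\ell^{(N)}\leq d_\ell^{(n+1)}+N-n-1$ degenerates). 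Your appeal to ``the triggering condition remains satisfied through the stopping step'' is the right intuition but should be pinned down by the protocol's update order --- $d_\ell^{(N)}$ is computed from $x_\ell^{(N-1)}$, so one must either fix the convention that the declared decision is the one held when the counter was last incremented, or observe that a flip at time $N$ would reset $x_\ell^{(N)}$ and hence could not have produced $d_\ell^{(N)}>L+1$. With that timing convention made explicit, your argument is complete.
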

\begin{proof}
At time   $N$ and node $\ell$, if $d_{\ell}^{N}>L+1$, then for all $k\in\mathcal{N}_{\ell}$, $d_{k}^{N-1}>L$ and $x_{k}^{N-2}\geq L$. If the shortest distance between the node $\ell$ and $j$ is $s_{\ell,j}$, then $d_{j}^{N-s_{\ell,j}}>L-s_{\ell,j}+1 $. Thus, for all $j\in [L]$, $d_{j}^{N-d^{\mathcal{G}}}>d_{j}^{N-s_{\ell,j}}+s_{\ell,j}-d^{\mathcal{G}}>1$ as $s_{\ell,j}\leq d^{\mathcal{G}}\leq L$. This implies, for all $j\in [L]$, $x_{j}^{N-d^{\mathcal{G}}-1}\geq 1$. Thus, the first statement of the claim follows.

Now, we  prove the second statement by contradiction. Let the decision at time   $N-d^{\mathcal{G}}-1$, for all $j\in[L]$, be $\hat{H}_{j}^{N-d^{\mathcal{G}}-1}=h^{\prime}$ which is different from $\hat{H}_{\ell}^{N}$. At sensor $\ell$, let the decision change from $h^{\prime}$ to $\hat{H}_{\ell}^{N}$ at time   $n$. Then,
\[N-d^{\mathcal{G}}-1<n\leq N.\]
Therefore,
\[x_{\ell}^{n}=1,\]
which implies
\[d_{\ell}^{n+1}\leq 2.\]
Now,
\begin{align}
d_{\ell}^{N}&\leq d_{\ell}^{n+1}+N-n-1\nonumber\\
&\leq 2+N-n-1\nonumber\\
&<2+d^{\mathcal{G}}\nonumber\\
&\leq 2+L.
\end{align}
However, $d_{\ell}^{N}\geq L+2$ by the statement of the Lemma. Hence, by contradiction, we conclude that the second statement of our claim holds.
\end{proof}
\subsection{Proof of Theorem \ref{DCT_quantized}}
\begin{proof}
The proof of the theorem is exactly along the same lines as the proof of Theorem \ref{DCT}. The key difference lies in the computation of the constant ${\rho_{i,\ell}}$. Due to quantization into $Q$ sub-intervals, we have
\begin{equation}
   v_{i,\ell}-\Delta(\max_i I(i),Q)\leq \floor{v_{i,\ell}}\leq v_{i,\ell},
\end{equation}
which implies 
\begin{equation}\label{eq:quantizedV}
   v_{i,\ell}-f(Q)/L \leq \floor{v_{i,\ell}}\leq v_{i,\ell}.
\end{equation}
Using \eqref{eq:quantizedV}, $\floor{I(i)}$ can be bounded as follows:
\begin{equation}\label{eq:quantizedI}
   I(i)-f(Q) \leq \floor{I(i)}\leq I(i),
\end{equation}
which implies that  $\rho_{i,\ell}$ in (\ref{eq:newrho}) verifies
\begin{equation}\label{eq:quantizationrho}
\begin{split}
       \frac{v_{i,\ell}}{I(i)}\leq {\rho_{i,\ell}} &\leq \frac{v_{i,\ell}}{I(i)-f(Q)}.\\
\end{split}
\end{equation}

For part $(i)$,  using the lower bound from (\ref{eq:quantizationrho}) in (\ref{probAnj1}), we have 
\begin{equation}
   {\mathbb{P}}_{i}^{{\cal D}}(A_{n,j})\leq c^{\sum_{\ell}v_{i,\ell}/I(i)} \, \mathbb{P}^{\mathcal{D}}_{j}(\hat{H}=h_j \mbox{ at sample } n)= c \, \mathbb{P}^{\mathcal{D}}_{j}(\hat{H}=h_j \mbox{ at sample } n).
\end{equation}
Now, the result in part $(i)$ follows similar to
(\ref{eq:proberror2}). 

For part $(ii)$, $(iii)$ and $(iv)$, since $C>\log_2 M$, the local decisions can be communicated at each time instance. Using \eqref{eq:quantizedI} and $f(Q)\leq I(i)$, for all $r\geq 1$, (\ref{SS2DCT}) becomes 
\begin{equation}
    \mathbb{E}_i \left [\Big(\max_{1\leq \ell\leq L}N_{i,\ell}\Big)^r \right ]\leq \Bigg((1+o(1))\frac{\abs{\log c} }{{\floor{I(i)}}}\Bigg)^{r}\leq\Bigg((1+o(1))\frac{\abs{\log c} }{{I(i)}-f(Q)}\Bigg)^{r}.
\end{equation}
Now, similar to  \eqref{eq:MomentFinalDCT}, we have
\begin{equation}
\begin{split}
      \mathbb{E}_{i}^{\mathcal{D}}[N^{r}]
      \leq \hspace*{-5pt}\Bigg( (1+o(1)) \frac{\abs{\log c}}{I(i)-f(Q)}\Bigg)^{r}.
\end{split}
\end{equation}
Hence, part $(ii)$, $(iii)$ and $(iv)$ follows. 
\end{proof}
\subsection{Proof of Theorem \ref{CCT_quantized}}
\begin{proof}
The proof of the theorem is   along the same lines as Theorem \ref{CCT}. The key difference lies in the computation of the constant $\hat{\rho}_{i,\ell}$. 

Firstly, we will upper bound and lower bound $\hat{\rho}_{i,\ell}$ in terms of $I(i)$ and $g(\tilde Q,c,\alpha)$. 
Since Assumptions \ref{assumption:capacity} and  \ref{assumption:initialValue} hold, and the graph $\mathcal{G}$ is strongly connected, 
using \cite[Proposition 5]{nedic2009distributed}, { the time $k_0$ to reach \emph{local} $c$-consensus}  is 
\begin{equation}\label{eq:consensus_quantized_1}
    k_0\leq 2\frac{L^2}{\alpha}\log(\min(\tilde Q^2,L^4/c^2)\cdot \max_j I^2(j))+1.
\end{equation}
 Using $C>\log_2(L+2)$ and \eqref{eq:timeToDetectConsensus}, time $k_d$ to detect the consensus is 
\begin{equation} \label{eq:consensus_quantized_2}
    k_{d} \leq h^{\mathcal{G}}\Bigg(\frac{-\log(d^{\mathcal{G}})}{\log\big(1-\eta(W^{h^{\mathcal{G}}})\big)}+1\Bigg)+L+1.
\end{equation}

Now, using \cite[Proposition 7]{nedic2009distributed} and the fact that the average decreases by at most $1/\tilde Q$ in each iteration of consensus, for all $i\in [M]$ and $\ell \in [L]$, the error in estimation $\floor{\hat{I}_{\ell}(i)}$ at the end of initialization phase is at most
\begin{align}\label{eq:estimationErrorCCT}
       |\floor{\hat{I}_{\ell}(i)}-I(i)|&\leq \frac{L}{\tilde Q}(k_0+k_d),\nonumber\\
       &\leq \frac{L}{\tilde Q}\bigg( 2\frac{L^2}{\alpha}\log(\min(\tilde Q^2,L^4/ c^2)\cdot \max_j I^2(j))+1+h^{\mathcal{G}}\Bigg(\frac{-\log(d^{\mathcal{G}})}{\log\big(1-\eta(W^{h^{\mathcal{G}}})\big)}+1\Bigg)+L+1\bigg),\nonumber\\
       &=g(\tilde Q,c,\alpha).
\end{align}
This implies 
\begin{equation}\label{eq:errorInEstCCTquantized}
    \frac{v_{i,\ell}}{I(i)+g(\tilde Q,c,\alpha)} \leq \hat{\rho}_{i,\ell}\leq \frac{v_{i,\ell}}{I(i)-g(\tilde Q,c,\alpha)}.
\end{equation}

Thus, for part $(i)$, using the lower bound from \eqref{eq:errorInEstCCTquantized} in (\ref{probBnj}), we have
\begin{align}
    {\mathbb{P}}_{i}^{{\mathcal{C}}}(B_{n,j})\leq c^{I(i)/(I(i)+g(\tilde Q,c,\alpha))} \mathbb{P}^{\mathcal{C}}_{j}(\hat{H}=h_j \mbox{ at sample } n).
\end{align}
The result in part $(i)$ follows similar to (\ref{eq:CCTPRobError2}). 

For part $(ii)$, $(iii)$ and $(iv)$, 
the time required in the initialization phase is at most $k_0+k_d$ and can be bounded using
\eqref{eq:consensus_quantized_1} and \eqref{eq:consensus_quantized_2}. For test phase, using \eqref{eq:errorInEstCCTquantized} and $g(\tilde Q,c,\alpha)<I(i)$, for all $r\geq 1$, (\ref{SS2}) becomes 
\begin{align}
\mathbb{E}_i \left [\max_{1\leq \ell\leq L}T_{i,\ell}^{r} \right ] 
&= \Bigg((1+o(1))\frac{\abs{\log(c)}}{\min_{\ell\in [L]}\hat{I}_{\ell}(i)}\Bigg)^{r},\nonumber\\
&\leq \Bigg((1+o(1))\frac{\abs{\log(c)}}{{I}(i)-g(\tilde Q,c,\alpha)}\Bigg)^{r}
\end{align}
For decision phase, 
since $C>\log_2 (L+2)+\log_2 M$, the local decisions and $d_{\ell}^n$ can be communicated at each time instance. Hence, the time $N^{s}$ of decision phase is finite from Theorem \ref{CCT}. Similar to (\ref{eq:MomentFinal}), the result follows by combining the time for all the three phases of CCT. 
\end{proof}
\subsection{Proof of Theorem \ref{DCT_quantized_linkFailure}}
\begin{proof}
For part $(i)$, since the vectors $v_{\ell}$ and $I$ are communicated using $Q$ levels of quantization, the proof is exactly same as that of  part $(i)$ in Theorem \ref{DCT_quantized}. 

For part $(ii)$, $(iii)$ and $(iv)$, the additional delays in comparison to the setting in Theorem \ref{DCT_quantized} are the time to communicate the vectors $\floor{v_{\ell}}$ to the fusion center,
the time to communicate vector $\floor{I}$ to the nodes, and time to make a final decision given the same preferences about the hypothesis are reached at the nodes. Since each link is active with probability $1-\epsilon$, the expected time to communicate the vectors $\floor{v_{\ell}}$ and $\floor{I}$ is at most 
\begin{equation}\label{eq:check11}
    \frac{2L}{1-\epsilon}.
\end{equation}
Given that all the local preferences are reached at the nodes, i.e., $n>\max_{i}\tau(N_{i,\ell})$, the probability that all these preferences are received at the same time instances at the fusion center is $(1-\epsilon)^{L}$, which corresponds to all the links being active at the same time. The expected decision time following $n>\max_{i}\tau(N_{i,\ell})$ is 
\begin{equation}\label{eq:check12}
    \frac{1}{(1-\epsilon)^{L}}.
\end{equation}
Combining the delays in (\ref{eq:check11}) and (\ref{eq:check12}), and  Theorem \ref{DCT_quantized}, the statement of the theorem follows.
\end{proof}
\subsection{Proof of Lemma \ref{thm:propertiesOfW}}
\begin{proof}
For part $(i)$, for all $\ell\in[L]$,
\begin{align}
        \sum_{j=1}^{L} w_{\ell,j}(n)
        &=w_{\ell,\ell}(n)+\sum_{j\neq \ell} w_{\ell,j}(n) \nonumber\\
        &=1-\beta \sum_{j\neq \ell}\mathbf{1}((j,\ell)\in\mathcal{E}(n))+\beta \sum_{j\neq \ell}\mathbf{1}((j,\ell)\in\mathcal{E}(n))\nonumber\\
        &=1.
\end{align}
Since $w_{i,j}(n)=w_{j,i}(n)$, we have 
\begin{equation}
    \sum_{\ell=1}^{L} w_{\ell,j}(n)=1.
\end{equation}
Hence, $W(n)$ is doubly stochastic. 

For part $(ii)$, for all $(i,j)\in \mathcal{E}(n)$, we have
\begin{equation}
    w_{i,j}(n)\geq \min{(\beta,1-\beta \sum_{\ell\neq i}\mathbf{1}((i,\ell)\in\mathcal{E}(n)))}. 
\end{equation}
Thus, for all $(i,j)\in \mathcal{E}(n)$, we have 
\begin{equation}
    w_{i,j}(n)>\min{(1-\mathcal{D(G)}\beta,\beta)},
\end{equation}
where  $\mathcal{D(G)}=\max_{s}\sum_{j\neq s}\mathbf{1}((j,s)\in \mathcal{E})$. 

For part $(iii)$, note that the eigenvalues of $\Bar{L}(n)$ are non-negative {and recall that the sum of the diagonal elements of $\Bar{L}(n)$ is 
equal to the sum of the its eigenvalues~\cite{Johnson-Horn}}. Let $\lambda$ is an eigenvalue of $\Bar{L}(n)$. Then,
\begin{equation}
    \lambda \leq 2 |\mathcal{E}|,
\end{equation}
because $|\mathcal{E}(n)|\leq |\mathcal{E}|$. The eigenvalues of $W(n)$ are of the form
$1-\beta \lambda$. Since $0<\lambda\leq 2|\mathcal{E}|$, then for all $0<\beta<1/(2|\mathcal{E}|)$, we have
\begin{equation}
    0<1-\beta \lambda<1,
\end{equation}
which implies $R(W(n))<1$. To show (\ref{eq:spectralRadiusCheck}), let $\Bar{\lambda}$ be an eigenvalue of $W(n)-{(\textbf{{1}}_{L\times1}\cdot \textbf{{1}}_{1\times L})}/{L}$ and not an eigenvalue of $W(n)$. Then
\begin{align}
        \mbox{det}\bigg(\Bar{\lambda}U_{L\times L}-W(n)+\frac{\textbf{{1}}_{L\times1}\cdot \textbf{{1}}_{1\times L}}{L}\bigg)&\stackrel{(a)}{=} \mbox{det}(\Bar{\lambda}U_{L\times L}-W(n))
       \cdot\bigg(1 + \frac{\textbf{{1}}_{1\times L}\cdot (
       \Bar{\lambda}U_{L\times L}-W(n))^{-1}\cdot \textbf{{1}}_{L\times1} }{L}\bigg) \nonumber\\
       &\stackrel{(b)}{=}\mbox{det}(\Bar{\lambda}U_{L\times L}-W(n)) \bigg(1 + \frac{\textbf{{1}}_{1\times L}\cdot \textbf{{1}}_{L\times 1}}{(\Bar{\lambda}-1)L}\bigg)\nonumber\\
       &{=}\mbox{det}(\Bar{\lambda}U_{L\times L}-W(n))\bigg(1 + \frac{1}{(\Bar{\lambda}-1)}\bigg),
\end{align}
where $\mbox{det}(\cdot)$ denotes the determinant of a matrix, {In the above equation, $(a)$} follows from the fact that $(\Bar{\lambda}U_{L\times L}-W(n))$ is non-singular because $\Bar{\lambda}$ is not an eigenvalue of $W(n)$, and {exploits the matrix determinant lemma:} if $A$ is a non-singular matrix of dimension $L\times L$ and $c$ and $d$ are column vectors of dimension $L\times 1$, then~\cite{Johnson-Horn}
\begin{equation}
    \mbox{det}(A+cd^T)=\mbox{det}(A)(1+d^TA^{-1}c).
\end{equation}
{Also, in the above,} $(b)$ follows from the fact $(\Bar{\lambda}U_{L\times L}-W(n))$ is non-singular and doubly stochastic, which implies
\begin{equation}
\begin{split}
      (\Bar{\lambda}U_{L\times L}-W(n))\textbf{{1}}_{L\times1}&= \Bar{\lambda}\textbf{{1}}_{L\times1} - \textbf{{1}}_{L\times1} \\
      &=(\Bar{\lambda}-1)\textbf{{1}}_{L\times1}.
\end{split}
\end{equation}

Since $\Bar{\lambda}$ is an eigenvalue of $W(n)-{(\textbf{{1}}_{L\times1}\cdot \textbf{{1}}_{1\times L})}/{L}$, we have
\begin{equation}
    \bigg(1 + \frac{1}{(\Bar{\lambda}-1)}\bigg)=0,
\end{equation}
which implies $\Bar{\lambda}=0$. Combining the facts that $\Bar{\lambda}<1$ and $R(W(n))<1$, the claim in $(iii)$ follows.
\end{proof}
\subsection{Proof of Theorem \ref{CCT_quantized_linkFailure}}\label{app:CCT_quantized_linkFailure}
\begin{proof}
The proof of the theorem is along the same lines as the proof of Theorem \ref{CCT_quantized}. The key difference is that, unlike $\hat{\rho}_{i,\ell}$, in this case $\hat{\rho}^{\epsilon}_{i,\ell}$ is a random variable, and the randomness is introduced by the time-varying topology of the network due to $\epsilon$-random packet erasures. 

We derive the upper and lower bound on $\hat{\rho}^{\epsilon}_{i,\ell}$ with high probability in terms of $I(i)$ and $h(\tilde Q,c,\min{(1-\mathcal{D(G)}\beta,\beta)},\epsilon)$. First, we establish that for all $n$, $W(n)$ satisfies Assumption \ref{assumption:capacity}. Second, we show that for all $n$, the resulting graph  $\mathcal{G(V, E}(n))$ is strongly connected with probability at least $1-\mathcal{|E|}\epsilon$. Using these two results, similar to Theorem \ref{CCT_quantized}, we bound the time to reach consensus $K_0+K_d$ which is now a random variable (see (\ref{eq:consensus_quantized_1}) and (\ref{eq:consensus_quantized_2})) and the estimation error (see (\ref{eq:estimationErrorCCT})). The rest of the  proof is similar to that of Theorem \ref{CCT_quantized}.


For all $n$, given an edge $e\in \mathcal{E}$, the probability that $e\notin \mathcal{E}(n)$ is $\epsilon$, as the link failures are independent and identically distributed across time and independent of other links. Thus, the probability that the graph $\mathcal{G(V, E}(n))$ is strongly connected is
\begin{align}\label{eq:connected graph1}
    \mathbb{P}(\mathcal{G(V, E}(n)) \mbox{ is strongly connected})&\geq \mathbb{P}( \mbox{for all } e\in \mathcal{E}, e\in \mathcal{E}(n))\nonumber\\
    &\geq (1-\epsilon)^{|\mathcal{E}|}\nonumber\\
    &\geq 1-|\mathcal{E}|\epsilon.
\end{align}

Since Assumptions \ref{assumption:capacity} and  \ref{assumption:initialValue} hold, and the graph $\mathcal{G(V, E}(n))$ is strongly connected with probability at least $ 1-|\mathcal{E}|\epsilon$, 
using Lemma \ref{thm:propertiesOfW},\cite[Proposition 5]{nedic2009distributed} and (\ref{eq:consensus_quantized_1}), the number of time steps satisfying the property that  $\mathcal{G(V, E}(n))$ is strongly connected and that are required to converge to  {\emph{local} $c$-consensus} is at most
\begin{equation}
    2\frac{ L^2}{\min{(1-\mathcal{D(G)}\beta,\beta)}}\log(\min(\tilde Q^2,L^4/c^2)\cdot \max_j I^2(j))+1.
\end{equation}
Now, using (\ref{eq:connected graph1}), 
 $\mathbb{E}[K_0]$ to reach {\emph{local} $c$-consensus} is 
\begin{equation}\label{eq:consensus_quantized_linkFailue_1}
    \mathbb{E}[K_0]\leq \frac{1}{(1-|\mathcal{E}|\epsilon)} \bigg( 2\frac{L^2}{\min{(1-\mathcal{D(G)}\beta,\beta)}}\log(\min(\tilde Q^2,L^4/c^2)\cdot \max_j I^2(j))+1\bigg).
\end{equation}
Now, similar to (\ref{eq:consensus_quantized_2}),  $\mathbb{E}[K_d]$ to detect consensus is 
\begin{equation} \label{eq:consensus_quantized_linkFailue_detection_1}
   \mathbb{E}[ K_{d}] \leq \frac{1}{(1-|\mathcal{E}|\epsilon)}\bigg( h^{\mathcal{G}}\Bigg(\frac{-\log(d^{\mathcal{G}})}{\log\big(1-\eta(W^{h^{\mathcal{G}}})\big)}+1\Bigg)+L+1\bigg).
\end{equation}
{To obtain} the high probability bound on the estimation error of vector $I$, let us introduce a sequence of Bernoulli i.i.d random variables $\{Z_{n}\}_{n=1}^\infty$ with probability of success  $\mathbb{P}(Z_n=1) =1-|\mathcal{E}|\epsilon$. Then, with probability one, we have 
\begin{equation}\label{eq:newRV_K}
K_0\leq \min \left \{n\geq 1: \sum_{k=1}^{n}Z_k>2\frac{ L^2}{\min{(1-\mathcal{D(G)}\beta,\beta)}}\log(\min(\tilde Q^2,L^4/c^2)\cdot \max_j I^2(j))+1 \right \}. 
\end{equation}
Let $\delta=1/(1-|\mathcal{E}|\epsilon)$ and
\[N_0=\frac{1}{(1-|\mathcal{E}|\epsilon)}\bigg(2\frac{ L^2}{\min{(1-\mathcal{D(G)}\beta,\beta)}}\log(\min(\tilde Q^2,L^4/c^2)\cdot \max_j I^2(j))+1\bigg).\]
Using Hoeffding's inequality \cite{hoeffding1994probability}, we have
\begin{align}\label{eq:highprobabilitybounerror}
&\mathbb P\bigg(K_0\geq N_0(1+\delta)\bigg)\nonumber\\
&\stackrel{(a)}{\leq}  
    \mathbb P\bigg(\sum_{n=1}^{N_0(1+\delta)}Z_n\leq 2\frac{ L^2}{\min{(1-\mathcal{D(G)}\beta,\beta)}}\log(\min(\tilde Q^2,L^4/c^2)\cdot \max_j I^2(j))+1  \bigg)\nonumber\\
    &= \mathbb P\bigg(\sum_{n=1}^{N_0(1+\delta)}Z_n-N_0(1+\delta)(1-|\mathcal{E}|\epsilon)\leq 2\frac{ L^2}{\min{(1-\mathcal{D(G)}\beta,\beta)}}\log(\min(\tilde Q^2,L^4/c^2)\cdot \max_j I^2(j))+1 -N_0(1+\delta)(1-|\mathcal{E}|\epsilon)\bigg)\nonumber\\
    &\stackrel{(b)}{=} \mathbb P\bigg(\sum_{n=1}^{N_0(1+\delta)}Z_n - N_0 (1+\delta)(1-|\mathcal{E}|\epsilon)\leq -N_0 \delta(1-|\mathcal{E}|\epsilon)\bigg) \nonumber\\
    &\leq \exp{(-2 \delta^2(1-|\mathcal{E}|\epsilon)^2 N_0(1+\delta)/(1+\delta)^2)} \nonumber\\
    &=\exp{(-2 \delta^2(1-|\mathcal{E}|\epsilon)^2 N_0/(1+\delta))} \nonumber\\
    &\stackrel{}{=} \exp{(-2 (1-|\mathcal{E}|\epsilon) N_0/(2-|\mathcal{E}|\epsilon))},
\end{align}
where $(a)$ follows from \eqref{eq:newRV_K}, and $(b)$ follows from the definition of $N_0$.

Similarly, we can show that for
$\delta=1/(1-|\mathcal{E}|\epsilon)$ and
\[N^\prime_{0}=\frac{1}{(1-|\mathcal{E}|\epsilon)} \bigg(h^{\mathcal{G}}\bigg(\frac{-\log(d^{\mathcal{G}})}{\log\big(1-\eta(W^{h^{\mathcal{G}}})\big)}+1\bigg)+L+1\bigg),\]
we have
\begin{equation}\label{eq:timetodetectconsensus}
   \mathbb P(K_d>N_{0}^\prime (1+\delta))\leq \exp{(-2(1-|\mathcal{E}|\epsilon)N_0^\prime/(2-|\mathcal{E}|\epsilon))}.
\end{equation}
Thus, similar to (\ref{eq:estimationErrorCCT}), using (\ref{eq:highprobabilitybounerror}) and (\ref{eq:timetodetectconsensus}), we have that with probability one, the error in the estimation of $\floor{\hat{I}^\epsilon_{\ell}(i)}$ at the end of initialization phase is 
\begin{align}
    |\floor{\hat{I}^\epsilon_{\ell}(i)}-I(i)|&\leq \frac{L}{\tilde{Q}}(K_0+K_d),\nonumber.\\
\end{align}
This implies that using \eqref{eq:highprobabilitybounerror} and  \eqref{eq:timetodetectconsensus} , we have
\begin{align}\label{eq:estimationErrorCCT_error}
      |\floor{\hat{I}^\epsilon_{\ell}(i)}-I(i)|&\leq \frac{L}{\tilde{Q}}(K_0+K_d),\nonumber\\
      &\leq \frac{L(1+\delta)}{\tilde Q(1-|\mathcal{E}|\epsilon)}\bigg(
      2\frac{ L^2}{\min{(1-\mathcal{D(G)}\beta,\beta)}}\log(\min(\tilde Q^2,L^4/c^2)\cdot \max_j I^2(j))+1\nonumber\\
      &\qquad + h^{\mathcal{G}}\Bigg(\frac{-\log(d^{\mathcal{G}})}{\log\big(1-\eta(W^{h^{\mathcal{G}}})\big)}+1\Bigg)+L+1\bigg),\nonumber\\
      &=g(\Tilde{Q},c,\min{(1-\mathcal{D(G)}\beta,\beta)})(2-|\mathcal{E}|\epsilon)/(1-|\mathcal{E}|\epsilon)^2,\nonumber\\
      &=h(\Tilde{Q},c,\min{(1-\mathcal{D(G)}\beta,\beta)},\epsilon)
\end{align}
with probability at least 
\[1-\exp{\Bigg(-\frac{2}{(2-|\mathcal{E}|\epsilon)}\bigg(
      2\frac{ L^2}{\min{(1-\mathcal{D(G)}\beta,\beta)}}\log(\min(\tilde Q^2,L^4/c^2)\cdot \max_j I^2(j))+1 
      + h^{\mathcal{G}}\Bigg(\frac{-\log(d^{\mathcal{G}})}{\log\big(1-\eta(W^{h^{\mathcal{G}}})\big)}+1\Bigg)+L+1\bigg)
    \Bigg)},\]
\begin{equation}
\begin{split}
    &=1-\exp{(-2 \tilde{Q} g(\Tilde{Q},c,\min{(1-\mathcal{D(G)}\beta,\beta)})/ L (2-|\mathcal{E}|\epsilon))}, \\
    &= 1- \exp(-2 q(\Tilde{Q},c,\min{(1-\mathcal{D(G)}\beta,\beta),\epsilon} ),
\end{split}
\end{equation} 
as $K_0$ and $K_d$ are independent. 

Now, for part $(i)$, using the lower bound from (\ref{eq:estimationErrorCCT_error}) in (\ref{probBnj}), we have ${\mathbb{P}}_{i}^{{\mathcal{C}}}(B_{n,j})$ is at most
\begin{align}
    &(1- \exp(-2 q(\Tilde{Q},c,\min{(1-\mathcal{D(G)}\beta,\beta),\epsilon} ))\cdot c^{I(i)/(I(i)+h(\Tilde{Q},c,\min{(1-\mathcal{D(G)}\beta,\beta)},\epsilon))}\cdot \mathbb{P}^{\mathcal{C}}_{j}(\hat{H}=h_j \mbox{ at sample } n)\nonumber\\
    &+\exp(-2 q(\Tilde{Q},c,\min{(1-\mathcal{D(G)}\beta,\beta),\epsilon} )\cdot \mathbb{P}^{\mathcal{C}}_{j}(\hat{H}=h_j \mbox{ at sample } n) 
\end{align}
The result in part $(i)$ follows similar to (\ref{eq:CCTPRobError2}). 

Consider next parts $(ii)$, $(iii)$ and $(iv)$. For the consensus phase, the {expected time $ \mathbb{E}[K_0+K_d]$ required is upper bounded by the right hand sides of (\ref{eq:consensus_quantized_linkFailue_1}) and (\ref{eq:consensus_quantized_linkFailue_detection_1}).} For the test phase, using 
$h(\Tilde{Q},c,\min{(1-\mathcal{D(G)}\beta,\beta)},\epsilon)<I(i)$, for all $r\geq 1$, (\ref{SS2}) {becomes}
\begin{align}
\mathbb{E}_i \left [\max_{1\leq \ell\leq L}T_{i,\ell}^{r}\bigg| \{\floor{\hat{I}^\epsilon_{\ell}}\}_{\ell\in [L]} \right ] 
&= \Bigg((1+o(1))\frac{\abs{\log(c)}}{\min_{\ell\in [L]}\floor{\hat{I}^{\epsilon}_{\ell}(i)}}\Bigg)^{r},\nonumber\\
&\stackrel{(a)}{\leq} \Bigg((1+o(1))\frac{\abs{\log(c)}}{{I}(i)-h(\Tilde{Q},c,\min{(1-\mathcal{D(G)}\beta,\beta)},\epsilon)}\Bigg)^{r},
\end{align}
with probability $(1-\exp{(-2q(\Tilde{Q},c,\min{(1-\mathcal{D(G)}\beta,\beta)},\epsilon))})$, where $(a)$ follows from (\ref{eq:estimationErrorCCT_error}). For the stopping phase, 
since $C>\log_2 (L+2)+\log_2 M$, the local decisions and $d_{\ell}^n$ can be communicated at each time instance. Hence, the time $N^{s}$ of the decision phase is finite from Theorem \ref{CCT} and the fact that the probability the graph is strongly connected at each time instance is at least $(1-|\mathcal{E}|\epsilon)>0$. Similar to (\ref{eq:MomentFinal}), the result follows by combining the time for all the three phases of CCT. 
\end{proof}
\section{Proof of Miscellaneous Results}\label{app:C}
In this section, we present results used to bound time $\tau({N}_{i,\ell})$ in Theorem \ref{DCT} and \ref{CCT} (see (\ref{eq:MomentFinite})). Let $X_{1}\ldots X_{n}$ be i.i.d.\ random variables and let the time
\begin{equation}
    T = \sup\Big\{n\ : \sum_{k=1}^{n} X_{k} > 0\Big\}.
\end{equation}
This is the last $n$ at which
\begin{equation}
S_{n} >0, 
\label{eq:2_1}
\end{equation} 
where $S_{n} = \sum_{k=1}^{n} X_{k}$, $n\ge 1$, {and $S_0=0$}.
\begin{lemma}\label{lemma:finiteStoppingTime}
For all $r\geq 1$, if $\mathbb{E}\big[\abs{X_{1}}^{r+1}\big] < \infty$ and {$\mathbb{E}\big[X_{1}\big] \le -\mu_{0}<0$}, then 

{\begin{align}
  \mathbb{E}[T^r] &\leq r \Bigg(\frac{2}{
  \mu_{0}
  }\Bigg)^{r}\mathbb{E}[(S^{*})^r]\nonumber\\
     &\qquad+ \sum_{k=1}^{\infty} r k^{r-1}\mathbb{P}\big(S_{k}+k{\mu_{0}}/2>0\big),
\end{align}}
where {$S^{*}=\max_{j\ge0}S_{j}$}.
\end{lemma}
\begin{proof}
We have 
\begin{align}
&  \mathbb{E}[T^r]\nonumber\\
&{\le \sum_{k=1}^{\infty}rk^{r-1}\mathbb{P}( T\geq k)}\nonumber\\
&=\sum_{k=1}^{\infty}rk^{r-1}\mathbb{P}(\max_{j\geq k}S_{j}>0)\nonumber\\
&=\sum_{k=1}^{\infty}rk^{r-1}\mathbb{P} \bigg(\max_{j\geq k}\big(S_{j}-S_{k}\big)+S_{k}>0\bigg)\nonumber\\
&=\sum_{k=1}^{\infty}rk^{r-1}\mathbb{P}\bigg(S^{*}+S_{k}>0\bigg)
\end{align}
{where $S^{*}$ is an independent copy of $\max_{j\ge 0} S_j$, therefore  we loosely use the same symbol.}

{Now, elaborating as done in~\cite[Theor.~D]{kiefer1963asymptotically}:
\begin{align} \label{eq:myeq1}
&\sum_{k=1}^{\infty}rk^{r-1}\mathbb{P}\bigg(S^{*}+S_{k}>0\bigg)\nonumber\\
&=\int_{0}^{\infty}\sum_{k=1}^{\floor{2\xi/\mu_{0}}}rk^{r-1}\mathbb{P}\big(S_{k}>-\xi\big) \, d\mathbb{P}\big(S^{*}\leq \xi\big)\nonumber\\
&+\int_{0}^{\infty}\sum_{k=\floor{2\xi/\mu_{0}}+1}^{\infty}rk^{r-1}\mathbb{P}\big(S_{k}+\mu_{0}k/2>\mu_{0}k/2-\xi\big) \, d\mathbb{P}\big(S^{*}\leq \xi\big).
\end{align}
The first integral at the right-hand side of~(\ref{eq:myeq1}) can be bounded as follows:
\begin{align} \label{eq:myintegral1}
&\int_{0}^{\infty}\sum_{k=1}^{\floor{2\xi/\mu_{0}}}rk^{r-1}\mathbb{P}\big(S_{k}>-\xi\big) \, d\mathbb{P}\big(S^{*}\leq \xi\big) \nonumber \\
&\le \int_{0}^{\infty}\sum_{k=1}^{\floor{2\xi/\mu_{0}}}rk^{r-1}\, d\mathbb{P}\big(S^{*}\leq \xi\big) \nonumber \\
& \le \int_{0}^{\infty} r \left (2\xi/\mu_{0} \right )^{r} \, d\mathbb{P}\big(S^{*}\leq \xi\big) \nonumber \\
& = r \left ( \frac{2}{\mu_0}\right )^r \mathbb{E}\big[(S^{*})^r\big].
\end{align}
To bound the second integral at the right-hand side of~(\ref{eq:myeq1}), we have 
\begin{align} \label{eq:myintegral2}
&\int_{0}^{\infty}\sum_{k=\floor{2\xi/\mu_{0}}+1}^{\infty}rk^{r-1}\mathbb{P}\big(S_{k}+\mu_{0}k/2>\mu_{0}k/2-\xi\big) \, d\mathbb{P}\big(S^{*}\leq \xi\big) \nonumber \\
& \le \int_{0}^{\infty}\sum_{k=\floor{2\xi/\mu_{0}}+1}^{\infty}rk^{r-1}\mathbb{P}\big(S_{k}+\mu_{0}k/2>0\big) \, d\mathbb{P}\big(S^{*}\leq \xi\big) \nonumber \\
&\le \int_{0}^{\infty}\sum_{k=1}^{\infty}rk^{r-1}\mathbb{P}\big(S_{k}+\mu_{0}k/2>0\big) \, d\mathbb{P}\big(S^{*}\leq \xi\big) \nonumber \\
&\le \sum_{k=1}^{\infty}rk^{r-1}\mathbb{P}\big(S_{k}+\mu_{0}k/2>0\big),
\end{align}
where the first inequality follows from the fact that integration variable verifies $\xi \le {\mu_0 k}/2$. 

The claim of the Lemma now follows by~(\ref{eq:myintegral1}) and~(\ref{eq:myintegral2}).}
\end{proof}

\begin{corollary}\label{corr:1}
{ Let $X_1,\dots,X_n$ be i.i.d. random variables such that $\mathbb{E}[\abs{X_1}^{r+1}]<\infty$ and $\mathbb{E}[X_1]\ge \mu_0>0$. Also, let
$S_0=0$, $S_n=\sum_{k=1}^n X_k$, $n\ge1$, and}
\begin{equation}
    T = \sup\Big\{n\ : \sum_{k=1}^{n} X_{k} < 0\Big\}.
\end{equation}
{Then for all $r\geq 1$, we have
\begin{align}\label{eq:timeBound}
  \mathbb{E}[T^{r}]&\leq r \Big(\frac{2}{\mu_{0}}\Big)^{r}\mathbb{E}\bigg[\Big(-\min_{j\ge0}S_{j}\Big)^{r}\bigg]\nonumber\\
     &\qquad+ \sum_{k=1}^{\infty}r k^{r-1}\mathbb{P}\big(S_{k}-k{\mu_{0}}/2<0\big).
\end{align}}
\end{corollary}
\begin{proof}
The proof follows from replacing $X_{k}$ by $-X_{k}$ in Lemma \ref{lemma:finiteStoppingTime}.
\end{proof}

\begin{lemma}\label{lemma:BoundingTime}
Let $X_{1},\ldots,X_{n}$ be a sequence of independent and  identically distributed random variables with  {zero mean} and finite {$(r+1)^{th}$ absolute moment, $\mathbb{E}[\abs{X}^{r+1}] < \infty$, for all $r\geq 1$.} We have
\begin{equation}\label{eq:Lemm6}
    \sum_{n=1}^{\infty}n^{r-1}\mathbb{P}\Big(\Big|\sum_{k=1}^{n}X_{k}\Big |>n\Big)<\infty
\end{equation}
\end{lemma}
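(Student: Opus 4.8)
The plan is to treat this as a complete-convergence statement in the spirit of the Hsu--Robbins--Baum--Katz theory, using truncation to separate the contribution of a single large summand from that of the bulk. Since $r\ge 2$ we have $\mathbb{E}|X|<\infty$, so I would first center the variables and note that it suffices to treat the sub-critical regime $|\mathbb{E}[X]|<1$ (automatic in our application, where the lemma is invoked on the zero-mean increments obtained after subtracting the drift); otherwise $S_n/n$ converges to a limit of modulus at least one and the event $\{|S_n|>n\}$ cannot carry summable weight. Writing $S_n=\sum_{k=1}^n X_k$, the goal is to bound $\mathbb{P}(|S_n|>n)$ by a quantity whose product with $n^{r-1}$ is summable.

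The first step is the truncation. Fix $n$ and set $X_k^{(n)}=X_k\mathbbm{1}\{|X_k|\le n\}$. On the event that every $|X_k|\le n$ we have $S_n=\sum_k X_k^{(n)}$, so a union bound gives
\[
\mathbb{P}(|S_n|>n)\le n\,\mathbb{P}(|X|>n)+\mathbb{P}\Big(\big|\textstyle\sum_{k=1}^n X_k^{(n)}\big|>n\Big).
\]
Summing the first term against $n^{r-1}$ produces $\sum_n n^{r}\,\mathbb{P}(|X|>n)$, which is comparable to the moment integral $\int_0^\infty t^{r}\,\mathbb{P}(|X|>t)\,dt$ and is therefore controlled by the moment hypothesis on $X$. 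I expect this single-large-jump term to be the main obstacle: it is exactly where finiteness of the high moment is consumed, and getting the exponent bookkeeping right — so that the extra factor of $n$ coming from the union over the $n$ summands is absorbed — is the delicate point of the whole argument.

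For the remaining bulk term I would recenter $\sum_k X_k^{(n)}$ by its mean $n\,\mathbb{E}[X^{(n)}]$; since $\mathbb{E}[X^{(n)}]\to\mathbb{E}[X]$ and the truncated tail $\mathbb{E}\big[|X|\mathbbm{1}\{|X|>n\}\big]$ is itself controlled by the moment bound, the recentred threshold stays of order $n$ under sub-criticality. One then applies Markov's inequality at a sufficiently high even order $2m$ together with Rosenthal's inequality for the bounded, centred summands. The variance part of the Rosenthal bound is rendered $O(n^{-m})$, hence summable against $n^{r-1}$ once $m$ exceeds $r$, while its jump part reduces once more to the single-large-jump sum of the previous paragraph. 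Combining the three estimates yields the claim; the only genuinely non-routine points are the choice of $m$ in the Rosenthal step and the convergence of the large-jump sum, which is where the moment hypothesis does all the work.
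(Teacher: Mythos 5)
Your overall strategy --- truncate at level $n$, peel off the single-large-summand event by a union bound, and control the recentred bulk by a high-order Markov/Rosenthal estimate --- is the standard Baum--Katz route, and structurally it mirrors the paper's own decomposition into the events $A^{(1)}_n,A^{(2)}_n,A^{(3)}_n$ (your large-jump term is their $A^{(1)}_n$, your bulk term their $A^{(3)}_n$). The bulk estimate is fine. The genuine gap sits exactly where you yourself flag ``the main obstacle'' and then wave it through: after the union bound, the large-jump contribution is
\[
\sum_{n\ge 1} n^{r-1}\cdot n\,\mathbb{P}(|X|>n)\;=\;\sum_{n\ge 1} n^{r}\,\mathbb{P}(|X|>n)\;\asymp\;\int_0^\infty t^{r}\,\mathbb{P}(|X|>t)\,dt\;=\;\tfrac{1}{r+1}\,\mathbb{E}\,|X|^{r+1},
\]
and this is \emph{not} controlled by the hypothesis $\mathbb{E}|X|^{r}<\infty$: the extra factor of $n$ coming from the union over the $n$ summands costs one full additional moment, and nothing elsewhere in your argument absorbs it.

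Moreover this is not a repairable bookkeeping issue, because the statement itself requires the stronger hypothesis. By the necessity half of the Baum--Katz theorem (symmetrize and apply L\'evy's maximal inequality), $\sum_n n^{r-1}\mathbb{P}(|S_n|>n)<\infty$ for centered $X$ forces $\sum_n n^{r}\mathbb{P}(|X|>cn)<\infty$, i.e.\ $\mathbb{E}|X|^{r+1}<\infty$. Concretely, a symmetric $X$ with $\mathbb{P}(|X|>t)=t^{-(r+1/2)}$ for $t\ge 1$ has finite $r$-th moment, yet $\mathbb{P}(|S_n|>n)\gtrsim n\,\mathbb{P}(|X|>2n)\asymp n^{-(r-1/2)}$, so $\sum_n n^{r-1}\mathbb{P}(|S_n|>n)\gtrsim\sum_n n^{-1/2}=\infty$. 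The paper's own proof stumbles at the very same spot: in bounding $\sum_n n^{r-1}\mathbb{P}(A^{(1)}_n)$ it silently replaces $\mathbb{P}(\exists\, k\le n:\ |X_k|>2^{i-2})$ by $\mathbb{P}(|X_1|\ge 2^{i})$, dropping the union-bound factor $n$ that your (correct) bound retains and then cannot pay for. Both arguments would close under the strengthened hypothesis $\mathbb{E}|X|^{r+1}<\infty$. A smaller point: your reduction to $|\mathbb{E}X|<1$ is not actually ``without loss of generality'' for the lemma as stated (for $|\mathbb{E}X|\ge 1$ the series genuinely diverges), though it is harmless in the intended application to drift-corrected increments.
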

\begin{proof} 

The proof technique is borrowed from~\cite{erdos1949theorem}.
Event $A=\{\abs{\sum_{k=1}^{n}X_{k}}>n\}$ is written as a subset of the union of three events i.e. $A \subset A^{(1)}_{n}\cup A^{(2)}_{n}\cup A^{(3)}_{n}$. We bound the probability of these three events, and show that for all $i\in[3]$, we have
\begin{equation}\label{eq:Summary}
    \sum_{n=1}^{\infty}n^{r-1}\mathbb{P}( A^{(i)}_{n}) < \infty.
\end{equation}
Thus, (\ref{eq:Lemm6}) follows from (\ref{eq:Summary}).

Let $2^{i}\leq n < 2^{i+1}$, {with $i\ge 0$}. The events $A^{(1)}_{n}$, $A^{(2)}_{n}$ and $A^{(3)}_{n}$ are defined as follows:
\begin{align}
&A^{(1)}_{n}=\{\mbox{There exists }  k\leq n \mbox{ such that } \abs{X_{k}}>2^{i-2} \},\nonumber\\
&A^{(2)}_{n}=\{\mbox{There exists at least two integers }  k_{1},k_{2}\leq n \mbox{ such that } \nonumber\\
&\qquad\qquad\abs{X_{k_{1}}}>n^{4/5} \mbox{and } \abs{X_{k_{2}}}>n^{4/5}  \},\nonumber\\
&A^{(3)}_{n}=\Big \{\Big |\sum_{k\in N^{\prime} }X_{k}\Big|>2^{i-2} \Big \},\nonumber
\end{align}
where {$N^{\prime}=[n]\backslash \{k\leq n : \abs{X_k}>n^{4/5}\}$.} If the event $A^{(1)}_{n}\cup A^{(2)}_{n}\cup A^{(3)}_{n}$ does not occur, then we have
\[\bigg | \sum_{k=1}^n X_{k} \bigg | \le 2^{i-2}+2^{i-2}<n.\]
Hence, $A \subset A^{(1)}_{n}\cup A^{(2)}_{n}\cup A^{(3)}_{n}$, and 
$\mathbb{P}(A)\leq \mathbb{P}(A^{(1)}_{n}) +\mathbb{P}(A^{(2)}_{n})+\mathbb{P}(A^{(3)}_{n})$.
Therefore,
\begin{align}\label{eq:finalEq}
    \sum_{n=1}^{\infty}n^{r-1}\mathbb{P}(A)&\leq \sum_{n=1}^{\infty}n^{r-1}\mathbb{P}(A^{(1)}_{n}) +\sum_{n=1}^{\infty}n^{r-1}\mathbb{P}(A^{(2)}_{n})\nonumber\\
    &\qquad\qquad+\sum_{n=1}^{\infty}n^{r-1}\mathbb{P}(A^{(3)}_{n}).
\end{align}
Now, we bound the probability of all three events at the right-hand side of the above equation.

Let {$a_{i}=\mathbb{P}(|X_{k}| > 2^{i})$}. We have
\begin{align}\label{eq:Event1a}
\sum_{i=0}^{\infty}2^{i(r+1)-1}a_{i}
&\stackrel{(a)}{\leq} \sum_{i=0}^{\infty}2^{i(r+1)}(a_{i}-a_{i+1})\nonumber\\
&\stackrel{(b)}{\leq} {\mathbb{E}[\abs{X_{k}}^{r+1}]}\stackrel{(c)}{<}\infty,
\end{align} 
where {$(a)$~follows from $\frac 1 2 \sum_{i=0}^\infty 2^{i(r+1)}a_i \ge \frac {1}{2^{r+1}} \sum_{i=1}^\infty 2^{i(r+1)}a_{i}=\sum_{i=0}^\infty 2^{i(r+1)}a_{i+1}$ $\Leftrightarrow$
$\sum_{i=0}^\infty 2^{i(r+1)}a_i- \frac 1 2 \sum_{i=0}^\infty 2^{i(r+1)}a_i \ge \sum_{i=0}^\infty 2^{i(r+1)}a_{i+1}$,}
$(b)$~follows from the definitions of {$a_i$ and $\mathbb{E}[\abs{X_{k}}^{r+1}]$, after exploiting $\int_{y_1}^{y_{2}}ydy \geq \int_{y_1}^{y_{2}}y_1dy$}, and $(c)$~follows 
from the assumption of the lemma. 
Thus, using (\ref{eq:Event1a}), we have 
\begin{equation}
\label{eq:Event1b}
\sum_{i=0}^{\infty}2^{i(r+1)}a_{i} <\infty.
\end{equation}
Now, we bound the probability of the event {at the right-hand side of~(\ref{eq:finalEq}) that involves} $A^{(1)}_{n}$: 
\begin{align}
\label{eq:Event1c}
&{\sum_{n=1}^{\infty}n^{r-1}\mathbb{P}(A_n^{(1)} )} \nonumber \\
&{=\sum_{n=1}^{\infty}n^{r-1}\mathbb{P}(\exists k\leq n: \abs{X_{k}} > {2^{i-2} \textnormal{ where $i$ verifies } 2^i \le n < 2^{i+1}} )} \nonumber \\
&\stackrel{(a)}{\leq} {\sum_{n=1}^{\infty}n^{r}  \mathbb{P}(\abs{X_{k}}> {2^{i-2}}\textnormal{ where $i$ verifies } 2^i \le n < 2^{i+1}})\nonumber\\
&{= \sum_{i=0}^{\infty}\sum_{2^{i}\leq n<2^{i+1}} n^{r} a_{i-2}}\nonumber\\
&\leq \sum_{i=0}^{\infty}\sum_{2^{i}\leq n<2^{i+1}}2^{(i+1)r}a_{i-2}\nonumber\\
&=\sum_{i=0}^{\infty}2^{i(r+1)+r}a_{i-2}\nonumber\\
&<\infty,
\end{align}
where (a) follows from the union bound and the fact that $X_k$ are i.i.d, and
the last inequality follows from (\ref{eq:Event1b}).

Since the $(r+1)^{st}$ absolute moment is finite,   for all $k\in\mathbb{N}$  there exists a finite constant $K>0$ such that
\begin{equation}
\label{eq:LemmaMoment}
\mathbb{P}(\abs{X_{k}} \geq u )\leq K/u^{r+1}.
\end{equation}
Now, we bound the probability of event $A^{(2)}_{n}$
\begin{align}\label{eq:Event1}
\mathbb{P}(A^{(2)}_{n})
&\stackrel{(a)}{\leq} \sum_{1\leq k_{1}<k_{2}\leq n}\mathbb{P}(\abs{X_{k_{1}}}>n^{4/5} 
\mbox{and } \abs{X_{k_{2}}}>n^{4/5})\nonumber\\
&\stackrel{(b)}{\leq} n^{2}\cdot\mathbb{P}(\abs{X_{1}}>n^{4/5}) \, \mathbb{P}( \abs{X_{2}}>n^{4/5})\nonumber\\
&\stackrel{(c)}{\leq} K^{2}\cdot n^{2}\cdot n^{-4(r+1)/5}\cdot n^{-4(r+1)/5}
\end{align}
where $(a)$ follows from the definition of the event and the union bound, $(b)$ follows from the independence of the random variables and 
{a bound on} the number of possible combinations of $k_1$ and $k_2$, and $(c)$ follows from (\ref{eq:LemmaMoment}). Therefore, we have
\begin{align}\label{eq:Event2c}
    \sum_{n=1}^{\infty}n^{r-1}\mathbb{P}( A^{(2)}_{n}) &\stackrel{(a)}{\leq} {\sum_{n=1}^{\infty}K^{2}\cdot n^{r-1} \cdot n^{2}\cdot n^{-8(r+1)/5}}\nonumber\\
    &=\sum_{n=1}^{\infty} K^{2}\cdot n^{-3r/5-3/5}\nonumber \\
    &\stackrel{(b)}{<}\infty,
\end{align}
where $(a)$ follows from (\ref{eq:Event1}), and $(b)$ follows as $r\geq 1$.

Now, we bound the probability of event $A^{(3)}_{n}$. Let 
\begin{equation}\label{eq:event3}
    X_{k}^{+}= \begin{cases}
                         X_{k} \qquad  |X_{k}| \tcdkb{<} n^{4/5}\qquad \\
                         0 \qquad \mbox{otherwise.}
                        \end{cases}
\end{equation}

Now, there exist finite positive constants {$K^{(1)},K^{(2)}$, such that
\begin{align}\label{eq:Event3}
&\mathbb{E}\bigg[\Big|\sum_{k=1}^{n}X_{k}^{+}\Big|^{2r}\bigg]\nonumber\\
&\stackrel{(a)}{\leq}\mathbb{E}\bigg[\sum_{k=1}^{n}|X_{k}^+|^{2r}\bigg]+ \sum_{1 \leq k_1,k_2\leq n}\mathbb{E}\big[|X_{k_{1}}^+|^{2r-1}\big]\mathbb{E}\big[|X_{k_{2}}^+|\big] +\ldots\nonumber\\
&\stackrel{(b)}{\leq}\mathbb{E}\bigg[\sum_{k=1}^{n}n^{4(r-1)/5}|X_{k}^+|^{r+1}\bigg]+ \sum_{1 \leq k_1,k_2\leq n}\mathbb{E}\big[n^{4(r-2)/5}|X_{k_{1}}^+|^{r+1}\big]\mathbb{E}\big[|X_{k_{2}}^+|\big] +\ldots\nonumber\\
&\stackrel{(c)}{\leq} K^{(1)}\cdot n^{4r/5}r^{2r}\bigg(n^{-4/5}+n^{-8/5}+\ldots\bigg)\nonumber\\
&\stackrel{(d)}{\leq} K^{(1)}\cdot n^{4r/5}r^{2r} \frac{n^{-4/5}}{1-n^{-4/5}}\leq K^{(2)} \cdot n^{4(r-1)/5},
\end{align}
}
where $(a)$ follows from the multinomial expansion of {$(\sum_{k=1}^{n}\abs{X_{k}^+})^{2r}$}, and the independence of the random variables, $(b)$ follows from (\ref{eq:event3}), and $(c)$  follows from the following facts:
\begin{itemize}
    \item $(r+1)^{st}$ {absolute moment of $X_{k}^+$ is finite;}
    \item the coefficient of multinomial expansion is of the form $2r!/(k_1!\ldots k_n!)$ such that $k_1+\ldots+k_n=2r$ and can be bounded as ${\cal O}(2r^{2r})$ independent of $n$; 
    \item the largest coefficient of $n$ in the expansion is $n^{4r/5}$ present in the first term in $(b)$
    \item the remaining coefficient of $n$ will form a finite geometric progression, i.e. $n^{-4/5},n^{-8/5},\ldots$,
\end{itemize}
and $(d)$ follows from the fact that sum of the geometric progression $n^{-4/5},n^{-8/5},\ldots$ can be bounded by ${n^{-4/5}}/(1-n^{-4/5})$.

Thus, using (\ref{eq:Event3}), there exists $K^{(3)}>0$ such that
\begin{equation}
\label{eq:Event3a}
\mathbb{P}\bigg (\Big | \sum_{k=1}^{n}X_{k}^+ \Big | >n/8 \bigg)\leq K^{(3)} \cdot n^{4(r-1)/5}/n^{2r},
\end{equation}
and
\begin{align}\label{eq:Event3b}
\mathbb{P}(A^{(3)}_{n})
&=\mathbb{P}\bigg (\Big |\sum_{k=1}^{n}X_{k}^{+} \Big |>2^{i-2} \bigg)\nonumber\\
&\stackrel{(a)}{\le} \mathbb{P}\bigg (\Big |\sum_{k=1}^{n}X_{k}^{+} \Big |>n/8 \bigg)\nonumber\\
&\stackrel{(b)}{\le} K^{(3)} \cdot n^{4(r-1)/5}\cdot n^{-2r},
\end{align}
where~$(a)$ follows by $n/8<2^{i-2}$, 
and~$(b)$ follows from (\ref{eq:Event3a}). Thus, we have
\begin{align}\label{eq:Event3c}
&\sum_{n=1}^{\infty}n^{r-1}\mathbb{P}(A^{(3)}_{n})\nonumber\\
&\stackrel{(a)}{\le} \sum_{n=1}^{\infty} n^{r-1} K^{(3)} \cdot n^{4(r-1)/5}\cdot n^{-2r} 
\nonumber \\
&= \sum_{n=1}^{\infty} K^{(3)} \cdot n^{-r/5}n^{-9/5}
\stackrel{(b)}{<}\infty,
\end{align}
where $(a)$~follows from (\ref{eq:Event3b}), and $(b)$~from the convergence of summation for $r\geq 1$.
Finally, using (\ref{eq:Event1c}), (\ref{eq:Event2c}) and (\ref{eq:Event3c}), (\ref{eq:Lemm6}) follows.

\end{proof}

{Now, we combine the results in Corollary \ref{corr:1} and Lemma \ref{lemma:BoundingTime}. Let $X_1,\dots,X_n$ be i.i.d. random variables with $\mathbb{E}[X_1]=\mu_x> 0$, and $\mathbb{E}[\abs{X_1}^{r+1}]< \infty$ for all $r \ge 1$. Let $S_0=0$, $S_n=\sum_{k=1}^n X_k$ for $n\ge 1$, and $T=\sup \Big \{ n \, : \, S_n < 0\Big \}$. From \eqref{eq:timeBound}:
\begin{align} \label{eq:neq1}
    E[T^r] &\le r \Big(\frac{2}{\mu_{x}}\Big)^{r}\mathbb{E}\bigg[\Big(-\min_{j\ge0}S_{j}\Big)^{r}\bigg] \nonumber \\
    &+ \sum_{k=1}^{\infty}r k^{r-1}\mathbb{P}\big(S_{k}-k{\mu_{x}}/2<0\big).
\end{align}
The first term at the right hand side of~\eqref{eq:neq1} is finite because of the assumptions $\mu_x >0$ and $\mathbb{E}[\abs{X_1}^{r+1}]<\infty$ \cite{{kiefer1963asymptotically}}. }
{The second term can be bounded as follows:
\begin{align}
   &\sum_{k=1}^{\infty}r k^{r-1}\mathbb{P}\big(S_{k}-k{\mu_{x}}/2<0\big) \nonumber \\
   &= \sum_{k=1}^{\infty}r k^{r-1}\mathbb{P}\big(2 k - 2 S_k/\mu_x>k\big)
   \nonumber \\
   &\le\sum_{k=1}^{\infty}r k^{r-1}\mathbb{P}\Big(\Big | 2 k - 2 S_k/\mu_x \Big |>k\Big)
   < \infty,
\end{align}
where the last inequality follows by Lemma \ref{lemma:BoundingTime} applied to the zero-mean i.i.d. variables $\{2 - 2 X_i/ \mu_x\}_{i=1}^\infty$. Thus, we arrive at
\begin{equation}\label{eq:AppendixC3}
    \mathbb{E}[T^r]<\infty.
\end{equation}}

\label{eq:AppendixC2_OLD}

\section*{Acknowledgment}
This work was partially supported by NSF awards number CNS-1446891 and    CCF-1717942.

\ifCLASSOPTIONcaptionsoff
  \newpage
\fi



%

\ifCLASSOPTIONcaptionsoff
  \newpage
\fi

\bibliographystyle{IEEEtran}
\bibliography{IEEEfull,chernoff}

\end{document}